\newtheorem{theorem}{Theorem}[section]
\newtheorem{lemma}[theorem]{Lemma}
\newtheorem{corollary}[theorem]{Corollary}
\newenvironment{numberedtheorem}[1]{%
\begin{theorem}}{\end{theorem}\addtocounter{theorem}{-1}}
\newenvironment{numberedlemma}[1]{%
\begin{lemma}}{\end{lemma}\addtocounter{theorem}{-1}}
\newcommand{\rev}{\textsc{Rev}}
\newcommand{\srev}{\textsc{Srev}}
\newcommand{\E}{\mathbb{E}}
\newcommand{\mec}{\mathcal{M}}
\newcommand{\subadditiverev}{\textsc{SubAdditiveRev}}
\newcommand{\eps}{\epsilon}
\newcommand{\adaptalg}{\mathcal{A}}
\newcommand{\R}{\mathbb{R}}
\newcommand{\Z}{\mathbb{Z}}
\newcommand{\dist}{\mathcal{D}}
\newcommand{\remove}[1]{}
\newcommand{\orderedip}{\textsc{OrderedItemPricing}}
\newcommand{\feasiblesets}{\mathcal{S}}
\newcommand{\indsize}{\scriptsize}
\newcommand{\colind}[2]{\displaystyle\smash{\mathop{#1}^{\raisebox{.5\normalbaselineskip}{\indsize #2}}}}
\newcommand{\rowind}[1]{\mbox{\indsize #1}}
\newcommand{\range}{\mathcal{R}}
\newcommand{\vadd}{v^{\oplus}}
\newcommand{\vaddi}{v^{\oplus}_I}
\newcommand{\poly}{\operatorname{poly}}
\begin{document}

\title{Pricing Ordered Items}
\author{
Shuchi Chawla \\ University of Texas at Austin \\ {\tt shuchi@cs.utexas.edu} \and 
Rojin Rezvan \\ University of Texas at Austin \\ {\tt rojin@cs.utexas.edu} \and
Yifeng Teng \\ Google Research \\ {\tt yifengt@google.com} \and 
Christos Tzamos \\ University of Wisconsin-Madison \\ {\tt tzamos@wisc.edu}
}
\date{}

\maketitle
\thispagestyle{empty}

\begin{abstract}

We study the revenue guarantees and approximability of item pricing. Recent work \cite{chawla2019buy} shows that with $n$ heterogeneous items, item-pricing guarantees an $O(\log n)$ approximation to the optimal revenue achievable by any (buy-many) mechanism, even when buyers have arbitrarily combinatorial valuations. However, finding good item prices is challenging -- it is known \cite{chalermsook2013independent} that even under unit-demand valuations, it is NP-hard to find item prices that approximate the revenue of the optimal item pricing better than $O(\sqrt{n})$.

Our work provides a more fine-grained analysis of the revenue guarantees and computational complexity in terms of the number of item ``categories'' which may be significantly fewer than $n$. We assume the items are partitioned in $k$ categories so that items within a category are totally-ordered and a buyer's value for a bundle depends only on the best item contained from every category. 

We show that item-pricing guarantees an $O(\log k)$ approximation to the optimal (buy-many) revenue and provide a PTAS for computing the optimal item-pricing when $k$ is constant. We also provide a matching lower bound showing that the problem is (strongly) NP-hard even when $k=1$. Our results naturally extend to the case where items are only partially ordered, in which case the revenue guarantees and computational complexity depend on the width of the partial ordering, i.e. the largest set for which no two items are comparable.
\end{abstract}
\newpage

\setcounter{page}{1}


\section{Introduction}






A dominant theme within algorithmic mechanism design is simplicity
versus optimality -- establishing that simple mechanisms can
approximate optimal ones within many settings. The simple mechanism in
most of these results is an {\em item pricing}, where the seller
determines a fixed price for each item and buyers can purchase any set
of items at the sum of the corresponding prices. Item pricings are
also an important class of mechanisms from a practical viewpoint --
most real world mechanisms are indeed item pricing mechanisms. 
However, despite their simplicity and popularity, 
finding good item prices for multi-item settings is a notoriously challenging problem
and it is known to be
inapproximable within a factor better than $\sqrt{n}$ even for unit-demand buyers \cite{chalermsook2013independent}.


In this paper, we focus on structured mechanism design instances and
perform a fine grained analysis of the approximability of item pricing
as well as its approximate optimality. We consider a standard
multi-parameter mechanism design setting where a revenue maximizing
seller offers multiple items for sale to a buyer whose value for the
items is drawn from a known distribution. We define a new
parameterization over value distributions wherein items can be
partitioned into a few categories and items within each category can
be ordered by desirability. We show that the number of categories
governs both the approximability and approximate optimality of item
pricings.

\paragraph{Ordered items and the approximability of item pricing.}
At the heart of our parameterization is the so-called FedEx Problem
that was first studied by Fiat et al~\cite{fiat2016fedex}. In the FedEx Problem, the
items offered by the seller correspond to shipping times for a
package; each buyer has a deadline for shipping their package and
obtains a fixed value if the shipping time meets their deadline. The
FedEx Problem occupies a sweet-spot between single-parameter mechanism
design settings where a buyer's preferences can be fully described
through a scalar value; and multi-parameter settings where different
(sets of) items bring the buyer different values. Accordingly it
exhibits some but not all of the complexity of multi-parameter
settings. Indeed, as we show, in contrast to the general case, the optimal item pricing for Fedex instances can be computed in polynomial time.

The FedEx Problem is a special case of ``totally ordered'' settings
where items can be ranked by quality and {\em every} buyer type weakly
prefers a higher ranked item to a lower ranked one.\footnote{In the
  FedEx setting, for example, every buyer weakly prefers earlier
  shipping times to later ones.} More generally, we
consider settings where items can be partitioned into $k$ categories
such that within each category items are totally ordered by
quality:

\begin{quote}
 Consider, for example, a car dealership that sells $k$
different models of cars. Each model comes in a variety of different
trims -- the most basic trim along with a sequence of upgrades. For
any particular model or category, every buyer has the same ordering of
values over different trims although values differ arbitrarily across
buyers and across categories. One buyer may value the
luxury trim \$5,000 higher than the standard trim and another may
value them the same, but no buyer values the standard trim more than
the luxury trim. 

For another example, consider an internet service
provider such as Comcast, AT\&T or Spectrum that offers multiple
products such as TV, internet, and phone service. Each individual
product has quality or service levels that are ordered. In particular,
every buyer weakly prefers higher internet speeds to lower speeds and
unlimited talk time to limited talk time. However, buyers may assign
different values to different combinations of the three services. 
\end{quote}

We
emphasize that both the totally ordered setting and the $k$-category
setting are multi-parameter settings where the buyer's values are
combinatorial and described as functions over the set of items
allocated to the buyer. Beyond the ordering over items within each
category, we make no assumptions on the buyer's values over sets of
items.

Our main computational finding is that the approximability of item
pricing is governed by the parameter $k$.
For
the totally ordered ($k=1$) and $k$-category settings, we provide a
polynomial time approximation scheme with a running time that depends
exponentially on $k$. For any given $\eps>0$, our algorithm
returns an item pricing
that approximates the revenue of the optimal item pricing within a
factor of $(1+\eps)$ and runs in time $\poly(m, n^{\poly(k/\eps)}, b)$ where $n$ is the number of items, $m$ is the support of the distribution and $b$ is
the bit complexity of buyer's value distribution. Our approximation
scheme is the best possible, as we show that finding the optimal item
pricing is strongly NP-hard even for $k=1$. Our algorithm is
particularly relevant and useful when $k$ is a small constant such as in the
examples described above.

\begin{theorem}\label{thm:intro-ptas}
For any distribution of support size $m$ over $k$-category valuations $v: 2^{[n]} \rightarrow [1,2^b]$, we can compute an $(1+\eps)$-approximate item pricing  in $\text{poly}(m,n^{poly(k,1/\eps)},b)$ time.
\end{theorem}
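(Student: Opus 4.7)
The plan is to design an enumeration-based PTAS: identify a small ``structured'' family of item pricings that provably contains a near-optimal one, enumerate the family, and evaluate the revenue of each candidate. The starting observation is that within each category the items are totally ordered and every buyer weakly prefers higher-quality ones, so without loss of generality the prices are monotone non-decreasing in the quality order within each category (any violating item is never purchased and can be re-priced to match its higher-quality, lower-priced neighbor).

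The key structural step is to show that there is a $(1-\epsilon)$-approximately optimal pricing in which each category uses at most $L = \poly(k, 1/\epsilon)$ distinct prices, and moreover these prices can be taken from a discretized grid (e.g.\ powers of $(1+\epsilon)$) of size $O(b/\epsilon)$. To prove this, I would start from the optimal pricing, round each category's prices to a geometric grid, and then argue via a concentration-and-merging argument that only $\poly(k, 1/\epsilon)$ price scales per category carry non-negligible revenue; items at low-contribution scales can be safely merged with the nearest kept scale. This reduces the search to: for each category, a partition of the $n$ ordered items into at most $L$ contiguous tiers together with a grid price per tier. The number of such partitions is $\binom{n-1}{L-1}^k \le n^{O(kL)} = n^{\poly(k, 1/\epsilon)}$, and the number of price-per-tier assignments is $(b/\epsilon)^{O(kL)}$, which is $\poly(b)$ for constant $k$ and $\epsilon$.

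Given the structural lemma, the algorithm enumerates all such structured pricings and returns the one of maximum revenue. For each candidate, the revenue is evaluated in $\poly(n, m, k)$ time by iterating over the $m$ buyer types and, for each type, choosing the utility-maximizing $k$-tuple (using that within each tier only the top-quality item is ever purchased, so only $L$ candidates per category need to be considered). Multiplied out, the total running time is $\poly(m, n^{\poly(k, 1/\epsilon)}, b)$. The main obstacle is the structural lemma, specifically the claim that $L$ can be taken $\poly(k, 1/\epsilon)$ \emph{independent} of $b$ and $n$: a naive multiplicative rounding inside $[1, 2^b]$ leaves $O(b/\epsilon)$ distinct prices per category, which is too many to enumerate within the promised bound. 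The concentration-and-merging step is delicate because changing prices in a single category can cascade through a buyer's entire $k$-tuple choice, so the charging argument must simultaneously control these cross-category effects; I expect this to be the technical heart of the proof.
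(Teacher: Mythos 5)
Your proposal hinges on a structural lemma that you yourself flag as the ``technical heart,'' and that lemma is in fact false: it is \emph{not} true that a near-optimal pricing uses only $L = \poly(k,1/\eps)$ distinct price scales per category, independent of $b$. Consider the totally-ordered case $k=1$ with $n$ items, buyer types $v^{(j)}$ for $j=1,\dots,n$ given by $v^{(j)}_i = 2^{j}\cdot\mathbf{1}[i\ge j]$, with $\Pr[v^{(j)}]\propto 2^{-j}$. The monotone pricing $p_i=2^i$ earns revenue proportional to $n$: each type $j$ buys item $j$ and pays $2^j$, contributing $\Theta(1)$ per type. If instead only $L$ distinct prices $2^{j_1}<\dots<2^{j_L}$ are used (with prices constant on the in-between ranges), type $j$ with $j_\ell\le j<j_{\ell+1}$ pays only $2^{j_\ell}$, and the total revenue telescopes to $O(L)$. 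Hence any constant-factor approximation forces $L=\Omega(n)=\Omega(b)$ distinct price scales. Concentration-and-merging cannot save you because the revenue can be spread perfectly evenly over $\Theta(b)$ geometric scales, so discarding all but $\poly(k,1/\eps)$ of them loses almost everything. With $L=\Theta(b)$ your enumeration of tier partitions is $\binom{n-1}{L-1}^k = n^{\Theta(kb)}$, which is not $\poly(m,n^{\poly(k,1/\eps)},b)$.

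The paper's route around this obstacle is qualitatively different and is the missing ingredient. It does not bound the number of distinct price scales; it accepts up to $\poly(k,1/\eps,\log n,\log\range)$ of them, but groups them into \emph{intervals} such that (i) prices within an interval differ by at most a factor $(1+\eps^2)^\delta$ with $\delta = \poly(k,1/\eps)$, and (ii) prices across intervals differ by at least $(1+\eps^2)^\gamma \gg 1$. Condition (i) means each interval has only $\poly(k,1/\eps)$ distinct price levels, so the per-interval placement/pricing can be enumerated in $n^{\poly(k,1/\eps)}$ time. Condition (ii), combined with replacing the true valuation by an ``additive-over-intervals'' surrogate $\vadd$, makes the revenue contribution of different intervals essentially decouple, so intervals can be chained by a \emph{dynamic program} whose state carries only the boundary set and the last price level, rather than enumerating the exponential product over all intervals. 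The number of intervals may be $\poly(\log\range)=\poly(b)$, but the DP pays only additively in that quantity. This interval-plus-DP decomposition is exactly what your enumeration-only scheme lacks, and without it the running time cannot be brought down to the claimed bound.
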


\paragraph{The approximate optimality of item pricing.} As aforementioned,
a central problem in multi-item mechanism design is approximating the revenue
of the optimal mechanism in multi-parameter settings by simple mechanisms like item pricing.
In fact, in the kinds of settings we
study in this paper (with no assumptions on the value distributions),
it is known that no simple mechanisms can provide any finite approximation
to the optimal revenue in the worst case. This general case
impossibility of simple-versus-optimal results has led to two
complementary lines of work in recent years. 

The first looks at structured settings for which approximately-optimal
mechanisms can be characterized. The FedEx
problem~\cite{fiat2016fedex, saxena2018menu} and its extensions to
so-called ``interdimensional'' settings \cite{devanur2017optimal, devanur2020optimal}
belong to this line of work; in these settings, the optimal mechanism
can have an exponential or even unbounded description complexity but
under appropriate assumptions, mechanisms with polynomial menu size
provide an approximation. Another series of
works~\cite{babaioff2014simple, chawla2007algorithmic,
  rubinstein2018simple} bounds the revenue gap between item pricings
and optimal mechanisms assuming that the buyer' values are subadditive
and independent across different items.

The second line of work places an extra incentive constraint on the
revenue maximization problem. Instead of viewing a mechanism
as a one time interaction between the seller and a buyer, it is assumed that
the buyer can visit the mechanism multiple times purchasing different bundles
of items. In this ``buy-many" setting, complicated mechanisms that
extracted arbitrarily higher revenue than simpler ones are no longer incentive compatible
as the buyer can buy multiple cheaper options instead of a single expensive one.
In fact, recent work \cite{chawla2019buy} shows that item-pricing achieves a $\Theta(\log n)$ approximation
to the optimal buy-many mechanism and this is tight in a strong sense
as no simple mechanism, i.e. one with polynomial description complexity, 
can approximate the optimal revenue better than a logarithmic factor.

Our work unifies the two approaches and considers the revenue approximation of item pricing in more structured buy-many settings.
Our first finding is that item pricing is the optimal buy-many mechanism in the FedEx setting.
More generally, we find that the revenue guarantees of item pricing are again governed by the parameter $k$ of our parameterization.
In the totally ordered setting where $k=1$, we show that item pricing is no longer optimal but achieves a constant factor approximation to the
optimal buy-many revenue. For $k$ categories, we show that the approximation
is $\Theta(\log k)$. This gives a smooth degradation of the revenue guarantee as the instances become less and less structured.

\begin{theorem}\label{thm:intro-buymany}
For any distribution over $k$-category valuation functions, the optimal item pricing guarantees a $1/\Theta(\log k)$ fraction of the revenue achievable by the optimal buy-many mechanism.
\end{theorem}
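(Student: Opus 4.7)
The theorem has two matching halves. The upper bound asserts that for every distribution over $k$-category valuations, some item pricing extracts $\Omega(1/\log k)$ of the optimal buy-many revenue; the lower bound exhibits instances where every item pricing is off by a factor of $\Omega(\log k)$. The plan is to prove the upper bound by adapting the $O(\log n)$-approximation argument of \cite{chawla2019buy} to exploit the category structure, and to prove the lower bound by reinterpreting their matching $\Omega(\log n)$ hard instance as a $k$-category instance with $k$ effective ``dimensions''.

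\textbf{Upper bound.} Fix an optimal buy-many mechanism $\optmec$ with revenue $R$. Because items within each category are totally ordered and the buyer only values the best item held from each category, I may assume without loss of generality that $\optmec$ allocates at most one item from each category: replacing any lower-quality extra item by the highest-quality item already in the bundle (or removing it) preserves the buyer's value and cost, so incentive compatibility and revenue are unaffected. Thus every type $t$ receives an effective bundle $S(t) = (s_1(t), \ldots, s_k(t))$ of size at most $k$ for some price $p(t)$. Now I adapt the bucketing argument of \cite{chawla2019buy}: define a per-category charge $q_j(t)$ for each type $t$ and category $j$ whose sum is $\Theta(p(t))$, then partition categories by the geometric scale of their charges into $O(\log k)$ buckets. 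For each bucket, I construct an item pricing that charges each item in a targeted category a price at the corresponding scale; buy-many incentive compatibility for $\optmec$ then implies no type can evade this pricing while still obtaining its category-$j$ allocation more cheaply, so the bucket's revenue is recovered up to a constant factor. Picking the best bucket loses $O(\log k)$, yielding the claimed approximation. Crucially, the sum over scales is over categories, not items, which is what replaces $\log n$ by $\log k$.

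\textbf{Lower bound.} For the matching lower bound I plan to use the $\Omega(\log n)$ hard instance for buy-many-versus-item-pricing from \cite{chawla2019buy}. That construction uses $n$ heterogeneous items with no ordering among them. Interpreting each of its items as a singleton category immediately gives a $k$-category instance with $k = n$ for which the ratio between optimal buy-many revenue and optimal item-pricing revenue is $\Omega(\log n) = \Omega(\log k)$. If I want to separate $k$ from $n$, I can pad each category with additional dominated items (ones every buyer strictly prefers less than the original); the original hard buyers ignore the padding, so the $\Omega(\log k)$ lower bound persists while $n$ can be made arbitrarily larger than $k$.

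\textbf{Main obstacle.} The upper bound is the delicate direction. Although we morally reduce to $k$ ``effective items'', the underlying mechanism $\optmec$ still has a menu defined over all $n$ items, and different types within the same category may consume items at different quality tiers. The hard part is to define item prices for every quality tier within a category so that (i) each targeted type buys at the correct tier, and (ii) buy-many deviations (stacking cheap low-quality items or mixing across categories) are ruled out using the no-arbitrage property of the original mechanism. This is exactly where the total order within a category (the FedEx-style structure) must be used, both to define monotone per-tier prices and to charge each type the correct category-wise contribution. Once this step is in place, the scale-bucketing argument degrades by $\log k$ rather than $\log n$ and the theorem follows.
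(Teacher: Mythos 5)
Your lower-bound sketch (reinterpret the $\Omega(\log n)$ hard instance of \cite{chawla2019buy} with each item as its own category, padding with dominated items if needed) matches the obvious route and is fine. The upper bound, however, is where the paper's proof lives, and your proposed argument both differs from the paper's and has a real gap.

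You propose a bucketing argument: define per-category charges $q_j(t)$ summing to $\Theta(p(t))$, partition categories into $O(\log k)$ geometric scale buckets, price one bucket at a time, and take the best bucket. This is the shape of classical ``scale-bucketing'' arguments for posted pricing, but you never actually construct the charges $q_j(t)$, never specify why only $O(\log k)$ scales matter given that the $q_j(t)$ vary with the type $t$ (so ``a category's scale'' is not even well defined), and you leave the crucial claim -- that buy-many IC of $\optmec$ rules out cheap deviations against the bucket pricing -- entirely unproven. Your own ``main obstacle'' paragraph essentially concedes that the step which would make the argument go through is missing. Also, your intuition that ``the sum over scales is over categories, not items'' is not how the paper obtains $\log k$.

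The paper's actual argument avoids bucketing entirely. Given the optimal buy-many mechanism $p$, it defines a single item pricing $q$ by $q_i = \min_{x}\{p(x) : \sum_{T \cap S_i \neq \emptyset} x_T = 1\}$, i.e.\ the cheapest way to guarantee an item dominating $i$. The key new ingredient is a structural lemma (Lemma~\ref{lem:k-width-general-lem}): for any lottery allocation $x$ over feasible (antichain) sets, there exists a set $T$ in the support of $x$ with $q(T) \le k^2\, p(x)$. The proof is a union bound over an antichain of size at most $k$ (this is precisely where the width $k$ enters) combined with $|T| \le k$. Plugging this into the same random-scaling machinery from \cite{chawla2019buy} (Lemma~\ref{lem:CTT19_lem31}) with scaling range $[\ell, 1]$, $\ell = 1/(8k^2)$, gives a randomized scaled item pricing extracting a $\frac{1}{4\ln(8k^2)} = 1/O(\log k)$ fraction of $\rev_p$. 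The $\log k$ comes from $\ln(1/\ell)$ in the scaling lemma, not from a count of buckets. So the paper sharpens the factor in the ``item pricing dominates lottery price'' lemma from $n$ to $k^2$ and reuses the scaling argument wholesale, whereas you attempt to replace the scaling argument itself; the latter would require a substantially new construction that you have not supplied.
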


\paragraph{Implications for Buy-Many Mechanism Design.}
Even though our focus in this work is on item pricing and its revenue guarantees, our result gives the first computationally efficient
algorithm for computing approximately optimal buy-many mechanisms in structured settings. In
contrast to the setting of buy-one mechanisms 
where the optimal mechanism can typically be computed via a linear program of polynomial size in the support of the distribution of values, 
no such algorithm is known for buy-many settings. In fact, we observe that the $\sqrt{n}$ inapproximability of item-pricing even for unit-demand settings, directly
implies a $\sqrt{n} / \log(n)$ inapproximability for buy-many mechanisms as one can efficiently convert any buy-many mechanism into an item pricing one
with a logarithmic loss in approximation. Our results show that for structured settings, the optimal buy-many mechanism is efficiently approximable and that 
such an approximation can be achieved via item pricing.

We remark that being able to obtain approximate item pricing or buy-many mechanisms is important even in cases where the optimal buy-one mechanism might be easier to compute.
This is because buy-one mechanisms may be inherently complex and
difficult for the buyers to understand and participate in. 
More significantly, in many settings, it may be unrealistic to expect that the revenue promised by a buy-one mechanism is achievable in practice.
For cases like shopping from a retail store, it may not be feasible to implement a buy-one mechanism as buyers faced with superadditive prices would break their desired bundle into smaller ones visiting the store multiple times. This would result in significantly lower revenue than expected by the buy-one model.

\paragraph{Extensions.}
We further consider settings where there is a partial ordering over
items. Consider, for example, an electronics company that
manufactures both cameras and cell phones. Some cell phones capture
all of the features of certain cameras, and therefore all buyers
weakly prefer the former to the latter. But not all cameras and cell
phones are comparable. 
We say that an item $i$ dominates another item 
$j$ if for every set $S$ of items containing both $i$ and $j$, every buyer is indifferent between getting $S$ or $S \setminus \{j\}$.

We use the parameter $k$ to
denote the ``width'' of the partial ordering over items---the size of
the largest set of incomparable items or the longest anti-chain in the
partial ordering. Note, that the $k$-category setting is a special case of this more general width-$k$ setting.
Our PTAS for item pricing of Theorem~\ref{thm:intro-ptas} as well as the buy-many revenue approximation result
 of Theorem~\ref{thm:intro-buymany} naturally extend to this more general setting with the same guarantees.

A more relaxed condition for partial ordering across items specifies
that item $i$ dominates another item $j$ if all sets of items $S$ that
do not contain items $i$ or $j$, adding $i$ to $S$ is always
preferable to adding $j$. Unfortunately, we show that under such a
weak condition, pricing cannot guarantee a constant fraction of the
optimal buy-many revenue even in simple settings. In fact, even with additive buyers over totally ordered items,
we show that no buy-many
mechanism with polynomial description complexity can achieve better than $1/o(\log \log n)$ fraction of the optimal buy-many revenue (see Section~\ref{sec:totally-ordered-additive}). It is an interesting open question left by our work to show that this bound is indeed achievable by item pricing.

 \paragraph{Our techniques.}
 Our techniques are easiest to understand in the context of a
 unit-demand buyer with totally ordered items. Our analysis of the gap
 between item pricings and optimal buy-many mechanisms in this setting
 hinges on a characterization of the buyer's optimal buy-many
 strategy. Faced with a menu of randomized options, the buyer
 essentially behaves like a Pandora's box algorithm which at every
 step opens a box (i.e. purchases a lottery) and obtains a random
 reward. Because the same lotteries can be purchased any number of
 times, the buyer's optimal strategy is to pick a single lottery
 repeatedly until an item of a certain minimum value is
 instantiated. This characterization allows us to relate the buyer's
 utility to the value of the item(s) bought by the buyer. We can then
 apply a lemma from \cite{chawla2019buy} that relates the revenue
 obtained by an item pricing to the change in the buyer's utility at
 different scalings of that item pricing.

In order to approximate the optimal item pricing for a unit demand
buyer with totally ordered items, we view the buyer as additive over
item {\em upgrades}: the purchase of an item $i$ can be viewed
equivalently as the purchase of the base item $1$ along with a series
of upgrades, $1$ to $2$, $2$ to $3$, and so on till $i$. The benefit
in doing so is that with some slight loss in approximation, we can
group upgrades into different pricing scales, and price each scale
independently.  This permits a dynamic programming based algorithm
for optimizing the prices of the upgrades. The pricing found in this
manner can be easily converted into an item pricing with the same
revenue.

\subsection{Other related work}






The computational complexity of item pricing for a single buyer has
been studied previously for a variety of valuation functions. One
widely studied setting is the $k$-hypergraph pricing problem, where
each possible realization of the buyer is unit-demand over a set of at
most $k$ items. It has been shown that there exists an algorithm with
competitive ratio $O(\min(k,\sqrt{n\log n}))$
\cite{chalermsook2013independent} (also see
\cite{guruswami2005profit,briest2006single,balcan2006approximation}),
and is hard to approximate within
$\Omega(\min(k^{1-\eps},n^{1/2-\eps}))$ under the Exponential Time
Hypothesis \cite{chalermsook2013independent} (also see
\cite{briest2008uniform,chalermsook2012improved,chalermsook2013graph}). Such
results also extend to a single-minded buyer that wants an entire set
of at most $k$ items. The specific case where $k=2$ is called the
graph vertex pricing, for which there is an efficient algorithm with competitive ratio 4 \cite{balcan2006approximation}. No efficient algorithm can give an approximation ratio better than 4 assuming the Unique Games Conjecture \cite{lee2015hardness} (also see \cite{guruswami2005profit,khandekar2009hardness}). Another special case is the tollbooth problem, where the buyer demands a path on a path graph. This problem is strongly NP-hard \cite{elbassioni2009profit}, and a PTAS is known \cite{grandoni2016pricing} (also see \cite{balcan2006approximation,gamzu2010sublogarithmic}).

Another line of work studies the problem of selling to a unit-demand
buyer with item values drawn from independent distributions. For
general distributions, computing the optimal item pricing is NP-hard
\cite{chen2014complexity}. The optimal item pricing revenue can be
approximated to within a factor of 2 (providing a 4-approximation to
the optimal revenue overall) \cite{chawla2007algorithmic,chawla2010multi}, and a PTAS (or QPTAS) exists if the item values are drawn from monotone hazard rate (or regular) distributions \cite{cai2011extreme}. The problem of finding the revenue from the optimal mechanism for a unit-demand buyer with independent item values has been further studied: it is known that no efficient exact algorithm exists unless the polynomial-time hierarchy collapses \cite{chen2015complexity}, and a QPTAS exists \cite{kothari2019approximation}.

The recent decade has seen much work on approximating the optimal
revenue via simple mechanisms such as item pricing and grand bundle
pricing: for a single unit-demand buyer
\cite{chawla2007algorithmic,chawla2015power}; an additive buyer
\cite{hart2013menu,li2013revenue,babaioff2014simple}; a subadditive
buyer \cite{rubinstein2018simple,chawla2016mechanism}; as well as for
multi-buyer settings
\cite{chawla2010multi,yao2014n,chawla2016mechanism,cai2017simple,duettinglog}. All
of these results require independence across individual item
values. For correlated item values, simple mechanisms cannot provide
finite approximations to the optimal revenue and bounded gaps are only
known in comparison to the optimal {\em buy-many} revenue.
\cite{briest2015pricing} shows that item pricing gives $O(\log
n)$-approximation to the optimal buy-many revenue for a unit-demand
buyer. \cite{chawla2019buy} further generalizes the result to a
general-valued buyer, and \cite{chawla2020menu} characterizes the
tight menu-size complexity of the mechanism needed for $(1+\eps)$-approximation in revenue.



\subsection{Organization}

We present our results by iteratively building from the simplest case
of the FedEx-Problem in Section \ref{sec:fedex}, to the case of
totally-ordered items in Section \ref{sec:totally-ordered} and
finally the general case with partially ordered items in Section
\ref{sec:partially-ordered}. Some proofs are deferred to the appendix.

\section{Definitions}
\label{sec:prelim}

We study the multidimensional mechanism design problem where the seller has $n$ heterogeneous items to sell to a single buyer, and aims to maximize the revenue. The buyer's value type is specified by a valuation function $v:2^{[n]}\to\R_{\geq 0}$ that assigns a non-negative value to every set of items. The valuation functions are monotone: for any $S,T\subseteq [n]$ with $S\subseteq T$ and any valuation function $v$, we have $v(S)\leq v(T)$. We study the \textit{Bayesian setting}, where the buyer's valuation function $v$ is drawn from a publicly known distribution $\dist$ over the set of all monotone valuation functions.

\paragraph{Unit-demand Buyers.} We say that a buyer is \textit{unit-demand} over all items, if the buyer is only interested in purchasing one item, and her value for any set of items is solely determined by the item that is most valuable to her. In other words, for any set $S\subseteq [n]$, $v(S)=\max_{i\in S}v(\{i\})$. When there is no ambiguity, we use $v_i$ to denote $v(\{i\})$ for a unit-demand buyer of type $v$.


\paragraph{Totally-ordered Items.} We say that a unit-demand buyer has \textit{totally-ordered} values, if for every possible value realization $v$ of the buyer, $v_{1}\leq v_{2}\leq\cdots\leq v_{n}$.

\paragraph{Partially-ordered Items.} 
Let $\preceq$ denote a partial ordering over the $n$ items. We say that the buyer has \textit{partially-ordered} values with respect to the relation $\preceq$ if for every realizable valuation function $v$, every pair of items $i$ and $j$ with $i\preceq j$, and every set $S\subseteq [n]$, we have $v(S\cup\{i,j\}) = v(S\cup\{j\})$. We say that the item $j$ dominates $i$.  In other words, the buyer may discard from his allocated set any item that is dominated by another item in his allocation with no loss in value. As a consequence, the only ``interesting'' allocations over partially-ordered items are sets that form {\em antichains}, i.e. where no two items are comparable. An important parameter of a partially ordered set is its \textit{width} that is defined to be the size of the largest antichain. We use $k$ to denote the width of the partial ordering $\preceq$. 
An important special case of partially-ordered items is the $k$-category setting where items are partitioned in $k$-categories. In this setting, items within a category are totally ordered and every buyer's value for a bundle depends only on the best item of each category it contains.

\paragraph{Input Model for the Computational Problem.} When we study computational problems, we assume that the input distribution $\dist$ is provided explicitly over a support of size $m$. Each buyer type $v$ in the support is a vector of size $O(n^k)$ that specifies the buyer's value $v(T)$ for all possible sets $T$ of size at most $k$,\footnote{Note that it suffices to specify the buyer's value over sets of size at most $k$, where $k$ is the width of the partial ordering over items, because the buyer only desires sets that form antichains.} and is accompanied with a probability of realization $\Pr[v]$. We further assume that the value $v(T)$ of each set of items is either $0$, or in range $[1,\range]$. Without loss of generality we assume that each buyer type $v$ in the support $\dist$ is non-trivial: $v([n])\geq 1$.

\paragraph{Single-buyer Mechanisms.} By the Taxation Principle \cite{rochet1985taxation}, any single-buyer mechanism can be described as a menu of possible outcomes, and the buyer can select one menu option. Each outcome $\lambda=(x,p)\in \Delta(2^{[n]})\times \R_{\geq 0}$ is a lottery that is specified by a randomized allocation $x$ over the sets of items, and a price $p$ that is the payment of the buyer if she wants to get such an allocation. For any set $S\subseteq[n]$, $x_S$ denotes the probability that only items in set $S$ are allocated to the buyer, and we have $\|x\|_1=1$. We will use $x(\lambda)$ and $p(\lambda)$ to denote the allocation and the payment of any lottery $\lambda$. For any buyer of valuation function $v$, her \textit{value} for lottery $\lambda$ is defined by $v(\lambda)\equiv \E_{S\sim x}v(S)$; her \textit{utility} for purchasing $\lambda$ is defined by $u_v(\lambda)\equiv v(\lambda)-p(\lambda)$. We will also use $S\sim \lambda$ to denote a set of items drawn from set distribution $x(\lambda)$. 

Given a mechanism $\mec$ with a menu of lotteries $\Lambda$, the buyer selects the menu option $\lambda$ that maximizes her utility $u_v(\lambda)$. When there are multiple lotteries with the same highest utility for the buyer, the seller can choose the most expensive lottery to sell to the buyer. Without loss of generality, we assume that for any allocation $x\in \Delta(2^{[n]})$ over the sets of items, there is a corresponding price $p(x)$ such that $(x,p(x))\in\Lambda$. We also use the pricing function $p$ as an alternative definition of the mechanism $\mec$. The buyer's utility is defined as $u_p(v)=v(x)-p(x)$. The buyer's payment is $\rev_p(v)=p(x)$, and we write the revenue of mechanism $p$ as $\rev_p=\E_{v\sim \dist}\rev_p(v)$. Since the mechanism only allows the buyer to interact with the mechanism for once, it is also called \textit{buy-one mechanism}. 

\paragraph{Buy-many Mechanisms.} In an (adaptively) buy-many mechanism, the buyer is allowed to interact with the mechanism for multiple times. To be more precise, a buy-many mechanism $\mec$ generated by a set $\Lambda$ of lotteries can be defined as follows. The buyer can adaptively purchase a (random) sequence of lotteries in $\Lambda$, which means that in each step, the buyer can decide which lottery to purchase given the instantiation of the previous lotteries in the sequence. The buyer gets the union of all items allocated in each step and pays the sum of the prices of all purchased lotteries. For any adaptive algorithm $\adaptalg$, define $\Lambda_{\adaptalg}=(\lambda_{\adaptalg,1},\lambda_{\adaptalg,2},\cdots)$ to be the random sequence of lotteries purchased by the buyer of type $v$. The expected value of the buyer is 
\[v(\Lambda_{\adaptalg})\equiv\E_{(S_1,S_2,\cdots)\sim (\lambda_{\adaptalg,1},\lambda_{\adaptalg,2},\cdots)}v\left(\bigcup_{i\geq 1} S_i\right),\]
and the payment of the buyer is 
\[p(\Lambda_{\adaptalg})\equiv\E_{\adaptalg}\sum_{i\geq 1}p(\lambda_{\adaptalg,i}).\]
Any buy-many mechanism can be described by a \textit{buy-one} menu, where the buyer is only allowed to purchase a single lottery. This is because the expected outcome of any adaptive algorithm $\adaptalg$ can be described by the allocation $\cup_i(S_i\sim \lambda_{\adaptalg,i})$, and an expected payment $p(\Lambda_{\adaptalg})$. We say that a buy-one menu $\Lambda$ satisfies the \textit{buy-many constraint}, if for every adaptive algorithm $\adaptalg$, there exists a cheaper single lottery $\lambda\in\Lambda$ dominating it. Rigorously speaking, there exists $\lambda\in \Lambda$ with $p(\lambda)\leq p(\Lambda_{\adaptalg})$ such that there exists a coupling between a random draw $S$ from $\lambda$, and the union of random draws $S'$ from $\Lambda_{\adaptalg}$, satisfying $S\supseteq S'$. Intuitively, a buy-one menu satisfies the buy-many constraint, if the buyer always prefers to purchase a single option from the menu, even if she has the option to adaptively interact with the mechanism for multiple times. In later sections, when we refer to a ``buy-many mechanism'' with menu $\Lambda$, we are always referring to a buy-one mechanism with menu $\Lambda$ that satisfies the buy-many constraint.

\section{Warm-Up: Item Pricing in the FedEx Setting}
\label{sec:fedex}

In this section, we study the item pricing in the FedEx setting \cite{fiat2016fedex}. The buyer's value distribution in the FedEx problem has the following structure. Any buyer type $v$ is defined by the pair of parameters $(i_v,v_H)$ with $v_i=0$ for $i<i_v$ and $v_i=v_H$ otherwise. In other words, the buyer is totally-ordered and has at most two distinct values for all items, with the lower value being 0.

\subsection{The optimality of item pricing}

Our first observation is that item pricing achieves the optimal revenue obtained by any buy-many mechanism. 

\begin{theorem}\label{thm:fedex-optimal}
For any value distribution in the FedEx setting, there exists an item pricing that achieves the optimal buy-many revenue.
\end{theorem}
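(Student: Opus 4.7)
The plan is to show that given any buy-many mechanism $\mec$ in the FedEx setting, there is an item pricing whose revenue is at least as large; since item pricing is itself buy-many, this will prove the theorem. For each index $i$, I will define the ``effective per-success cost'' of securing an item of index at least $i$ in $\mec$ as $p^*_i := \inf_{\lambda \in \Lambda} p(\lambda)/\alpha_i(\lambda)$, where $\alpha_i(\lambda) := \Pr_{S \sim \lambda}[\max S \geq i]$ (with $p^*_i := +\infty$ if no lottery has $\alpha_i(\lambda) > 0$). Since $\alpha_i(\lambda)$ is non-increasing in $i$, the sequence $p^*_1 \leq \cdots \leq p^*_n$ is non-decreasing, and hence defines a valid item pricing.

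The first technical step is a ``flow'' inequality: for any adaptive strategy $\adaptalg$ of a FedEx buyer of type $(i_v, v_H)$, if $\alpha$ is the probability of eventually obtaining an item of index $\geq i_v$ and $P$ is the expected total payment, then $P \geq \alpha \cdot p^*_{i_v}$. I will write the strategy as a sequence of random lottery purchases $L_1, L_2, \ldots$, stopped upon first success or voluntary halt, and decompose both $P$ and the success probability as sums of the form $\sum_t \Pr[\text{reach step } t] \cdot \E[\,\cdot \mid \text{reach step } t]$; then I apply the pointwise inequality $p(\lambda) \geq p^*_{i_v} \cdot \alpha_{i_v}(\lambda)$ term by term.

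Plugging the flow inequality into $u = \alpha v_H - P$ gives the upper bound $u \leq \max(0, v_H - p^*_{i_v})$ on the buyer's sup utility. In the opposite direction, by repeatedly purchasing a lottery $\lambda$ with ratio $p(\lambda)/\alpha_{i_v}(\lambda)$ close to $p^*_{i_v}$ and stopping upon first success, a direct geometric-series computation shows that the buyer can attain utility arbitrarily close to $v_H - p^*_{i_v}$ whenever $v_H \geq p^*_{i_v}$. So the sup utility equals $\max(0, v_H - p^*_{i_v})$ exactly. Using this in any $\epsilon$-optimal strategy's identity $u = \alpha v_H - P$, together with $\alpha \leq 1$, yields $P \leq p^*_{i_v} + \epsilon$; sending $\epsilon \to 0$ gives $P \leq p^*_{i_v} \cdot \mathbf{1}[v_H \geq p^*_{i_v}]$, while when $v_H < p^*_{i_v}$ the sup utility is $0$, which forces $P = 0$.

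Finally, I will verify that the item pricing $p_i := p^*_i$ extracts exactly $p^*_{i_v} \cdot \mathbf{1}[v_H \geq p^*_{i_v}]$ from each buyer type $(i_v, v_H)$: because prices are non-decreasing and the buyer is unit-demand with value $v_H$ on items of index $\geq i_v$ and $0$ on the others, her utility-maximizing choice is item $i_v$ itself (the cheapest item giving her value $v_H$), which she purchases iff $v_H \geq p^*_{i_v}$. Summing over the type distribution gives $\rev(\text{item pricing}) \geq \rev(\mec)$; applying this to the optimal buy-many mechanism closes the argument. The main obstacle I anticipate is executing the flow inequality cleanly in the presence of arbitrary adaptive, possibly non-terminating strategies and carefully handling the edge case where the infimum defining $p^*_{i_v}$ is not attained by any single lottery, which is what necessitates the limiting $\epsilon$-argument rather than pointing to a single realized optimal buyer strategy.
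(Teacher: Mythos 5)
Your proof is correct but follows a genuinely different route than the paper's. The paper argues structurally: it first ``collapses'' every lottery $(x,p)$ into $n$ per-item options, observing that a FedEx buyer of type $(i_v, v_H)$ values the original allocation exactly through the quantity $\sum_{j\ge i_v} x_j$; it then notes that once every option allocates a single item, the buy-many buyer will repeatedly purchase any fractional option until success, so only probability-1 allocations survive, yielding an item pricing directly. You instead take an analytic route: you define the effective per-success prices $p^*_i = \inf_{\lambda} p(\lambda)/\alpha_i(\lambda)$, prove they are non-decreasing, establish a flow inequality bounding the expected payment of any adaptive strategy by $\alpha \cdot p^*_{i_v}$, compute the sup utility as $\max(0, v_H - p^*_{i_v})$, and deduce a per-type revenue cap of $p^*_{i_v}\mathbf 1[v_H \ge p^*_{i_v}]$, which you then match with the item pricing $q_i := p^*_i$. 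Both proofs are valid. The paper's is shorter and more transparent about \emph{why} the transformation exists, while yours is more self-contained and robust: it does not require physically rewriting the menu, handles non-attained infima cleanly via the $\epsilon$-argument, and the effective-price construction is exactly the item pricing $q$ that the paper later uses (in Theorems~\ref{thm:totally-ordered-approx} and \ref{thm:k-width-general-approx}) to establish approximation bounds beyond the FedEx case, so your argument fits more naturally into the paper's later framework. One small point to make fully explicit in a write-up: in the paper's formalism the buyer chooses a single lottery from a menu satisfying the buy-many constraint, so when you invoke adaptive strategies (repeated purchase) to achieve utility close to $v_H - p^*_{i_v}$, you should note that the buy-many constraint guarantees a single dominating lottery at no greater price, so the sup over single lotteries equals your sup over adaptive strategies. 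This is implicit in your argument and easy to add.
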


\begin{proof}
  Consider a buyer with value function $v$ in the FedEx setting. Recall that the buyer only values items with index $\ge i_v$ and values all of them equally. Therefore the buyer obtains the same value from an allocation $x=(x_1,x_2,\cdots,x_n)$ as from an allocation $x'$ where $x'_{i_v}=\sum_{i\ge i_v} x_i$ and $x'_i=0$ for $i\ne i_v$.

Given any buy-many menu $\{(x,p)\}$, consider replacing every lottery $(x,p)$ with $n$ different options: $(x^{(1)},p^{(1)})=((\sum_{j\geq 1}x_j,0,\cdots,0),p)$, $(x^{(2)},p^{(2)})=((0,\sum_{j\geq 2}x_j,0,\cdots,0),p)$, $\cdots$, $(x^{(n)},p^{(n)})=((0,\cdots,0,x_n),p)$. By our observation above, for every buyer type $v$, one of the $n$ new options bring the same utility to the buyer as $(x,p)$ and all other options bring lower utility. As a result, the new mechanism is identical in its allocations and revenue to the original one.


Observe that the new mechanism sells each item separately (but with different probabilities of allocation). For such a mechanism, we claim that every buyer type $v$ purchases a lottery that sells item $i_v$ with allocation 1: if the buyer purchases a lottery $\lambda$ that sells item $i_v$ with probability $x_{i_v}$, the buyer can repeatedly purchase the same lottery until she gets the item, which increases her utility. We may therefore drop any options that allocate items with probability less than $1$ from the menu without changing the allocations or revenue of the mechanism. This final mechanism is an item pricing, and the theorem follows.
\end{proof}

\subsection{A poly-time algorithm for finding optimal item pricings}

In this section, we show that the optimal item pricing in the FedEx setting can be computed efficiently. We actually prove a stronger result: for each realized buyer type $v$, if the buyer has at most two distinct item values, $\orderedip$ can be solved in polynomial time via dynamic programming. For each buyer type $v$, let $v_L$ and $v_H$ denote the two different item values in $v$, and let $i_v$ be the smallest item type with item value $v_{H}$. In other words, $v_1=v_2=\cdots=v_{i_{v}-1}=v_L$, and $v_{i_v}=v_{i_v+1}=\cdots=v_n=i_H$. If $v_1=v_n$, we define $i_v=1$ and $v_L=v_H=v_1$. The FedEx Problem is a special case with $v_L=0$.

\begin{theorem}\label{thm:fedex-computation}
In the totally-ordered setting, if each realized buyer type has at most two distinct item values, then the optimal item pricing can be computed in polynomial time.
\end{theorem}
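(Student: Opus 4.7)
The plan is to reduce the optimization to a chain-shaped dynamic program over polynomially many candidate prices. First, I would observe that monotone prices $p_1 \le p_2 \le \cdots \le p_n$ are without loss of generality: if $p_i > p_{i+1}$, then since $v_i \le v_{i+1}$ for every buyer, item $i$ is never purchased over item $i+1$, so resetting $p_i := p_{i+1}$ cannot decrease revenue. Under monotone prices, a buyer of type $v = (v_L, v_H, i_v)$ only ever compares item $1$ (the cheapest item at value $v_L$) against item $i_v$ (the cheapest item at value $v_H$); hence only the prices at indices $\{1\} \cup \{i_v : v \in \text{supp}(\dist)\}$ influence revenue. Let $1 = i_0 < i_1 < \cdots < i_T$ be these indices (so $T \le m$) and write $q_t := p_{i_t}$, so that the variables are $q_0 \le q_1 \le \cdots \le q_T$.

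Next, I would decompose the revenue by type. For a type-$v$ buyer with $i_v = i_t$ and $t \ge 1$, the revenue is $q_t$ if she selects item $i_t$ (which holds iff $q_t \le v_H$ and $q_t - q_0 \le v_H - v_L$), $q_0$ if she selects item $1$ (iff $q_0 \le v_L$ and $q_t - q_0 > v_H - v_L$), and $0$ otherwise; for $i_v = 1$ it is simply $q_0 \cdot \mathbbm{1}[q_0 \le v_H]$. Grouping by $t$ yields $\rev(q_0, \ldots, q_T) = F_0(q_0) + \sum_{t=1}^T F_t(q_0, q_t)$, a sum in which each term couples $q_t$ only with $q_0$. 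For each fixed $q_0$, the subproblem of choosing $q_0 \le q_1 \le \cdots \le q_T$ is a chain, solved by the standard DP
\[
G[t][q] \;=\; F_t(q_0, q) + \max_{q' \le q} G[t-1][q'], \qquad G[0][\cdot] \equiv F_0(q_0),
\]
with final answer $\max_q G[T][q]$.

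I would then argue that each $q_t$ admits polynomially many useful candidate values. Fixing $q_0$, the map $q_t \mapsto F_t(q_0, q_t)$ is piecewise linear with breakpoints only at $\{v_H,\ q_0 + v_H - v_L : v \in \text{supp}(\dist),\ i_v = i_t\}$, and on each piece is either constant or linearly increasing in $q_t$; therefore the optimal $q_t$ lies in this $O(m)$-size set. For $q_0$ itself, after optimizing the inner DP, the optimized revenue is piecewise linear on the arrangement of the $O(m^2)$ hyperplanes $\{q_0 = v_L\}$, $\{q_0 = v_H\}$, $\{q_t = v_H\}$, $\{q_t - q_0 = v_H - v_L\}$, $\{q_{t-1} = q_t\}$ (ranging over $v \in \text{supp}(\dist)$ and $t$), and the optimum is attained at a vertex of this arrangement restricted to the monotonicity polytope, which pins $q_0$ to a value in the polynomial-size set $\{0\} \cup \{v_L, v_H : v \in \text{supp}(\dist)\} \cup \{v_H' - (v_H - v_L) : v, v' \in \text{supp}(\dist),\ i_v = i_{v'}\}$.

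The main obstacle is the last step: since the breakpoints of $F_t(q_0, q_t)$ in $q_t$ themselves depend on $q_0$, proving polynomiality for the $q_0$-candidates requires some care. The cleanest resolution is the hyperplane-arrangement argument above: any vertex of the arrangement is the intersection of $T+1$ hyperplanes from the list, and reading off its $q_0$-coordinate always produces a value of one of the listed forms. Enumerating the $O(m^2)$ candidates for $q_0$ and running the chain DP (in time $O(T \cdot m^2) = O(m^3)$) for each yields an algorithm whose overall runtime is polynomial in $n$, $m$, and the input bit complexity, proving the theorem.
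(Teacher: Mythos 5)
Your overall strategy --- reduce to monotone prices, identify a polynomial-size set of candidate prices, then run a chain dynamic program --- is the same as the paper's. The difference is in how you justify the candidate set. The paper proves this (Lemma~\ref{lem:two-item-value-lemma}) by a ``nudging'' perturbation argument: if $p_1$ (resp.\ the smallest $p_\ell$ outside the allowed set) is not one of the target values, one can raise it together with all equal prices by a small $\eps$ without losing revenue, and iterate. You instead appeal to piecewise linearity of the revenue over the monotone price cone and the fact that a maximum of a piecewise linear function on a polytope is attained at a vertex of the induced arrangement. Both arguments are legitimate and lead to the same DP; your route is a bit more ``generic LP,'' the paper's is more elementary and bespoke. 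A minor bonus of your write-up is the compression to $T+1 \le m+1$ relevant indices, which the paper does not bother with.

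Two slips in your arrangement argument deserve fixing, though both are benign. First, your claim that ``the optimal $q_t$ lies in'' the breakpoint set of $F_t(q_0,\cdot)$ ignores the chain constraint $q_{t-1}\le q_t\le q_{t+1}$: the constrained optimum of the full sum can pin $q_t$ to $q_{t+1}$, which may be a breakpoint of $F_{t+1}$ rather than of $F_t$. The fix is to use a single uniform candidate set for all $q_t$, namely the union over all types of $\{v_H\}\cup\{q_0 + v_H - v_L\}$ (plus $q_0$ itself), which is still $O(m)$ for a fixed $q_0$; this is precisely the role of the paper's $\Pi^*$. Second, the restriction $i_v = i_{v'}$ in your candidate set for $q_0$ is not justified by the vertex argument: a vertex can be pinned by, say, $q_1 - q_0 = v_H - v_L$ for a type with $i_v = i_1$, a chain of equalities $q_1 = \cdots = q_T$, and $q_T = v_H'$ for a type with $i_{v'} = i_T \ne i_1$, yielding $q_0 = v_H' - (v_H - v_L)$ with $i_v \ne i_{v'}$. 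Dropping that restriction keeps the set of size $O(m^2)$ and makes the argument sound. (As it happens, the paper's Lemma~\ref{lem:two-item-value-lemma} shows the much smaller set $\Pi_L$ of low values already suffices for $p_1$, and $\Pi_L$ is contained in your set, so your algorithm would still be correct even before the fix --- but that observation is not what your proof establishes.)
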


\begin{proof}
Let $\Pr[v]$ be the realization probability of $v$ under input value distribution $\dist$. Without loss of generality, we only study item pricings with monotone item prices $p_1\leq p_2\leq\cdots\leq p_n$.
For a buyer type with $v_H>v_L$, the buyer would either purchase item 1, or item $i_v$, or nothing. For a buyer type with $v_H=v_L$, the buyer would either purchase item 1, or nothing. 

To compute the optimal item pricing, we first find a set of feasible prices for each item, then use a dynamic program to find the optimal item pricing. Define $\Pi_L=\{v_1|v\sim \dist,i\in[n]\}\cup\{0\}$ be the set of all possible values for item 1, including 0. Let $\Pi^*=\{z|z=v_H-v_L+y,y\in \Pi_L,v\sim \dist\}\cup \Pi_L$. We first observe that we may restrict prices to lie in a set of polynomial size without loss in revenue. The proof of this lemma is deferred to Section~\ref{sec:fedex-appendix}.

\begin{lemma}\label{lem:two-item-value-lemma}
There exists an optimal item pricing such that $p_1\in \Pi_L$, and $p_i\in \Pi^*$ for each $i\geq 2$.
\end{lemma}

Now we are ready to find the optimal item pricing. Let $F[y,i,z]$ denote the total revenue from buyer types $v$ with $i_v\leq i$, under a monotone item pricing that has already priced the first $i$ items, with $p_1=y$ and $p_i=z$. Then we have the following recursive formula:
\begin{equation*}
F[y,i,z]=\max_{z'\leq z, z'\in \Pi^*}\Big\{F[y,i-1,z']+\sum_{v:i_v=i}\Pr[v]G[v,y,z]\Big\},
\end{equation*}
where $G[v,y,z]$ is the payment of buyer type $v$ with item price $y$ for item 1, and price $z$ for item $i_v$. In other words, $G[v,y,z]=z$ if $v_H-z\geq v_1-y$ and $v_H\geq z$; $G[v,y,z]=y$ if $v_1-y>v_H-z$ and $v_1\geq y$; $G[v,y,z]=0$ otherwise. The recursive formula is based on the following fact: if $p_1$ is fixed, the revenue contribution of buyer types with $i_v=i$ only depends on $p_i$. The optimal item pricing revenue we want to compute is $\max_{y\in \Pi_L,z\in \Pi^*,z\geq y}F[y,n,z]$. Since the table has a polynomial number of entries, and the inductive steps can be computed in polynomial time, the total running time is also polynomial in the number of items and the support size of the distribution.
\end{proof}

\section{Item Pricing in the Totally-Ordered Setting}
\label{sec:totally-ordered}

In the previous section we observed that for the FedEx setting, item pricing is not only optimal but also polynomial-time computable. When considering the general totally-ordered setting, both properties no longer hold. We show that for a general-valued buyer, item pricings may achieve strictly less revenue than the optimal buy-many mechanism. We complement this result by showing that item pricing gives a constant approximation in revenue to the optimal buy-many mechanism. Next, we show that computing the optimal item pricing in the totally-ordered setting is strongly NP-hard, thus there is no FPTAS algorithm finding the revenue obtained by the optimal item pricing. We complement the hardness result by providing a PTAS computing an approximately optimal item pricing, thus giving a tight characterization of the computational complexity of the problem.

\subsection{Item pricing is a constant approximation to the optimal buy-many revenue}
\label{sec:totally-ordered-approx}

We first provide an example which shows that in the totally-ordered setting, the optimal item pricing and the optimal buy-many mechanism may have a constant factor revenue gap. Consider the following example: Let there be $2$ items and $3$ unit-demand buyers, with the following values for items $1$ and $2$ respectively, each realized with probability $\frac{1}{3}$:
\begin{equation*}
	v^{(1)}=(0,5), v^{(2)}=(1,3), v^{(3)}=(1,2).
\end{equation*}
The optimal buy-one mechanism has the following menu:
\begin{equation*}
	\lambda_1=((0,1), 5), \lambda_2=\left(\left(\frac{2}{3}, \frac{1}{3}\right), \frac{5}{3}\right), \lambda_3=((1,0),1).
\end{equation*}

The lotteries are written in the form of $((x_1,x_2),p)$ where $x_1$ and $x_2$ are the probabilities of the buyer getting items $1$ or $2$ respectively, and $p$ is the price for this lottery. In the mechanism, each buyer $v^{(i)}$ prefers to purchase lottery $\lambda_i$. Observe that the menu of lotteries satisfies the buy-many constraint. This is because to achieve the allocation of any $\lambda_i$ using the other two lotteries in the menu, one always needs to pay more than $p_i$. Thus the mechanism is also the optimal buy-many mechanism, with revenue $\frac{23}{9}$. Observe that the optimal item pricing for this instance is $p_1=1,\ p_2=3$, which yields a revenue of $\frac{7}{3}<\frac{23}{9}$. Thus there can be a constant gap between the optimal item pricing revenue and the optimal revenue obtained by any (buy-many) mechanism.

Then we show that item pricings can actually achieve a constant fraction of the revenue obtained by the optimal buy-many mechanism.

\begin{theorem}\label{thm:totally-ordered-approx}
For any unit-demand buyer with totally-ordered value for all items, item pricing gives a 5.4 approximation in revenue to the optimal buy-many mechanism.
\end{theorem}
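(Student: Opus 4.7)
The plan is to exploit the total ordering and unit-demand structure to reduce any buy-many strategy to a stationary single-lottery rule, and then convert the resulting mechanism into an item pricing via a utility-scaling argument.

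\textbf{Step 1: Characterize the buyer's optimal buy-many strategy.} For a unit-demand buyer with totally ordered values $v_1 \le \cdots \le v_n$, the value obtained from any sequence of lottery draws depends only on the largest-indexed item ever instantiated. Combined with the fact that any lottery in $\Lambda$ may be purchased an arbitrary number of times, this gives the buyer's decision problem a Pandora's-box-like structure. The key structural claim I would prove is that for each type $v$ there exist a single lottery $\lambda_v^\star \in \Lambda$ and a threshold index $i_v^\star$ such that the buyer's optimal strategy is to repeatedly purchase $\lambda_v^\star$ until an item of index at least $i_v^\star$ is drawn, then stop. The argument is a stationarity/exchange argument: conditioned on not yet having crossed the threshold, the continuation subproblem is identical to the original one, so we may restrict attention to stationary continuations; and among stationary single-lottery rules the best choice of $(\lambda, i^\star)$ dominates any interleaving of different lotteries.

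\textbf{Step 2: Extract a candidate item pricing from the mechanism.} Given the buy-many menu $\Lambda$, I would define item prices $\tilde{p}_i = \min_{\lambda \in \Lambda} p(\lambda)/\Pr_{S \sim \lambda}[\max S \ge i]$, i.e.\ the cheapest per-successful-draw cost of obtaining any item of index at least $i$ through $\Lambda$. This sequence is monotone nondecreasing in $i$. By Step 1, the expected payment of type $v$ in the buy-many mechanism is $p(\lambda_v^\star)/q_v$ where $q_v = \Pr_{S\sim \lambda_v^\star}[\max S \ge i_v^\star]$, which is at least $\tilde{p}_{i_v^\star}$; and purchasing item $i_v^\star$ under $\tilde p$ yields utility at least $v_{i_v^\star} - \tilde{p}_{i_v^\star}$, matching or exceeding the buy-many utility of $v$. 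Hence $\E_v[u_{\tilde p}(v)]$ lower bounds the buy-many utility while $\E_v[v_{i_v^\star}]$ upper bounds the buy-many value.

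\textbf{Step 3: Convert utility into revenue via scaling.} Since buy-many value equals buy-many utility plus buy-many revenue, it suffices to upper bound $\E_v[v_{i_v^\star}] - \E_v[u_{\tilde p}(v)]$ by a constant multiple of item-pricing revenue. For this I would invoke the utility-scaling lemma from \cite{chawla2019buy}: for any item pricing $p$ and any $\alpha > 1$, the revenue of the scaled pricing $\alpha p$ controls a suitable multiple of $\E_v[u_{p}(v) - u_{\alpha p}(v)]$. Combining the revenue of $\tilde p$ itself with that of $\alpha \tilde p$ at an optimally chosen $\alpha$ yields a constant-factor approximation to the buy-many revenue, and tuning $\alpha$ to balance the two losses produces the constant $5.4$ stated in the theorem.

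\textbf{Main obstacle.} The hard part is Step 1. A priori, a buyer who interleaves several different lotteries and adapts the stopping threshold based on observed draws has strictly more power than any stationary single-lottery rule, and the argument must rule this out using both the total ordering (so that ``progress'' is captured by a single scalar, the max index so far) and the freedom to repurchase any lottery (so that any Markovian continuation can be replaced by a stationary one). Once this characterization is in place, Steps 2 and 3 amount to a greedy construction of $\tilde p$ and a one-parameter optimization using the scaling lemma from prior work.
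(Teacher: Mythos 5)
Your overall plan is in the same family as the paper's proof (the pricing $\tilde p$ you define is essentially the paper's $q$, and you correctly identify the utility-scaling lemma from \cite{chawla2019buy} as the final ingredient), but there are two substantive problems.

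First, Step 2 has the key inequalities reversed. Since $\tilde p_i$ is defined as an infimum over $p$-strategies, for any item the buyer could buy under $\tilde p$ she can buy (in expectation) at least as cheaply under $p$ while getting at least as much value; so the buy-many mechanism dominates the item pricing pointwise, i.e.\ $u_p(v)\ge u_{\tilde p}(v)$, the opposite of what you claim. Similarly, if the buyer stops as soon as she draws an item of index $\ge i_v^\star$, she may well receive an item of index strictly larger than $i_v^\star$, so $v_{i_v^\star}$ is a \emph{lower} bound on the buy-many value, not an upper bound. Once the directions are fixed, the quantity $\E_v[v_{i_v^\star}]-\E_v[u_{\tilde p}(v)]$ no longer upper bounds the buy-many revenue and the reduction in Step 3 collapses.

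Second, even setting the directional issues aside, Step 3 as sketched is missing the idea that makes the argument produce a term involving $\rev_p(v)$. The paper does \emph{not} merely scale $\tilde p$ by a single factor $\alpha$. It additionally considers a $\beta$-scaled version of the buy-many pricing $\beta p$ with $\beta<1$, and picks the item $j$ to be in the support of the lottery the buyer purchases under $\beta p$, not under $p$. The crucial observation (Lemma~\ref{lem:util-lb}) is that for any item $j$ that can be received, $v_j \ge u_{\beta p}(v) \ge u_p(v) + (1-\beta)\rev_p(v)$; this is the only place where $\rev_p(v)$ enters and it is enabled precisely by the slack $(1-\beta)>0$. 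If you tried to run the same argument with $\beta=1$, the inequality degenerates to $v_j\ge u_p(v)$ and cannot be combined with the scaling lemma to control $\rev_p(v)$. Without this device, tuning a single scaling parameter $\alpha$ on $\tilde p$ cannot close the argument, and the constant $5.4$ comes exactly from jointly optimizing over $\ell$ (the scaling of $\tilde p$) and $\beta$ (the scaling of $p$).

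Finally, Step 1 — proving the stationary single-lottery characterization of the optimal adaptive strategy — is not actually needed and in the paper it is only informal motivation. The paper works directly with the buy-one menu representation of a buy-many mechanism (every buy-many mechanism can be represented as a buy-one menu satisfying the buy-many constraint), and derives everything it needs from two short lemmas: that the value of any set in the support of the lottery the buyer purchases is at least her utility (Lemma~\ref{lem:util-lb}), and that for any allocation there is some item in its support whose $q$-price is at most the allocation's $p$-price (Lemma~\ref{lem:totally-ordered-lem}). Investing effort in a full exchange argument for Step 1 is a detour: if you take the buy-one representation as your starting point, the "hard part" you flag disappears.
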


\begin{proof}
Let $p$ be the optimal buy-many mechanism. Define $q$ to be the following item pricing: $q_i$, the price of item $i$, is the cheapest price at which an adaptive buyer can obtain an item with index at least $i$ with probability 1 from repeatedly purchasing a single lottery. In other words, 
\begin{equation*}
q_i=\min_{x\in \Delta([n]): \sum_{j\geq i}x_j=1}p(x).
\end{equation*}
We will show that a scaling of $q$ gives a constant fraction of the optimal revenue obtained by buy-many mechanisms. We use the following lemma from \cite{chawla2019buy} that relates the revenue of an appropriate scaling of $q$ to the change in the buyer's utility as the pricing function changes from a low scaling factor, $\ell$, to a high one, $h$.

\begin{lemma}\label{lem:CTT19_lem31}(Lemma 3.1 of \cite{chawla2019buy})
For any pricing $q$ and any $0<\ell\leq h$, let $\alpha$ be drawn from $[\ell,h]$ with density function $\frac{1}{\alpha\log(h/\ell)}$. Then for any valuation function $v$,
\begin{equation*}
\E_{\alpha}[\rev_{\alpha q}(v)]\geq\frac{u_{\ell q}(v)-u_{hq}(v)}{\ln(h/\ell)}.
\end{equation*}
\end{lemma}
In order to utilize this lemma, choosing $h=1$, we show that the buyer obtains a low utility under pricing $q$ and high enough utility at an appropriate scaling $\ell q$. We begin with two observations. 
The first shows that in any buy-many mechanism, the value of any set obtained by the buyer with non-zero probability is at least the expected utility of the buyer.
\begin{lemma}\label{lem:util-lb}
For any buyer type $v$ and any buy-many mechanism $p$, the buyer purchases $\lambda=(x,p(x))$ in $p$, then for any set $T$ in the support of $x$, $v(T)\geq u_p(v)$. 
\end{lemma}
\begin{proof}
Since $\lambda$ is the optimal lottery purchased by the buyer, purchasing it is also the optimal adaptive strategy of the buyer. Thus, if the buyer buys $\lambda$ and gets any set $T$ in the support of $x$ allocated, she would not purchase another lottery on the menu, in particular, $\lambda$. Since the value gain of purchasing $\lambda$ with set $T$ at hand is at most $v(\lambda)-v(T)$, therefore $v(\lambda)-v(T)\leq p(\lambda)$ which is the price of purchasing $\lambda$. Thus $v(T)\geq v(\lambda)-p(\lambda)=u_p(v)$.
\end{proof}
We emphasize that the lemma holds for arbitrary buyer types and not just unit-demand valuations. 
The second observation shows that for any lottery, its price in $p$ is lower bounded by the price of some item in its support in item pricing $q$.
\begin{lemma}\label{lem:totally-ordered-lem}
For any allocation $x\in\Delta([n])$, there exists an item $i$ in the support of $x$, such that $q_i\leq p(x)$.
\end{lemma}
\begin{proof}
Let $i$ be the item with the lowest type in the support of $x$. Then we have $\sum_{j\geq i}x_j=1$, thus $q_i\leq p(x)$ by definition of $q$.
\end{proof}

Now we come back to the proof of the theorem. Fix any buyer type $v$. We will consider four different pricing mechanisms: the optimal buy-many pricing $p$, the item pricing $q$ constructed above, and their scalings $\beta p$ and $\ell q$ with $\beta,\ell\in[0,1]$. 

Let $\lambda'$ denote the lottery the buyer purchases under pricing $\beta p$. By Lemma~\ref{lem:totally-ordered-lem}, there exists an item $j$ in the support of $\lambda'$, such that $p(\lambda')\geq q_j$. Then
\begin{equation}\label{eqn:totally-ordered1}
\rev_{\beta p}(v)=\beta p(\lambda')\geq \beta q_j.
\end{equation}
Next, let $\lambda$ denote the lottery the buyer purchases under pricing $p$. Then, by Lemma~\ref{lem:util-lb}, we have
\begin{equation}\label{eqn:totally-ordered2}
v_j\geq u_{\beta p}(v)\geq v(\lambda)-\beta p(\lambda)=v(\lambda)-p(\lambda)+(1-\beta)p(\lambda)=u_{p}(v)+(1-\beta)\rev_p(v).
\end{equation}
where the second inequality follows by noting that the buyer has the option of purchasing $\lambda$ under $\beta p$. Next, we note that since the buyer has the option of purchasing item $j$ under pricing $\ell q$, we have
\begin{equation}\label{eqn:totally-ordered3}
u_{\ell q}(v)\geq v_j-\ell q_j.
\end{equation}
Finally, by the definition of the pricing $q$, the buyer can purchase any set of items more cheaply under $q$ than under $p$. Therefore, 
\begin{equation}\label{eqn:totally-ordered4} 
u_p(v)\geq u_q(v).
\end{equation} 

By $\frac{\ell}{\beta}\eqref{eqn:totally-ordered1}+\eqref{eqn:totally-ordered2}+\eqref{eqn:totally-ordered3}+\eqref{eqn:totally-ordered4}$,
\begin{equation}\label{eqn:util-difference-totally-ordered}
u_{\ell q}(v)-u_{q}(v)\geq (1-\beta)\rev_p(v)-\frac{\ell}{\beta}\rev_{\beta p}(v).
\end{equation}


By applying Lemma~\ref{lem:CTT19_lem31} to \eqref{eqn:util-difference-totally-ordered}, there exists a random scaling factor $\alpha$, such that 
\begin{equation}\label{eqn:totally-ordered5}
\rev_{\alpha q}(v)\geq\frac{u_{\ell q}(v)-u_{q}(v)}{\ln (1/\ell)}\geq\frac{1}{\ln(1/\ell)}\left((1-\beta)\rev_p(v)-\frac{\ell}{\beta}\rev_{\beta p}(v)\right).
\end{equation}
Since $p$ is the optimal buy-many mechanism, it achieves higher revenue than $\ell p$, which means $\E_{v\sim D}[\rev_p(v)]\geq\E_{v\sim D}[\rev_{\ell p}(v)]$. Taking the expectation over $v$ on both sides of \eqref{eqn:totally-ordered5}, we have
\begin{eqnarray*}
\rev_{\alpha q}&=&\E_{v\sim D}[\rev_{\alpha q}(v)]\\
&\geq&\frac{1}{\ln(1/\ell)}\left((1-\beta)\rev_p-\frac{\ell}{\beta}\rev_{\beta p}\right)
\geq \frac{1}{\ln(1/\ell)}\left(1-\beta-\frac{\ell}{\beta}\right)\rev_p.
\end{eqnarray*} 
Take $\ell=0.03485$ and $\beta=0.18668$, we have $\rev_{\alpha q}\geq 0.18668 \rev_p$. Since $\alpha q$ is a (randomized) item pricing, thus there exists an item pricing that gives a constant $1/0.18668<5.4$-approximation to the revenue obtained by the optimal buy-many mechanism.

\end{proof}

\subsection{Hardness of computing the optimal item pricing}
\label{sec:np-hardness}

In this section, we show that it's strongly NP-hard to compute the optimal revenue that can be obtained by item pricing mechanisms, which means that there exists no FPTAS for the problem unless \textsc{P}=\textsc{NP}. 
Let $\orderedip$ denote the following problem: For a unit-demand buyer with ordered valuation over $n$ items, let $D$ be the value distribution with support size $m$. Compute the optimal revenue obtained by item pricing.

\begin{theorem}\label{thm:totally-ordered-nphard}
$\orderedip$ is strongly NP-hard, even when each realized buyer has at most three distinct item values.
\end{theorem}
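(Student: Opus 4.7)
The plan is to reduce from a strongly NP-hard problem whose natural combinatorial structure matches the ``at most three distinct values'' restriction. My choice is a decision version of 3-SAT (or equivalently Max-E3SAT), since the three literals of each clause align with the three values per buyer type; because 3-SAT has no numeric inputs, the reduced instance will use only polynomially bounded integer values, giving strong NP-hardness ``for free''.

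First, I would set up the construction. Given a 3-CNF formula $\varphi$ over variables $x_1,\dots,x_n$ with clauses $C_1,\dots,C_m$, I would create a totally ordered item set by placing, for each variable $x_i$, two dedicated items $T_i, F_i$ with $T_i$ preceding $F_i$, possibly interleaved with buffer items of large index to separate variable gadgets. A ``variable gadget'' buyer (having just two distinct values, low on $T_i$ and high on $F_i$, with zeros outside) would be included so that in any revenue-optimal pricing the pair $(p_{T_i}, p_{F_i})$ takes one of exactly two configurations, which I interpret as the truth value of $x_i$. For each clause $C_j=(\ell_{j,1}\vee \ell_{j,2}\vee \ell_{j,3})$ I would add one ``clause buyer'' with at most three distinct values, placed at the three items corresponding to the three literals, calibrated so the buyer strictly prefers (and pays) a fixed bonus amount exactly when at least one of the three literals is set to its ``cheap'' configuration by the variable gadgets.

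Next, I would prove a normalization lemma, analogous to Lemma~\ref{lem:two-item-value-lemma}, showing that an optimal item pricing can be assumed to use prices from a polynomially sized set of integers or rationals. This step is essential for strong NP-hardness because it confines the search over pricings to a discrete polynomially bounded lattice and makes the ``bonus'' payments of clause buyers well-defined. With this normalization in hand, the correspondence is established in two directions: a satisfying assignment yields a pricing collecting a base revenue from the variable gadgets plus the full clause-buyer bonus from all $m$ clauses; conversely, any pricing achieving the threshold $T^\ast + m$ must (by the normalization and the variable-gadget incentives) induce a consistent truth assignment that satisfies all clauses.

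The main obstacle will be making the variable gadgets robust against the global monotonicity constraint $p_1\le p_2\le \cdots \le p_n$. Because prices cannot decrease, a naively constructed clause buyer could be ``cheated'' by some unintended item among the items between the three literal positions, or by the price differences leaking from one variable gadget into another. To cope with this, I would space the variable gadgets far apart in the ordering, use scaled values (still polynomial) so that the $T_i/F_i$ prices of different variables live on distinguishable price scales, and restrict clause-buyer values to exactly the prices forced by the variable gadgets so that only the three ``intended'' items among $\{T_{i},F_{i}\}$ can be the utility-maximizing purchase. A case analysis over all possible buyer choices (buy nothing, buy one of the three literal items, buy a buffer item) would then close the argument and establish the strong NP-hardness claim.
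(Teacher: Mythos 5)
Your proposal reduces from 3-SAT, whereas the paper reduces from \textsc{Max-Cut}, and the difference is not cosmetic: the intuition that ``three literals align with three distinct values'' does not survive the totally-ordered structure. Recall that for a unit-demand buyer with monotone values $v_1\le\cdots\le v_n$ and at most three distinct values $a<b<c$, there are only two transition indices, say $i_b$ and $i_c$. Under any monotone item pricing the buyer's only candidate purchases are item $1$ (the globally cheapest item with value $a$), item $i_b$, and item $i_c$. The first candidate is the \emph{same} item for every buyer type, so each three-valued buyer effectively encodes a binary interaction between two gadget-controlled positions -- exactly the pairwise structure of a graph edge, which is why \textsc{Max-Cut} is the natural source problem here. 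A clause gadget for $(\ell_1\vee\ell_2\vee\ell_3)$ needs the buyer to interact with three \emph{clause-specific} positions, which would force four distinct values and violate the stated constraint. You would have to either drop down to at most two literals (losing NP-hardness) or split each clause into pairwise constraints, and you describe no such intermediate reduction.

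A second gap is the clause gadget's revenue profile. You want ``pay a fixed bonus exactly when at least one literal is in the cheap configuration.'' With three literals, the optimal pricing can realize $2^3=8$ price combinations, and you need the aggregate clause-buyer revenue to be a threshold-type function over these eight outcomes ($1$ low entry, $7$ high entries). The paper's edge gadget already requires a nontrivial argument here: it introduces four buyer types, writes down a $4\times 4$ revenue matrix $A$, and explicitly solves $A\mathbf{z}=(R,R,R+n^{-10},R+n^{-10})^T$ while verifying $\mathbf{z}\ge 0$. You would need an $8\times 8$ analogue and a proof that the target vector is in the nonnegative cone of the resulting matrix, which you do not supply and which is not obviously true. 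The paper's choice of \textsc{Max-Cut} keeps this linear system small and solvable; your source problem makes it strictly harder.

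The parts of your outline that do survive are the high-level plan (reduce from a combinatorial problem with no large numbers to get strong NP-hardness for free), the normalization lemma restricting prices to a small integral set, and the recognition that monotonicity forces the gadgets to be insulated by large price gaps and carefully tiered probabilities. These match the paper's Properties~1--3. But the heart of the reduction -- what a three-valued buyer can actually express and how clause satisfaction is read off from prices -- is where the argument breaks.
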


\begin{proof}[Proof Sketch]
We prove the theorem via a reduction from \textsc{Max-Cut}. 

For any \textsc{Max-Cut} instance with graph $G(V,E)$, let $n=|V|>180$ be large enough. Consider an instance of $\orderedip$ with $n+1$ items. For convenience, we assume that each node in $V$ also has an index in $[n]$. We want the following properties of the optimal item pricing for the instance:
\begin{enumerate}
	\item The optimal item pricing has integral item prices for each item;
	\item $p_{n+1}=6n$, and there is a set of buyer types purchasing item $n+1$ with realization probability $q_1=0.9$ that do not depend on the structure of the graph and contribute $R_1(n)$ to the total revenue;
	\item $p_{i}=3i-1$ or $3i-2$, and there is a set of buyer types purchasing items in $[n]$ with realization probability $q_2<\frac{1}{12n}q_1$ that do not depend on the structure of the graph and contribute $R_2(n)$ to the total revenue;
	\item In addition to all previous buyers, for each $(i,j)\in E$ with $i<j$, there exists a set $T_{ij}$ of buyer types, and real number $R_{ij}(n)>0$ that is irrelevant to the graph structure such that: if $p_j-p_i=3(j-i)$, then the revenue contribution from $T_{ij}$ is $R_{ij}(n)$; if $p_j-p_i\neq 3(j-i)$, then the revenue contribution from $T_{ij}$ is $R_{ij}(n)+\frac{1}{n^{10}}$. The realization probability of any buyer type is polynomially bounded by $n$ (at least $poly(n^{-1})$).
\end{enumerate}

The construction of the instance is deferred to Section~\ref{sec:np-hardness-proof}. Here we show the strongly NP-hardness of $\orderedip$ for an instance with above properties. This proves the claim of the Theorem.

Given an instance with such properties, we can calculate the revenue of the optimal item pricing for the instance. The total revenue contributed from buyer types from Property 2 and 3 is $R_1(n)+R_2(n)$. For any cut $C=(V_1,V\setminus V_1)$, if each item $i$ in $V_1$ is priced $3i-1$, while each item $j$ in $V\setminus V_1$ is priced $3j-2$, the total revenue contributed from buyer types from Property 4 is $\sum_{(i,j)\in E}R_{ij}(n)+\frac{1}{n^{10}}|C|$. Thus, for a graph $G(V,E)$ with maximum cut $c_{\max}$, the corresponding instance of $\orderedip$ has maximum revenue 
\begin{equation*}
h(G)=R_1(n)+R_2(n)+\sum_{(i,j)\in E}R_{ij}(n)+\frac{1}{n^{10}}c_{\max}.
\end{equation*} 
This builds a bijection between the maximum cut of $G$, and the optimal item pricing revenue of the $\orderedip$ instance constructed from $G$. Since all inputs for the $\orderedip$ instance are polynomially bounded, we know that $\orderedip$ is strongly NP-hard from the APX-hardness of \textsc{Max-Cut} (see Eg. \cite{haastad2001some,trevisan2000gadgets}).

\end{proof}

\subsection{A PTAS for computing a near-optimal item pricing}
\label{sec:totally-ordered-ptas}

We complement the strongly NP-hardness result by developing a PTAS algorithm for computing the optimal item pricing for a unit-demand buyer with totally-ordered values. We present a detailed proof sketch here and complete details are deferred to Appendix~\ref{sec:proof-totally-ordered-ptas}.

\begin{theorem}\label{thm:totally-ordered-algorithm}
For a unit-demand buyer with totally ordered item values, there exists an algorithm running in $poly(m,n^{poly(1/\eps)},\log\range)$ time that computes an item pricing that is $(1+\eps)$-approximation in revenue to the optimal item pricing.\footnote{The dependency on $\log\range$ in the running time can actually be removed due to Lemma~\ref{lem: support-size}.}
\end{theorem}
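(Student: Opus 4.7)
My plan is a discretize-and-DP approach driven by the ``additive-over-upgrades'' reformulation previewed in the introduction. Throughout I assume WLOG that the optimal pricing is monotone, $p_1 \le p_2 \le \cdots \le p_n$, since for a unit-demand buyer with $v_i \le v_{i+1}$ any violation of monotonicity can be fixed by lowering the out-of-order price without losing any buyer's purchase or revenue. I then reparameterize the pricing by its upgrade increments $\delta_j := p_j - p_{j-1} \ge 0$; writing $v_i - p_i = \sum_{j\le i}(v_j - v_{j-1} - \delta_j)$ with $v_0 = 0$, a buyer of type $v$ effectively walks along the upgrade sequence and stops at the index $i^* = \argmax_i (v_i - p_i)$ whenever this maximum is nonnegative.

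The next step is discretization. A standard randomized-scaling argument in the spirit of Lemma~\ref{lem:CTT19_lem31} shows that multiplying the optimal prices by a random factor $\alpha \in [1/(1+\eta), 1]$ drawn with scale-invariant density $\propto 1/\alpha$, and then rounding to the geometric grid $\{0\} \cup \{(1+\eta)^k\}_{k\ge 0}$, costs only a $(1+O(\eta))$ factor in expected revenue, for $\eta = \Theta(\epsilon)$. This restricts attention to monotone pricings taking values in a grid of size $K = O(\log \range / \eta)$.

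The heart of the argument is a dynamic program that searches over these discretized pricings. I would partition the $K$ grid levels into $B = O(\epsilon \log \range)$ consecutive ``blocks,'' each spanning $\Theta(1/\epsilon)$ powers of $(1+\eta)$. The DP advances through the items in order; a state records the current item index $i$, the block $b$ in which $p_i$ lies, and a ``local profile'' $\sigma_b$ recording the exact grid levels already assigned to items whose price lies in block $b$. A transition either extends $\sigma_b$ by one (placing $p_{i+1}$ in block $b$) or opens a strictly later block $b' > b$ and starts a fresh profile $\sigma_{b'}$. When the DP fixes the price of an item, it simultaneously credits the expected revenue from all buyer types whose preferred item sits at that position, computing this from $\sigma_b$ together with a constant-size summary of the other blocks.

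The main obstacle is showing that this block-decoupling is almost lossless, i.e.\ that the choice made by a buyer whose optimum lies in block $b$ is approximately independent of the within-block fine structure of the other blocks. The plan is to exploit the $(1+\eta)^{\Theta(1/\epsilon)}$ multiplicative gap between blocks: prices in different blocks differ by a factor so large that perturbing the fine structure of one block by a $(1+\eta)$ factor perturbs any buyer's realized utility by at most a $(1+O(\epsilon))$ factor, and the aggregate error across all blocks telescopes to a $(1+O(\epsilon))$ revenue loss. With this accounting in hand, each profile $\sigma_b$ has length $O(1/\epsilon)$ over an alphabet of size $O(1/\epsilon)$, the DP has $n \cdot B \cdot n^{O(1/\epsilon)}$ states, and each transition can be evaluated in $\poly(m,n)$ time, giving the claimed running time $\poly(m, n^{\poly(1/\epsilon)}, \log \range)$. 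As noted in the footnote, the $\log \range$ factor is subsequently removed by first applying Lemma~\ref{lem: support-size} to bound the effective value range by $\poly(m, n, 1/\epsilon)$ before discretization.
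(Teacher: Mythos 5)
Your high-level plan --- monotone prices, upgrade reparameterization, geometric grid, block DP of width $n^{\poly(1/\eps)}$ --- matches the structure of the paper's proof, but the crucial decoupling step is missing, and as written I don't think it can be repaired without adding the same machinery the paper uses.

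The specific gap: you partition the grid into \emph{consecutive} blocks and claim that ``prices in different blocks differ by a factor so large that perturbing the fine structure of one block \ldots{} perturbs any buyer's realized utility by at most a $(1+O(\eps))$ factor.'' But consecutive blocks have \emph{no} multiplicative gap between them --- the top price in block $b$ and the bottom price in block $b+1$ differ by a single factor of $(1+\eta)$, so a buyer whose $\argmax_i(v_i - p_i)$ lies near a block boundary makes a genuinely close comparison that depends on the exact within-block levels on both sides. The paper resolves this by an explicit gap-creation trick (its Step 3): it partitions the power-of-$(1+\eps^2)$ grid into $1/\eps$ disjoint congruence classes $V_\ell$, deletes the class with the smallest revenue contribution (at most an $\eps$ fraction), and raises any price landing in $V_\ell$ to the next allowed level. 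Only after this deletion do the resulting intervals enjoy a $(1+\eps^2)^{\gamma} > 1/\eps^2$ multiplicative separation, which is what actually makes the cross-interval interaction negligible.

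A second, related issue: even with gaps, the decoupling does not follow merely from ``the buyer walks along the upgrade sequence and stops.'' The argmax over $i$ is a global statement, and the additive value $\vadd$ does not by itself reproduce the unit-demand buyer's revenue under an item pricing ($\rev_q(\vadd) = \sum_i q_i \Pr[\vadd_i \ge q_i]$ is completely unrelated to $\rev_q(v)$). The paper threads this needle by passing through a \emph{prefix pricing} $q''$, for which $v$ and $\vadd$ provably behave identically, and then an \emph{interval prefix pricing}, for which the additive buyer's choice within each interval is exactly independent of the other intervals. The transition between these two uses Lemma~\ref{lem:Nisan} together with the interval gap to argue that their prices are pointwise within a $(1+O(\eps^2))$ factor, hence a small scaling of one has nearly the revenue of the other. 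Your ``constant-size summary of the other blocks'' stands in for this entire chain of reformulations, but there is no such summary for a unit-demand buyer facing a plain item pricing: what matters is the buyer's best achievable utility \emph{outside} the block, which is a buyer-dependent, price-dependent real number, not a bounded piece of state. Without the prefix/interval-prefix/Lemma-Nisan argument, the DP's revenue credits are not justified.

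In short, you have correctly identified the skeleton (discretize, reparameterize additively, block-structured DP), but the two load-bearing pieces --- deleting an $\eps$-fraction price class to create genuine multiplicative gaps, and the prefix-pricing chain that converts ``approximately independent'' into ``exactly independent for an additive buyer at only an $O(\eps)$ revenue loss'' --- are absent, and they are precisely what makes the decoupling claim true rather than hopeful.
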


\begin{proof}[Proof Sketch]

We prove the theorem in several steps.
\begin{enumerate}
\item There exists a near optimal item pricing where all prices are non-decreasing powers of $(1+\eps^2)$. Specifically let $\Pi=\{(1+\eps^2)^r | r\in\Z\}\cup\{0\}$. Then for all item pricings $p$, there exists $q^{(1)}\in\Pi^n$, such that for all value functions $v$,
  \begin{align*}
    \rev_{q^{(1)}}(v) \ge (1-O(\eps))\rev_p(v). 
  \end{align*}
\item At a small loss in revenue, we can restrict prices to lie in a small set. In particular, for all value distributions $\dist$ with value range $\range$, there exists an efficiently computable set $\Pi^*\subset\Pi$ with $|\Pi^*|=\poly(1/\eps,\log \range)$ such that for all item pricings $q^{(1)}\in \Pi^n$, there exists an item pricing $q^{(2)}\in {\Pi^*}^n$ satisfying
  \begin{align*}
    \rev_{q^{(2)}}(\dist) \ge (1-O(\eps))\rev_{q^{(1)}}(\dist). 
  \end{align*}
\item Next, we define for each unit demand buyer an additive value function that closely mimics it. For a unit-demand value function $v$, define $\vadd_i=v_i-v_{i-1}$ and let $\vadd$ be the value function that assigns to a set $S\subseteq [n]$ the value $\sum_{i\in S} \vadd_i$.

  We also define a new kind of pricing that we will call an {\em interval prefix pricing}. Given a partition of the $n$ items into $t$ intervals, $I_{i_0,i_1}$, $I_{i_1,i_2}$, $\cdots$, $I_{i_{t-1},i_{t}}$ with $i_0=0$ and $i_t=n$, an interval prefix pricing $q$ is a menu with $n$ options; the $j$th option allocates the set $I_{i_\ell,j}=\{i_\ell+1, i_\ell+2, \cdots, j\}$ at price $q_j$ where $i_\ell< j\le i_{\ell+1}$.

  We furthermore say that an interval prefix pricing $q$ satisfies {\em price gap} $(\gamma,\delta)$ if (1) menu options corresponding to different intervals are priced multiplicatively apart: for all $i$, $j$, and $\ell$ with $i\le i_\ell<j$, $q_j\ge (1+\eps^2)^\gamma q_i$, (2) and, menu options corresponding to any single interval are priced multiplicatively close to each other: for all $i$, $j$, and $\ell$ with $i_\ell<i<j\le i_{\ell+1}$, $q_j\le (1+\eps^2)^\delta q_i$.
  
  We show that for value distribution $\dist$ and its corresponding additive value distribution $\dist^{\oplus}$ and item pricing $q^{(2)}\in {\Pi^*}^n$, there exists an efficiently computable set $\Pi'$ with $|\Pi'|=|\Pi^*|$ and an interval prefix pricing $q^{(3)}$ with $q^{(3)}_i\in \Pi'$ for all $i\in [n]$ and price gap $(\frac{1}{\eps^2}\ln\frac{1}{\eps^2},\frac{1}{\eps^3}\ln\frac{1}{\eps^2})$, such that
  \begin{align*}
    \rev_{q^{(3)}}(\dist^{\oplus}) \ge (1-O(\eps))\rev_{q^{(2)}}(\dist).
  \end{align*}

	The converse also holds: for every unit demand value function $v$ and interval prefix pricing $q$ with price gap $(\frac{1}{\eps^2}\ln\frac{1}{\eps^2},\frac{1}{\eps^3}\ln\frac{1}{\eps^2})$, we can efficiently compute an item pricing $q^{(4)}$ such that
  \begin{align*}
    \rev_{q^{(4)}}(v) \ge (1-O(\eps)) \rev_{q}(\vadd).
  \end{align*}

\item Finally, we show that for any distribution over additive values $\vadd$ and any set $\Pi'\subset \Pi$ an optimal interval prefix pricing $q$, with $q_i\in \Pi'$ for all $i\in [n]$ and price gap $(\frac{1}{\eps^2}\ln\frac{1}{\eps^2},\frac{1}{\eps^3}\ln\frac{1}{\eps^2})$, can be found in time polynomial in $|\Pi'|$, $n^{poly(1/\eps)}$, and $m$.
  
\end{enumerate}

The algorithm can now be described as follows. Given the distribution $\dist$ we first compute the additive valuation $\vadd$ for every buyer type $v$ in the support of $\dist$. We also compute the set $\Pi'$ of allowable prices. By Step 4, we can efficiently compute the optimal interval prefix pricing $q$ with price gap $(\frac{1}{\eps^2}\ln\frac{1}{\eps^2},\frac{1}{\eps^3}\ln\frac{1}{\eps^2})$ for the distribution $\dist^{\oplus}$ over the additive values $\vadd$, such that all item prices are in $\Pi'$. Then, by Step 3, we can efficiently compute an item pricing $q^{(4)}$ with $\rev_{q^{(4)}}(v) \ge (1-O(\eps)) \rev_{q}(\vadd)$. We return the pricing $q^{(4)}$, which satisfies:
\begin{eqnarray*}
\rev_{q^{(4)}}(\dist)&\geq&(1-O(\eps))\rev_{q}(\dist^{\oplus})\geq(1-O(\eps))\rev_{q^{(3)}}(\dist^{\oplus})\geq (1-O(\eps))\rev_{q^{(2)}}(\dist)\\
&\geq& (1-O(\eps))\rev_{q^{(1)}}(\dist)\geq (1-O(\eps))\rev_{p}(\dist)
\end{eqnarray*}
where $p$ is the optimal item pricing and $q$ is the optimal interval prefix pricing for $\dist^{\oplus}$. The four inequalities follow from Step 3, Step 3, Step 2, Step 1 respectively. $q^{(4)}$ can be found in $poly(m,n^{poly(1/\eps)},|\Pi'|)=poly(m,\log\range,n^{poly(1/\eps)})$ time. 

Now we briefly elaborate on each step. A more detailed discussion is deferred to Appendix~\ref{sec:proof-totally-ordered-ptas}.

The first two steps are standard. Step 1 follows from Lemma~\ref{lem:Nisan} and the fact that any item pricing is $(1+\eps^2)$-approximated by a power-of-$(1+\eps^2)$ pricing. Step 2 follows from the fact that when the positive item values are bounded in range $[1,\range]$, then there exists an item pricing with all item prices in range $[\eps^2,\range]$ with $(1-O(\eps))$-fraction of the optimal revenue. 

To prove Step 3, we first show that there exists an interval partition of the items and a corresponding item pricing $q$ with price gap $(\gamma,\delta)=(\frac{1}{\eps^2}\ln\frac{1}{\eps^2},\frac{1}{\eps^3}\ln\frac{1}{\eps^2})$, that achieves $(1-O(\eps))$-fraction of the optimal item pricing revenue. For any unit-demand buyer $v$, this is equivalent to a prefix-pricing $q'$ that sells prefixes $\{1,2,\cdots,i\}$ at price $q_i$. For the corresponding additive buyer $\vadd$ of $v$, they have the same behavior under $q'$. Also observe for the interval prefix pricing $q''$ that is defined by $q''_i=q_i$, the price of any set $S$ under $q''$ is almost the same as the price under $q'$, since the item prices from different intervals are at least $(1+\eps^2)^{\gamma}>\frac{1}{\eps^2}$ apart; the predominant part of $q''(S)$ comes from the interval $I_{\ell}$ with the largest index that intersects $S$, which is exactly $q'(S)$. Thus $q''(S)$ and $q'(S)$ are very close to each other, which means a scaling of $q''$ achieves a revenue close to that of $q'$, and vice versa.

Step 4 can be solved via a dynamic program. To compute the optimal interval prefix pricing with price gap $(\gamma,\delta)=(\frac{1}{\eps^2}\ln\frac{1}{\eps^2},\frac{1}{\eps^3}\ln\frac{1}{\eps^2})$ for $\vadd$ with all item prices in set $\Pi'$, since the buyer is additive, the revenue contribution from each interval can be calculated separately, without worrying about the incentive of the buyer. In each interval, since the prices are only $(1+\eps^2)^{\delta}$ apart, there are only $n^{\poly(1/\eps)}$ different item prices in an interval, and we can enumerate all pricings within the interval efficiently. We only need to make sure that the prices in all intervals are monotone and satisfy the $(\gamma,\delta)$ price gap, and this can be done via a dynamic program. 

\end{proof}

\subsection{Discussion on additive buyers}
\label{sec:totally-ordered-additive}

In Section~\ref{sec:totally-ordered-approx}, we showed that for a unit-demand buyer with totally ordered item values, item pricing gives a constant approximation to the optimal buy-many revenue. We want to investigate whether such nice properties generalizes to other class of valuation functions. For example, for an additive buyer with totally-ordered item values, can item pricings achieve a constant fraction of the optimal buy-many revenue? Unfortunately, we give a negative answer to the question through the following theorem.

\begin{theorem}\label{thm:totally-ordered-approx-additive}
	For any additive buyer with totally ordered value for all items, item pricing cannot achieve an approximation ratio $o(\log\log n)$ to the optimal (deterministic) buy-many mechanism.
\end{theorem}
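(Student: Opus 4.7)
Proof proposal.

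The plan is to exhibit an explicit family of distributions over totally-ordered additive valuations on $n$ items for which the optimal deterministic buy-many mechanism extracts $\Omega(\log\log n)$ times more revenue than any item pricing. My plan is to partition items into $K = \lfloor \log_2 \log_2 n \rfloor$ contiguous layers $L_1 \prec L_2 \prec \cdots \prec L_K$ of geometrically decreasing sizes, e.g.\ $|L_s| = n/2^s$. The distribution will be supported on buyer types (either $K$ types for a simple construction, or polynomially many types for a refined one) whose valuations encode a combinatorial structure that is hard for any ``item-by-item'' revenue extraction: each type has a distinguished preferred bundle that a deterministic menu can price at close to its full value, yet the preferred bundles of different types differ across layers in a way that no single pricing on individual items can exploit simultaneously.

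For the buy-many revenue lower bound, I would construct a deterministic menu consisting of one bundle per type together with a few auxiliary ``separating'' bundles that enforce the buy-many incentive constraint. The bundles will not be nested prefixes of items but will mix items across several layers at carefully chosen densities. The prices will be chosen so that each type pays close to its full valuation on its designated bundle, no type prefers any other menu option, and no type benefits by combining two cheaper options to synthesize a better deal. If each type pays $\Theta(R)$ on its bundle and there are $K$ equally likely types, the total buy-many revenue is $\Theta(R \cdot K) = \Theta(R \log\log n)$ for a suitable scale $R$.

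For the item pricing upper bound, I would use a charging argument. For any item pricing $q$ and any item $i \in L_s$ priced at $q_i$, the revenue from item $i$ equals $q_i$ times the probability that a random buyer type has value at least $q_i$ on $i$. Because the value profile $v^{(t)}_i$ is (by construction) non-decreasing in $t$ and saturates at a scale that depends on the layer $s(i)$, this quantity is bounded by a constant, and summing over the $n$ items yields total item-pricing revenue $O(R)$, giving the claimed $\Omega(\log\log n)$ gap.

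The main obstacle is the construction itself. Natural candidate valuations such as $v^{(t)}_i = 2^{\min(s(i),t)}$ with $K$ equally likely types already collapse to a zero gap: a direct calculation shows that the natural prefix-layer buy-many menu (offering layer $L_t$ at its aggregate value $n$ for each $t$) is revenue-equivalent to the item pricing $q_i = 2^{s(i)}$, both yielding $\Theta(nK)$. This phenomenon is a consequence of the totally-ordered additive structure, in which every bundle's value to a type is a sum of non-decreasing per-item contributions and larger suffix-like bundles weakly dominate smaller ones. To surpass item pricing by a $\log\log n$ factor, the valuations must introduce correlations across layers that can only be exploited by non-prefix bundles, and the buy-many menu must verify a global no-arbitrage condition (no combination of two cheaper menu options reproduces a desired bundle's allocation at a lower price) that is simultaneously compatible with the individual rationality of each type. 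Carrying out this design, and verifying both the incentive compatibility of the resulting menu and the matching upper bound on item pricing, is the technical heart of the proof.
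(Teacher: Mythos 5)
Your proposal correctly identifies the key obstacle — that naive layered constructions over totally-ordered additive valuations collapse, with prefix menus and layer-price item pricings becoming revenue-equivalent — but it does not overcome it. The sketch ends by deferring the actual construction (``carrying out this design\ldots is the technical heart of the proof''), which is not a detail to be filled in later; it is the entire content of the theorem. As written there is no distribution, no menu, no verified incentive constraints, and no item-pricing upper bound, only a list of properties such objects would need. The charging argument you gesture at for the item-pricing side also depends on structural features of per-item values (monotonicity in the type index, saturation at a layer-dependent scale) that your proposal never pins down. So the proof has a genuine gap: the construction is missing.

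The paper avoids building a hard instance from scratch by reducing from the known $\Omega(\log n)$ gap for an \emph{unordered} additive buyer (Theorem 4.4 of \cite{chawla2019buy}). Starting from that hard distribution $\dist_0$ on $n$ items, it shifts every item value up by $1/n$ and then ``splits'' item $j$ into $M^{n-j}$ identical copies with per-copy value scaled by $M^{j-n}$ (taking $M=n^2$), producing a distribution $\dist_2$ on $N=\Theta(n^{2n-2})$ items. The shift plus the geometric splitting scale forces $v''_i\le v''_{i+1}$ across all boundaries, so $\dist_2$ is totally ordered. One then shows $\srev(\dist_2)\le\srev(\dist_0)+1$ (an item pricing on $\dist_2$ aggregates to one on $\dist_0$) and $\subadditiverev(\dist_2)\ge\subadditiverev(\dist_0)$ (a subadditive pricing on $\dist_0$ lifts to $\dist_2$). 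Since $\log\log N=\Theta(\log n)$, an $o(\log\log N)$ item-pricing approximation on $\dist_2$ would yield an $o(\log n)$ approximation on $\dist_0$, a contradiction. The ingredient you are missing is precisely this exponential blow-up in the number of items: the $\log\log$ in the statement arises because simulating an unordered instance with a totally-ordered one costs an exponential number of items, not from any intrinsic $K=\log\log n$ layering of a single instance. If you insist on a direct construction, you would essentially have to re-derive the unordered hard instance and then perform the same splitting, so the reduction is both simpler and the natural path.
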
 
The proof uses a reduction from an additive buyer with unordered item values, which is known to have $\Omega(\log n)$ revenue gap between the optimal item pricing and the optimal deterministic buy-many mechanism \cite{chawla2019buy}. See Section~\ref{sec:proof-totally-ordered-additive} for the complete proof.
\section{Item Pricing in the Partially-Ordered Setting}
\label{sec:partially-ordered}

In this section, we generalize the results for a unit-demand buyer in the totally-ordered setting to a general-valued buyer in the partially-ordered setting. We first show that item pricing gives an $O(\log k)$-approximation in revenue to the optimal buy-many mechanism, where $k$ is the width of the ordered set of items. Then we provide a PTAS algorithm for finding a near-optimal item pricing when $k$ is a constant. This way, we show that the width $k$ of the ordered set is the key parameter in both the performance and the computational complexity of the item pricing mechanisms.

\subsection{Item pricing gives an \texorpdfstring{$O(\log k)$}{} approximation in revenue to the optimal buy-many mechanism}
\label{sec:partially-ordered-logk}

For a general-valued buyer, \cite{chawla2019buy} shows that item pricing gives an $O(\log n)$ approximation in revenue to the optimal buy-many mechanism. Here we improve the approximation ratio to $O(\log k)$, which is only related to the width of the partially ordered item set.

\begin{theorem}\label{thm:k-width-general-approx}
For any buyer with partially ordered values for all items with width $k$, item pricing gives an $O(\log k)$ approximation in revenue to the optimal buy-many mechanism. 
\end{theorem}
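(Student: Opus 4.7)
The plan is to generalize the proof of Theorem~\ref{thm:totally-ordered-approx} to partially ordered items of width $k$. Let $p$ denote the optimal buy-many mechanism. I would define an item pricing $q$ by setting
\begin{equation*}
q_i=\min\left\{p(x):x\in\Delta(2^{[n]}),\ \Pr_{T\sim x}[\exists j\in T\text{ with } i\preceq j]=1\right\},
\end{equation*}
the cheapest price in $p$'s menu among lotteries that guarantee an item dominating $i$. When the partial order is total this reduces to the definition in Section~\ref{sec:totally-ordered-approx}. The goal is to show that a random scaling $\alpha q$ of this item pricing already captures an $\Omega(1/\log k)$ fraction of the optimal buy-many revenue.

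For each buyer type $v$, let $\lambda$ denote the lottery that the buyer purchases under $p$. Since dominated items contribute nothing to $v$, we may restrict attention to outcomes $T\subseteq[n]$ that are antichains of size at most $k$. The key geometric ingredient is to extract from $\lambda$ an antichain $A\subseteq[n]$ with $|A|\le k$ satisfying (a) for every $i\in A$, $q_i\le p(\lambda)$, and (b) every item appearing in any $T$ in the support of $\lambda$ is dominated by some item of $A$, so by monotonicity together with the dominance axiom, $v(A)\ge v(T)$ for every such $T$ and hence $v(A)\ge v(\lambda)$. Neither the maximal elements of $\bigcap_T T$ (which satisfies (a) but not (b)) nor those of $\bigcup_T T$ (which satisfies (b) but not (a)) work directly; my preferred approach is to leverage the buy-many nature of $p$ by replacing $\lambda$ with an adaptive strategy that outputs a single fixed antichain $A$ of size $\le k$ with probability $1$ at expected cost $O(p(\lambda))$, where the width-$k$ bound controls the number of adaptive steps required.

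Once $A$ is in hand, I would mimic the four-inequality argument of Theorem~\ref{thm:totally-ordered-approx}. Writing $\lambda'$ for the lottery purchased under $\beta p$ and applying (a) to its support, we get $\rev_{\beta p}(v)=\beta p(\lambda')\ge (\beta/k)\sum_{i\in A}q_i$; by (b) and Lemma~\ref{lem:util-lb}, $v(A)\ge u_{\beta p}(v)\ge u_p(v)+(1-\beta)\rev_p(v)$; since the buyer may purchase every item of $A$ individually under $\ell q$, $u_{\ell q}(v)\ge v(A)-\ell\sum_{i\in A}q_i$; and finally $u_p(v)\ge u_q(v)$ since every set's $q$-price is at most its $p$-price. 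Combining these and using (a) to control $\sum_{i\in A}q_i\le k\,p(\lambda')$ yields
\begin{equation*}
u_{\ell q}(v)-u_q(v)\ge(1-\beta)\rev_p(v)-\frac{\ell k}{\beta}\rev_{\beta p}(v).
\end{equation*}
Applying Lemma~\ref{lem:CTT19_lem31} with $h=1$ and $\ell=\Theta(1/k)$, and choosing $\beta$ a small constant, produces a random scaling $\alpha\in[\ell,1]$ with $\E_\alpha[\rev_{\alpha q}(v)]=\Omega(\rev_p(v)/\log k)$; taking expectation over $v$ and using optimality of $p$ (so $\rev_{\beta p}\le\rev_p$) completes the $O(\log k)$ approximation.

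The hard part will be the antichain-extraction step, where (a) and (b) are in tension: (a) wants items ``low'' in the partial order so that the lottery consistently delivers a dominator, while (b) wants items ``high'' so that they dominate everything in the support. A single snapshot of $\lambda$ does not balance these, and the fix via adaptive repeated purchasing of $p$ requires careful buy-many accounting to ensure the effective cost of each $q_i$ remains at most $O(p(\lambda))$. A plausible fallback is a Dilworth decomposition of the partial order into $k$ chains: apply the totally-ordered argument of Theorem~\ref{thm:totally-ordered-approx} within each chain, and combine across chains by integrating over scalings on a range of size $\Theta(k)$ via Lemma~\ref{lem:CTT19_lem31}, which is where the logarithmic factor naturally emerges.
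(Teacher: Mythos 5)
Your high-level plan is aligned with the paper: define $q_i$ as the cheapest way of obtaining (a dominator of) item $i$ in $p$, run the four-inequality scaling argument, and invoke Lemma~\ref{lem:CTT19_lem31} at a scaling range of size $\mathrm{poly}(k)$ so that the $\ln(1/\ell)$ term yields the $O(\log k)$ factor. The gap is exactly in what you flag as ``the hard part'': the width-$k$ analogue of Lemma~\ref{lem:totally-ordered-lem}. You do not actually resolve it, and the two candidate resolutions you offer are either in tension or do not obviously work.

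Your conditions (a) and (b) on the antichain $A$ are stronger than what the argument needs, and (b) is what gets you into trouble. The paper never asks for $v(A)\ge v(\lambda)$ or for $A$ to dominate every outcome in the support. Instead, Lemma~\ref{lem:k-width-general-lem} proves a weaker statement that suffices: for any allocation $x$, \emph{some} set $T$ in the support of $x$ satisfies $q(T)\le k^2\,p(x)$. The proof is a short union-bound: let $S$ be the items $i$ for which a draw from $x$ contains a dominator of $i$ with probability $<1/k$; take an antichain $S'$ of size at most $k$ at the bottom of $S$; by the union bound, some draw $T$ contains no dominator of any element of $S'$, hence $T\cap S=\emptyset$, so every $i\in T$ has $q_i\le k\,p(x)$ and $|T|\le k$ gives $q(T)\le k^2 p(x)$. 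This $T$ is taken from the support of the lottery $\lambda'$ bought under $\beta p$, and then Lemma~\ref{lem:util-lb} applied to $\beta p$ already gives $v(T)\ge u_{\beta p}(v)$ directly---no dominating antichain is required, and the tension you describe between ``low'' and ``high'' items never arises. The parameters are then $\beta=\tfrac12$, $\ell=\tfrac1{8k^2}$, giving $\ln(8k^2)=O(\log k)$.

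Your Dilworth fallback also does not go through cleanly: the buyer's value and the allocations are over sets cutting across chains, so the revenue does not decompose chain by chain, and ``combining across chains by integrating over scalings'' is not a step either your argument or Lemma~\ref{lem:CTT19_lem31} supports. One small caveat worth noting: the paper's actual definition of $q_i$ is $\min_{x} p(x)/\Pr_{T\sim x}[T\cap S_i\neq\emptyset]$, which relaxes your probability-one requirement; for a buy-many $p$ the two are equivalent up to the adaptive repurchasing argument, but the ratio form is what makes the per-item bound $q_i\le k\,p(x)$ in the lemma immediate.
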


Similar to the totally-ordered setting, given an optimal mechanism $p$, we define item pricing $q$ such that each $q_i$ is the cheapest way to get an item that dominates $i$ in $p$. We show that a scaling of $q$ gives $O(\log k)$-approximation to the optimal revenue. Compared to the totally-ordered setting, the major difference is that Lemma~\ref{lem:totally-ordered-lem} no longer holds. Instead, we show that for any allocation $x$ of items, there exists a set $T$ in the support of $x$, such that $q(T)\leq k^2p(x)$. The rest of the proof mostly still goes through.

\begin{proof}
Let $\feasiblesets$ be the collection of sets that can be demanded by the buyer. Given the optimal buy-many mechanism $p$, define item pricing $q$ as follows: $q_{i}$, the price of item $i$, is defined by the cheapest way of getting an item that dominates it. In other words, if $S_i=\{j\in[n]|j\succeq i\}$ is the set of items that dominate $i$,
\begin{equation*}
q_i=\min_{x\in \Delta(\feasiblesets): \sum_{T\cap S_i\neq \emptyset}x_T=1}p(x)=\min_{x\in \Delta(\feasiblesets)}\frac{p(x)}{\sum_{T\cap S_i\neq \emptyset}x_T}.
\end{equation*}
We show that a scaling of $q$ gives a $1/O(\log k)$ fraction of the optimal revenue obtained by buy-many mechanisms. Lemma~\ref{lem:util-lb} still holds, while Lemma~\ref{lem:totally-ordered-lem} needs to be modified as follows.
\begin{lemma}\label{lem:k-width-general-lem}
For any allocation $x\in\Delta(\feasiblesets)$, there exists a set $T$ in the support of $x$, such that $q(T)\leq k^2 p(x)$.
\end{lemma}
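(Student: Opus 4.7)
The plan is to find an antichain $T$ in the support of $x$ all of whose elements satisfy $\alpha_i \geq 1/k$, where I abbreviate $\alpha_i := \sum_{T' : T' \cap S_i \neq \emptyset} x_{T'} = \Pr_{T' \sim x}[T' \cap S_i \neq \emptyset]$. Given such a $T$, the lemma follows immediately: by the definition of $q_i$ we have $q_i \leq p(x)/\alpha_i \leq k\, p(x)$ for every $i \in T$, and since $T$ is an antichain in a poset of width $k$ we have $|T| \leq k$, so $q(T) = \sum_{i \in T} q_i \leq |T| \cdot k\, p(x) \leq k^2\, p(x)$. So the entire argument reduces to exhibiting one antichain $T$ in the support of $x$ that avoids the ``bad'' items with small $\alpha_i$.

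To produce such a $T$, I will study the set $B := \{i \in [n] : \alpha_i < 1/k\}$. The first key structural step is to show that $B$ is upward-closed under $\preceq$: if $i \in B$ and $j \succeq i$, then every item dominating $j$ also dominates $i$, so $S_j \subseteq S_i$, which forces $\alpha_j \leq \alpha_i < 1/k$ and hence $j \in B$. Let $M_B$ denote the set of minimal elements of $B$. Because $M_B$ is an antichain, the width assumption yields $|M_B| \leq k$, and since $B$ is finite every element of $B$ lies above some element of $M_B$.

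The final step is a union bound over $M_B$. For any antichain $T$ with $T \cap B \neq \emptyset$, I pick any $i \in T \cap B$ together with $m \in M_B$ satisfying $m \preceq i$; then $i \in S_m$, so $T \cap S_m \neq \emptyset$. Hence $\{T : T \cap B \neq \emptyset\} \subseteq \bigcup_{m \in M_B} \{T : T \cap S_m \neq \emptyset\}$, and therefore
\[
\Pr_{T \sim x}[T \cap B \neq \emptyset] \;\leq\; \sum_{m \in M_B} \alpha_m \;<\; |M_B| \cdot \tfrac{1}{k} \;\leq\; 1.
\]
This strict inequality produces some $T \in \operatorname{supp}(x)$ with $T \cap B = \emptyset$, which is the desired antichain. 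The main conceptual hurdle is the upward-closedness observation: a naive union bound over all of $B$ would scale with $|B|$ (possibly as large as $n$), while restricting to $M_B$ exploits the width parameter so that the two factors of $k$---one from $|M_B| \leq k$ and the other from $|T| \leq k$---combine into exactly the $k^2$ stated in the lemma.
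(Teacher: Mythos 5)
Your proof is correct and follows essentially the same route as the paper: you identify the ``bad'' items with $\alpha_i < 1/k$, reduce via a union bound over an antichain of minimal bad items to show some support set $T$ avoids them all, and then combine $q_i \le k\,p(x)$ with $|T| \le k$. The only cosmetic difference is that you explicitly note upward-closedness of $B$, which the paper leaves implicit (it merely takes $S'$ to be the antichain of minimal elements of its bad set $S$), but this observation is not load-bearing---the union bound only needs that every bad item dominates some minimal bad item, which follows from finiteness alone.
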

\begin{proof}
Let $S$ be the set of items, such that for any $i\in S$, the probability of getting an item that dominates $i$ is less than $\frac{1}{k}$. In other words, 
\begin{equation*}
S=\left\{i\in[n]\Big|\sum_{T\cap S_i\neq \emptyset}x_T<\frac{1}{k}\right\}.
\end{equation*}
We first observe that there exists a set $T$ in the support of $x$, such that $T$ does not contain any item in $S$. We reason this as follows. Let $S'\subseteq S$ be an antichain such that any element in $S$ dominates an element in $S'$. Intuitively, $S'$ is the ``bottom'' of set $S$ in the preference graph. Then for any $i\in S'$, the probability that a set drawn from $x$ contains an element that dominates $i$ is less than $\frac{1}{k}$. Since $S'$ is an antichain and there are at most $k$ elements in $S'$, by union bound, the probability that a set drawn from $x$ contains an element that dominates \textit{some} element in $S'$ is less than 1. Thus there exists a set $T$ in the support of $x$, such that $T$ does not contain any item that dominates some element in $S'$, thus does not contain any item in $S$.

For any item $i\in T$, by the definition of $q_i$,
\begin{equation*}
q_i\leq\frac{p(x)}{\sum_{T\cap S_i\neq \emptyset}x_T}\leq\frac{p(x)}{1/k}=kp(x).
\end{equation*}
Then 
\begin{equation*}
q(T)=\sum_{i\in T}q_i\leq \sum_{i\in T}kp(x)=|T|kp(x)\leq k^2p(x).
\end{equation*}

\end{proof}

Now we are ready to prove the theorem. The same as in the proof of Theorem~\ref{thm:totally-ordered-approx}, we fix any buyer type $v$, and define the four mechanisms as follows. Under the optimal pricing $p$, the buyer purchases lottery $\lambda$; under pricing $\beta p$ the buyer purchases lottery $\lambda'$; under item pricing $q$, the buyer purchases set $S$; under pricing $\ell q$ the buyer purchases set $S'$. Here we define $\beta=\frac{1}{2}$ and $\ell=\frac{1}{8k^2}$.

By Lemma~\ref{lem:k-width-general-lem}, there exists a set $T$ in the support of $\lambda'$, such that $p(\lambda')\geq \frac{1}{k^2}q(T)$. Then
\begin{equation}\label{eqn:k-width-general1}
\rev_{\beta p}(v)=\beta p(\lambda')\geq \frac{\beta}{k^2} q(T).
\end{equation}
Inequality \eqref{eqn:totally-ordered2} still holds (by replacing item $j$ in \eqref{eqn:totally-ordered2} with set $T$):
\begin{equation}\label{eqn:k-width-general2}
v(T)\geq u_{p}(v)+(1-\beta)\rev_p(v).
\end{equation}
Since the buyer has a larger utility purchasing item $S'$ than $T$ under pricing $\ell q$, we have
\begin{equation}\label{eqn:k-width-general3}
u_{\ell q}(v)\geq v(T)-\ell q(T).
\end{equation}
Since the buyer has a larger utility purchasing lottery $\lambda$ than buying the collection of lottery $\lambda_i$ for every $i\in S$ that defines $q_{i}$ in $p$ (which has price $q_{i}$), and $\lambda_i$ allocates the buyer an item that dominates $i$, we have 
\begin{equation}\label{eqn:k-width-general4}
u_p(v)\geq v(\cup \lambda_i)-\sum_{i\in S}p(\lambda_i)\geq v(S)-\sum_{i\in S}q_i=v(S)-q(S)=u_q(v).
\end{equation}
By $\frac{\ell k^2}{\beta}\eqref{eqn:k-width-general1}+\eqref{eqn:k-width-general2}+\eqref{eqn:k-width-general3}+\eqref{eqn:k-width-general4}$, and apply $\beta=\frac{1}{2}$ and $\ell=\frac{1}{8k^2}$,
\begin{equation}\label{eqn:util-difference}
u_{\ell q}(v)-u_{q}(v)\geq (1-\beta)\rev_p(v)-\frac{\ell k^2}{\beta}\rev_{\beta p}(v)=\frac{1}{2}\rev_p(v)-\frac{1}{4}\rev_{\beta p}(v).
\end{equation}

By applying Lemma~\ref{lem:CTT19_lem31} to \eqref{eqn:util-difference}, there exists a random scaling factor $\alpha$, such that 
\begin{equation}\label{eqn:k-width-general5}
\rev_{\alpha q}(v)\geq\frac{u_{\ell q}(v)-u_{q}(v)}{\ln (1/\ell)}\geq\frac{1}{\ln (8k^2)}\left(\frac{1}{2}\rev_p(v)-\frac{1}{4}\rev_{\beta p}(v)\right).
\end{equation}
Since $p$ is the optimal buy-many mechanism, it achieves higher revenue than $\ell p$, which means $\E_{v\sim D}[\rev_p(v)]\geq\E_{v\sim D}[\rev_{\ell p}(v)]$. Taking the expectation over $v$ on both sides of \eqref{eqn:k-width-general5}, the same as \eqref{eqn:totally-ordered5} we have
\begin{equation*}
\rev_{\alpha q}\geq\frac{1}{\ln(8k^2)}\left(\frac{1}{2}\rev_p-\frac{1}{4}\rev_{\beta p}\right)=\frac{1}{4\ln(8k^2)}\rev_p.
\end{equation*} 
Thus there exists an item pricing which gives a $\frac{1}{O(\log k)}$ fraction of the revenue obtained by the optimal buy-many mechanism.

\end{proof}

\subsection{A PTAS algorithm for computing a near-optimal item pricing in partially-ordered setting}
\label{sec:partially-ordered-ptas}

In this section, we generalize the approximation algorithm for the totally-ordered setting to the partially ordered setting, where the width of the entire set of items is $k$. When $k$ is a constant, the algorithm is a PTAS.

\begin{theorem}\label{thm:general-value-approx-algorithm}
For a general-valued buyer with partially ordered values, if the partially ordered set containing all items has width $k$, then there exists an algorithm running in $poly(m,n^{poly(k,1/\eps)},\log\range)$ time that computes an item pricing that is $(1+\eps)$-approximation in revenue to the optimal item pricing.
\end{theorem}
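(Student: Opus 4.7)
The plan is to lift the four-step PTAS for the totally-ordered setting (Theorem~\ref{thm:totally-ordered-algorithm}) to the width-$k$ case. The key structural ingredient is Dilworth's theorem: the width-$k$ partial order on $[n]$ decomposes, in polynomial time, into $k$ chains $\mathcal{C}_1,\dots,\mathcal{C}_k$, each of which is totally ordered. Since domination allows the buyer to discard dominated items without any loss in value, and any antichain contains at most one item from each chain, the buyer's effective purchase reduces to choosing at most one ``level'' within each of the $k$ chains. The discretization arguments then apply per-chain, while the final optimization must be joint across all chains.

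First, I would carry over Steps~1--2 of the totally-ordered proof unchanged. Restricting prices to powers of $(1+\eps^2)$, and then further to a small set $\Pi^*\subset\Pi$ of size $\poly(1/\eps,\log\range)$, loses only a $(1-O(\eps))$ factor in revenue; both arguments act on individual item prices and transfer verbatim to the partially-ordered setting.

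Second, I would generalize Step~3 per chain. Within chain $\mathcal{C}_c$ I would define interval prefix pricings with the same price gap $(\gamma,\delta)=(\frac{1}{\eps^2}\ln\frac{1}{\eps^2},\frac{1}{\eps^3}\ln\frac{1}{\eps^2})$ as before, together with chain-wise upgrades $\vadd_{c,i}$ so that selecting level $i$ in chain $\mathcal{C}_c$ corresponds to paying $\sum_{j\le i}\vadd_{c,j}$. The large intra-chain gap forces each chain's purchase to be localized to a single interval, and converting between item pricings and interval-prefix pricings chain-by-chain loses another $(1-O(\eps))$ factor. Crucially, because the payment decomposes additively over chains even though the buyer's value over an antichain can be an arbitrary function of the chosen $k$-tuple of levels, the upgrade reformulation remains valid for general valuations.

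The main obstacle, and the source of the $n^{\poly(k,1/\eps)}$ factor, is the final dynamic program. In the totally-ordered case, prices in different intervals could be optimized almost independently because a single chain's revenue contribution does not depend on prices elsewhere. In the partially-ordered case the buyer's valuation couples her choices across all $k$ chains, so her utility-maximizing antichain depends jointly on the prices set in every chain. The DP state must therefore track, for each chain, the current interval and the price scale within it; across all $k$ chains and $\poly(1/\eps)$ scales this yields $n^{\poly(k,1/\eps)}$ joint configurations, and for each configuration every buyer type computes her optimal antichain in $n^{O(k)}$ time by enumeration. The total running time is then $\poly(m,n^{\poly(k,1/\eps)},\log\range)$, and the $(1+\eps)$-approximation follows by composing the per-step losses exactly as in the final display in the proof of Theorem~\ref{thm:totally-ordered-algorithm}.
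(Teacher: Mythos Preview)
Your high-level plan is sound, and the chain decomposition via Dilworth is a natural lens, but there is a genuine gap at the dynamic-programming step. The totally-ordered DP does not work merely because \emph{payments} decompose; it works because the surrogate additive buyer $\vadd$ has the property that her choice within one interval is independent of prices elsewhere. Your ``chain-wise upgrades'' capture additivity of payments, but you offer no value-side surrogate, and none exists per chain: the valuation $v(i_1,\dots,i_k)$ of an antichain couples all chains. Consequently the recurrence you sketch---with state ``for each chain, the current interval and price scale''---is not well-defined: the revenue contribution attributable to any one per-chain interval depends on the prices assigned in every other chain, so there is nothing to add up incrementally. The phrase ``$n^{\poly(k,1/\eps)}$ joint configurations'' is also left ambiguous between DP states and full pricings; under either reading it does not yield a valid recursion.

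The paper resolves this not by working per chain but by defining \emph{global} intervals: a single nested chain of antichains $\emptyset=T_0\prec T_1\prec\cdots\prec T_t$ with $I_\ell=P_{T_\ell}\setminus P_{T_{\ell-1}}$ grouping \emph{all} items (across every chain) in a common price band. The new structural object is an \emph{additive-over-intervals} buyer $\vaddi(S)=\sum_{\ell}\big(v(S_\ell\cup T_{\ell-1})-v(T_{\ell-1})\big)$, which is additive along this one-dimensional sequence of intervals though arbitrary within each. This is precisely what makes the recurrence $F[T,z]=\max_{T'\prec T}\{F[T',w]+G[T',T,y,z]\}$ over antichains $T\in\feasiblesets$ well-defined: the contribution $G$ of interval $I_{T',T}$ can be computed in isolation by brute-forcing the $n^{O(k\delta)}$ monotone pricings inside it. Two smaller points your sketch also misses: the price-gap parameters must scale with $k$---the paper uses $(\gamma,\delta)=(\tfrac{1}{\eps^2}\ln\tfrac{k}{\eps^2},\tfrac{k}{\eps^3}\ln\tfrac{k}{\eps^2})$---because the forced prefix $T_1\cup\cdots\cup T_{\ell-1}$ contains up to $k$ items per level; and the Step~2 price set $\Pi^*$ picks up a $\poly(k,\log n)$ factor since the trivial revenue lower bound is now only $n^{-k}$.
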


\begin{proof}[Proof Sketch]

The algorithm is similar to the totally-ordered setting, but we need to define the generalized multi-dimensional \textit{prefixes} and \textit{intervals}. We also define slightly different \textit{interval prefix pricing} and \textit{additive buyer}.
\begin{itemize}
	\item \textit{Prefix}: For any set $T\in\feasiblesets$, the prefix parameterized by $T$ is a set of items $P_T\supseteq T$ such that item $j\in[n]$ is in $P_T$ if and only if there exists $i\in T$ such that $j\preceq i$. Such a definition ensures that for $T_1,T_2\in\feasiblesets$,  $T_1\preceq T_2$, $P_{T_1}\subseteq P_{T_2}$.

	\item \textit{Interval}: For two sets $T,T'\in\feasiblesets$ with $T\subseteq T'$, interval $I_{T,T'}=P_{T'}\setminus P_{T}$ is a set of items with contiguous item types between $T$ and $T'$ in the ordering graph. 

	\item \textit{Interval prefix pricing}: Let $I=(I_{T_0,T_1},I_{T_1,T_2},\cdots,I_{T_{t-1},T_{t}})$ be  a partition of the $n$ items into $t$ intervals,  with $T_0=\emptyset$, and $T_t$ be a set of items that dominates all other items (with $P_{T_t}=[n]$). An interval prefix pricing $q$ is a mechanism defined by a vector of item prices $(q_1,q_2,\cdots,q_n)$: For any set $S_{\ell}\subseteq I_{\ell}$ of items, there is a menu allocating a set of items $P^*_{S_{\ell}}=\cup_{1\leq j\leq \ell-1}T_{j}\cup S_{\ell}$, with price $q(P^*_{S_{\ell}})=\sum_{i\in P^*_{S_{\ell}}}q_i$. In other words, to purchase any set of items in $S_{\ell}$, the buyer also needs to purchase all sets of items $T_{1},\cdots,T_{\ell-1}$ that define the previous intervals.

	\item \textit{Additive-over-intervals buyer}: Given an interval partition $I=(I_{T_0,T_1},I_{T_1,T_2},\cdots,I_{T_{t-1},T_{t}})$, for an arbitrary value function $v$ and any set $S=S_1\cup S_2\cup \cdots\cup S_t$ of items with $S_\ell\subseteq I_\ell$ for every $\ell\in[t]$, define
		\begin{equation*}
			\vaddi(S)=\sum_{\ell=1}^{t}\big(v(S_\ell\cup T_{\ell-1})-v(T_{\ell-1})\big).
		\end{equation*}
	In other words, $\vaddi(S_{\ell})$ is the value gain of getting set $S_{\ell}$, when the buyer $v$ has a set of items $T_{\ell-1}$ at hand. $v$ and $\vaddi$ has the same behavior under an interval prefix pricing defined by interval partition $I$. 
\end{itemize}

The proof for the totally-ordered setting can be generalized to the partially ordered setting, if we use the above generalized definitions of the terms. The key steps are shown as follows.

\begin{enumerate}
\item There exists a near optimal item pricing where all prices are powers of $(1+\eps^2)$. Let $\Pi=\{(1+\eps^2)^r | r\in\Z\}\cup\{0\}$. Then for all item pricings $p$, there exists $q^{(1)}\in\Pi^n$, such that for all value functions $v$,
  \begin{align*}
    \rev_{q^{(1)}}(v) \ge (1-O(\eps))\rev_p(v). 
  \end{align*}
Furthermore, without loss of generality, we can assume that for the item pricing $q^{(1)}$ we consider, set $\{i|q^{(1)}_i\leq y\}$ is a prefix for any $y\in\R$.
\item At a small loss in revenue, we can restrict prices to lie in a small set. In particular, for all value distributions $\dist$ with value range $\range$, there exists an efficiently computable set $\Pi^*\subset\Pi$ with $|\Pi^*|=poly(1/\eps,k,\log n, \log \range)$ such that for all item pricings $q^{(1)}\in \Pi^n$, there exists an item pricing $q^{(2)}\in {\Pi^*}^n$ satisfying
  \begin{align*}
    \rev_{q^{(2)}}(\dist) \ge (1-O(\eps))\rev_{q^{(1)}}(\dist). 
  \end{align*}
\item Given a partition of the $n$ items into $t$ intervals, $I=(I_{T_0,T_1},I_{T_1,T_2},\cdots,I_{T_{t-1},T_{t}})$ with $T_0=\emptyset$, and $T_t$ be a set of items that dominates all other items (with $P_{T_t}=[n]$). We furthermore say that for any interval partition $I$, an item pricing $q$ satisfies {\em price gap} $(\gamma,\delta)$ if (1) items corresponding to different intervals are priced multiplicatively apart: for all $i$, $j$, and $\ell$ with $i\in P_{T_{\ell}}$ and $j\not\in P_{T_{\ell}}$, $q_j\ge (1+\eps^2)^\gamma q_i$, (2) and, menu options corresponding to any single interval are priced multiplicatively close to each other: for all $i$, $j$, and $\ell$ with $i,j\in I_{\ell}$, $q_j\le (1+\eps^2)^\delta q_i$.

We show that for value distribution $\dist$ and item pricing $q^{(2)}\in {\Pi^*}^n$, there exists an item pricing $q^{(3)}$ with $q^{(3)}_i\in \Pi^*$ for all $i\in [n]$ and price gap $(\gamma,\delta)=(\frac{1}{\eps^2}\ln\frac{k}{\eps^2},\frac{k}{\eps^3}\ln\frac{k}{\eps^2})$, such that
  \begin{align*}
    \rev_{q^{(3)}}(\dist) \ge (1-O(\eps))\rev_{q^{(2)}}(\dist).
  \end{align*}

\item We define a new kind of pricing that we will call an {\em interval prefix pricing}. Given a partition of the $n$ items into $t$ intervals $I=(I_{T_0,T_1},I_{T_1,T_2},\cdots,I_{T_{t-1},T_{t}})$, an interval prefix pricing $q$ is a mechanism defined by a vector of item prices $(q_1,q_2,\cdots,q_n)$: For any set $S_{\ell}\subseteq I_{\ell}$ of items, there is a menu allocating a set of items $P^*_{S_{\ell}}=\cup_{1\leq j\leq \ell-1}T_{j}\cup S_{\ell}$, with price $q(P^*_{S_{\ell}})=\sum_{i\in P^*_{S_{\ell}}}q_i$. In other words, to purchase any set of items in $S_{\ell}$, the buyer also needs to purchase all sets of items $T_{1},\cdots,T_{\ell-1}$ that define the previous intervals.

We show that for every value function $v$ and item pricing $q^{(3)}\in {\Pi^*}^n$ with price gap $(\gamma,\delta)=(\frac{1}{\eps^2}\ln\frac{k}{\eps^2},\frac{k}{\eps^3}\ln\frac{k}{\eps^2})$, there exists an efficiently computable set $\Pi'$ with $|\Pi'|=|\Pi^*|$ and an interval prefix pricing $q^{(4)}$ with $q^{(4)}_i\in \Pi'$ for all $i\in [n]$ and price gap $(\frac{1}{\eps^2}\ln\frac{k}{\eps^2},\frac{k}{\eps^3}\ln\frac{k}{\eps^2})$, such that
  \begin{align*}
    \rev_{q^{(4)}}(v) \ge (1-O(\eps))\rev_{q^{(3)}}(v).
  \end{align*}

The converse is also true: for every value function $v$ and interval prefix pricing $q$ with price gap $(\frac{1}{\eps^2}\ln\frac{k}{\eps^2},\frac{k}{\eps^3}\ln\frac{k}{\eps^2})$, we can efficiently compute an item pricing $q^{(5)}$ such that
  \begin{align*}
    \rev_{q^{(5)}}(v) \ge (1-O(\eps)) \rev_{q}(v).
  \end{align*}

\item We define for each arbitrary-valued buyer $v$ and interval partitioning $I$ an \textit{additive-over-intervals} value function that closely mimics it. For an arbitrary value function $v$, for any set $S=S_1\cup S_2\cup \cdots\cup S_t$ of items with $S_\ell\subseteq I_\ell$ for every $\ell\in[t]$, define
\begin{equation*}
	\vaddi(S)=\sum_{\ell=1}^{t}\big(v(S_\ell\cup T_{\ell-1})-v(T_{\ell-1})\big).
\end{equation*}
In other words, $\vaddi(S_{\ell})$ is the value gain of getting set $S_{\ell}$, when the buyer $v$ has set of items $T_{\ell-1}$ at hand. We write $\dist_I^{\oplus}$ as the distribution of $\vaddi$ corresponding to $v\sim \dist$. $v$ and $\vaddi$ have the same behavior under an interval prefix pricing defined by interval partition $I$. When the interval partition is clear from the context, we will omit $I$ and write $\vadd=\vaddi$.

We show that for every additive-over-intervals value function $\vadd$ and interval prefix pricing $q^{(4)}$ with $q^{(4)}_i\in \Pi'$, and with price gap $(\gamma,\delta)=(\frac{1}{\eps^2}\ln\frac{k}{\eps^2},\frac{k}{\eps^3}\ln\frac{k}{\eps^2})$, there exists an efficiently computable set $\Pi^o$ with $|\Pi^o|=|\Pi^*|$ and an item $q^{(6)}$ with $q^{(6)}_i\in \Pi^o$ for all $i\in [n]$ and price gap $(\frac{1}{\eps^2}\ln\frac{k}{\eps^2},\frac{k}{\eps^3}\ln\frac{k}{\eps^2})$, such that
  \begin{align*}
    \rev_{q^{(6)}}(\vadd) \ge (1-O(\eps))\rev_{q^{(4)}}(v).
  \end{align*}

The converse also holds: for every value function $v$ and item pricing $q$ with price gap $(\frac{1}{\eps^2}\ln\frac{k}{\eps^2},\frac{k}{\eps^3}\ln\frac{k}{\eps^2})$, we can efficiently compute an interval prefix pricing $q^{(7)}$ with price gap $(\frac{1}{\eps^2}\ln\frac{k}{\eps^2},\frac{k}{\eps^3}\ln\frac{k}{\eps^2})$ such that
  \begin{align*}
    \rev_{q^{(7)}}(v) \ge (1-O(\eps)) \rev_{q}(\vadd).
  \end{align*}

\item Finally, we show that for any distribution $\dist$ over arbitrary values and any set $\Pi^o$ of values, an optimal item pricing $q$ for value distribution $\dist_I^{\oplus}$ and the corresponding interval partition $I$, with $q_i\in \Pi^o$ for all $i\in [n]$ and price gap $(\frac{1}{\eps^2}\ln\frac{k}{\eps^2},\frac{k}{\eps^3}\ln\frac{k}{\eps^2})$, can be found in time polynomial in $|\Pi^o|$, $n^{k\cdot poly(1/\eps)}$, and $m$.
  
\end{enumerate}

The reasoning is similar to the proof of Theorem~\ref{thm:totally-ordered-algorithm}. We defer the complete proof to Section~\ref{sec:partially-ordered-appendix}.

\end{proof}

We further notice that for a unit-demand buyer, the dependency on $\log\range$ can be removed. This is enabled via the following lemma.

\begin{lemma}\label{lem: support-size}
Let $V$ be a set of non-negative real numbers. Then we can efficiently find a set of power-of-$(1+\eps^2)$ prices $V'$ with $|V'|=O(|V|^2\frac{1}{\eps^2}\ln\frac{1}{\eps^2})$ satisfying the following: For any buyer that is unit-demand over $n$ items such that for any buyer type $v$ and item $i$ the buyer has value $v_{i}\in V$, the optimal power-of-$(1+\eps^2)$ item pricing $p$ satisfy $p_i\in V'$ for any $i\in[n]$. 
\end{lemma}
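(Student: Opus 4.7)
The plan is to construct $V'$ as a union of multiplicative windows of powers of $(1+\eps^2)$, each anchored at a ``critical'' value arising from $V$, and then argue that every optimal power-of-$(1+\eps^2)$ item pricing has each of its prices inside one of these windows. Concretely, I would let the anchor set $\mathcal{A}$ consist of $V$ together with the $O(|V|^2)$ pairwise differences $\{v - v' : v,v' \in V,\ v > v'\}$. For each anchor $a\in\mathcal{A}$, I would include in $V'$ the powers of $(1+\eps^2)$ lying in the multiplicative window $[\eps^2 a,\, a]$; each such window contains $\log_{1+\eps^2}(1/\eps^2) = O(\frac{1}{\eps^2}\ln\frac{1}{\eps^2})$ powers. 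Since $|\mathcal{A}| = O(|V|^2)$, this gives $|V'| = O(|V|^2 \cdot \frac{1}{\eps^2}\ln\frac{1}{\eps^2})$ as required, and $V'$ is computable in time polynomial in $|V|$ and $1/\eps$.

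For the structural claim, I would analyze the optimal power-of-$(1+\eps^2)$ item pricing $p^*$ coordinate by coordinate via a ``marginal buyer'' $v^{(i)}$ at item $i$---namely a buyer type whose decision is most sensitive to multiplicative perturbations of $p^*_i$. Local optimality, meaning that neither replacing $p^*_i$ by $(1+\eps^2)p^*_i$ nor by $p^*_i/(1+\eps^2)$ strictly improves revenue, forces exactly one of two conditions to bind: \textbf{(a)} the participation constraint of $v^{(i)}$ binds, so $v^{(i)}_i \in [p^*_i,(1+\eps^2) p^*_i)$ and hence $p^*_i$ is the largest power of $(1+\eps^2)$ at most some value $v\in V$, placing it in the window around $v\in\mathcal{A}$; or \textbf{(b)} a switching constraint of $v^{(i)}$ binds between items $i$ and some $j$, so $p^*_i - p^*_j$ equals $v^{(i)}_i - v^{(i)}_j$ up to the rounding, i.e.\ $p^*_i - p^*_j$ is within a factor of $(1+\eps^2)$ of an element of $\mathcal{A}$. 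In case (a), $p^*_i\in V'$ directly. Items that are not bought under $p^*$ can have their prices set to an arbitrary large element of $V'$ without affecting the allocation.

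The main obstacle is case (b), where $p^*_i$ is tied to another price $p^*_j$ rather than to an anchor in $\mathcal{A}$. In principle this dependency can cascade through a chain of items, so that $p^*_i$ becomes a long alternating sum of value differences and need not itself lie in any window around a single anchor. The crux of the proof would be showing that this cascade is effectively bounded: either (i) the chain can be ``short-circuited'' by modifying $p^*$ without loss in revenue so that every $p^*_i$ is determined by at most a single anchor, or (ii) propagating the approximation along the chain and using the participation upper bound at the top of the chain, the resulting $p^*_i$ still falls within some window in $V'$. The key structural feature I would exploit is that for a unit-demand buyer, each buyer type contributes at most one active switching constraint (between the item they buy and their next-best alternative), so the extreme-point structure of the feasibility polytope of incentive-compatible item pricings has low combinatorial complexity; combined with snapping to powers of $(1+\eps^2)$, this should yield that each $p^*_i$ can always be chosen from a window around a single element of $V\cup(V-V)$, which is exactly $V'$.
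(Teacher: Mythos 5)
Your construction of $V'$ is essentially identical to the paper's: a set of multiplicative windows of width $\Theta(1/\eps^2)$ centered at elements of $V\cup(V-V)$. The structural argument, however, is incomplete, and you yourself identify the gap without closing it: in case (b), a price $p^*_i$ is pinned only to another price $p^*_j$, which may be pinned to $p^*_k$, and so on. Since distinct buyer types can each contribute one tight switching constraint, these constraints can form an arbitrarily long chain $i \to j \to k \to \cdots$, and $p^*_i$ becomes an alternating sum of many value differences, which need not lie in any single window of $V'$. Your suggestion that this can be ``short-circuited'' or that the extreme-point structure of the IC polytope ``should yield'' containment is a conjecture, not an argument. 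The snapping-to-powers-of-$(1+\eps^2)$ step is actually what makes the alternating-sum analysis delicate: each link in the chain introduces a multiplicative slack factor, and errors can compound along the chain, so the claim is genuinely nontrivial.

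The paper sidesteps the cascade entirely with a simpler exchange argument that you should adopt. Start from any optimal power-of-$(1+\eps^2)$ pricing $p$ (not from first-order stationarity conditions). Let $S$ be the set of items at the \emph{lowest} price not in $V'$, and raise every price in $S$ by a factor $(1+\eps^2)$ to get $p'$. For any buyer type $v$ buying $i$ under $p$ and $j$ under $p'$: if $p_i > p_j$ then the switch forces $i \in S$, and the two tightness inequalities (buyer weakly prefers $i$ under $p$, weakly prefers $j$ under $p'$) sandwich $v_i - v_j$ between $\frac{\eps^2}{1+\eps^2}p_i$ and $(1+\eps^2)p_i$, placing $p_i$ in the window anchored at $v_i - v_j \in V - V$ --- contradicting $i\in S$. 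Hence $p_i \le p_j$, so the payment $p'_j \ge p_j \ge p_i$ does not decrease, and $p'$ is still optimal. Iterating this nudge (which terminates since prices are bounded by $\max V$ and strictly increase along the chain of iterations) produces an optimal pricing entirely inside $V'$. This bypasses any reasoning about chains of active constraints: only a single level of comparison, between the item bought before and after the nudge, is ever needed.
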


Given the lemma, since there are at most $mn$ different item values in the input, we have $|V|=mn$ in the lemma. The lemma can replace Step 2 in the proof of Theorem~\ref{thm:totally-ordered-algorithm} to remove the running time's dependency on $\log \range$. It can also be applied to Theorem~\ref{thm:general-value-approx-algorithm} for a unit-demand buyer. The proof is deferred to Section~\ref{sec:partially-ordered-appendix}.


\bibliographystyle{plainnat}
\bibliography{reference}

\appendix
\section{Omitted Proofs for the FedEx Setting}
\label{sec:fedex-appendix}

\begin{numberedlemma}{\ref{lem:two-item-value-lemma}}
There exists an optimal item pricing such that $p_1\in \Pi_L$, and $p_i\in \Pi^*$ for each $i\geq 2$.
\end{numberedlemma}
\begin{proof}
We start with an arbitrary optimal item pricing $p$ with monotone item prices. Notice that a buyer $v$ will either purchase item 1 or item $i_v$ due to the monotonicity of item prices.

If $p_1\not\in \Pi_L$, and $p_1=p_2=\cdots=p_\ell<p_{\ell+1}$, suppose that we raise the price of item $1,2,\cdots,\ell$ by a small enough $\eps$. For a buyer of type $v$, if the buyer prefers to purchase item $1$ previously, then $v_1\geq p_1$. Since $p_1\not\in \Pi_L$, and $v_1\in \Pi_L$, we know that $v_1>p_1$, thus $v_1\geq p_1+\eps$ for a small enough $\eps$. Thus after the perturbation, she either still prefers to purchase item 1 and pay $\eps$ more, or she will switch to purchase a more expensive item, which means that her payment increases. For any buyer that does not prefer to purchase item $1$ before perturbation, her preferred item does not change after the price changes. Therefore, after the operation, the total revenue does not decrease. The value $\eps$ can be chosen so as to enforce either $p_1+\eps\in \Pi_L$, or $p_1+\eps=p_{\ell+1}$. By repeating this operation, we will have $p_1=p_2=\cdots=p_j\in \Pi_L$ for some $j\in[n]$, while maintaining the optimality of $p$.

If $\ell$ is the smallest index such that $p_{\ell}\not\in \Pi^*$, and $p_{\ell}=p_{\ell+1}=\cdots=p_{j}<p_{j+1}$, suppose that we raise the price of items $\ell,\ell+1,\cdots,j$ by a small enough $\eps$. For a buyer of type $v$, if $i_v\not\in[\ell,j]$, the buyer's incentive does not change after the perturbation. Otherwise when $\ell\leq i_v\leq j$, if the buyer prefers to purchase item $i_v$ before the perturbation, then $v_{i_v}-p_{i_v}\geq v_1-p_1$, and $v_{i_v}\geq p_{i_v}$. Since $p_1\in \Pi_L$, $v_{i_v}-v_1+p_1$ and $v_{i_v}$ are both in $\Pi^*$. By $p_{i_v}=p_{\ell}\not\in \Pi^*$, $v_{i_v}-p_{i_v}> v_1-p_1$ and  $v_{i_v}>p_{i_v}$. Therefore after the perturbation, for small enough $\eps$, she still prefers to purchase item $i_v$ and pay $\eps$ more. 
Therefore, after the operation, the total revenue does not decrease. The value $\eps$ can be as large as making $p_\ell+\eps\in \Pi^*$, or making $p_\ell+\eps=p_{j+1}$. By repeating this operation, we will have $p_i\in \Pi^*$ for each $i\geq 2$, while maintaining the optimality of $p$. This finishes the proof of the lemma.
\end{proof}
\section{Omitted Proofs in Section~\ref{sec:totally-ordered}}
\label{sec:totally-ordered-appendix}

\subsection{Omitted proofs in Section~\ref{sec:np-hardness}}
\label{sec:np-hardness-proof}

\begin{numberedtheorem}{\ref{thm:totally-ordered-nphard}}
$\orderedip$ is strongly NP-hard, even when each realized buyer has at most three distinct item values.
\end{numberedtheorem}

\begin{proof}
We prove the theorem via a reduction from \textsc{Max-Cut}. 

For any \textsc{Max-Cut} instance with graph $G(V,E)$, let $n=|V|>180$ be large enough. Consider an instance of $\orderedip$ with $n+1$ items. For convenience, we assume that each node in $V$ also has an index in $[n]$. We want the following properties of the optimal item pricing for the instance:
\begin{enumerate}
	\item The optimal item pricing has integral item prices for each item;
	\item $p_{n+1}=6n$, and there is a set of buyer types purchasing item $n+1$ with realization probability $q_1=0.9$ that do not depend on the structure of the graph and contribute $R_1(n)$ to the total revenue;
	\item $p_{i}=3i-1$ or $3i-2$, and there is a set of buyer types purchasing items in $[n]$ with realization probability $q_2<\frac{1}{12n}q_1$ that do not depend on the structure of the graph and contribute $R_2(n)$ to the total revenue;
	\item In addition to all previous buyers, for each $(i,j)\in E$ with $i<j$, there exists a set $T_{ij}$ of buyer types, and real number $R_{ij}(n)>0$ that is irrelevant to the graph structure such that: if $p_j-p_i=3(j-i)$, then the revenue contribution from $T_{ij}$ is $R_{ij}(n)$; if $p_j-p_i\neq 3(j-i)$, then the revenue contribution from $T_{ij}$ is $R_{ij}(n)+\frac{1}{n^{10}}$. The realization probability of any buyer type is polynomially bounded by $n$ (at least $poly(n^{-1})$).
\end{enumerate}

Before going to the construction, we first show the strongly NP-hardness of $\orderedip$ for an instance with above properties. This proves the claim of the Theorem.

Given an instance with such properties, we can calculate the revenue of the optimal item pricing for the instance. The total revenue contributed from buyer types from Property 2 and 3 is $R_1(n)+R_2(n)$. For any cut $C=(V_1,V\setminus V_1)$, if each item $i$ in $V_1$ is priced $3i-1$, while each item $j$ in $V\setminus V_1$ is priced $3j-2$, the total revenue contributed from buyer types from Property 4 is $\sum_{(i,j)\in E}R_{ij}(n)+\frac{1}{n^{10}}|C|$. Thus, for a graph $G(V,E)$ with maximum cut $c_{\max}$, the corresponding instance of $\orderedip$ has maximum revenue 
\begin{equation*}
h(G)=R_1(n)+R_2(n)+\sum_{(i,j)\in E}R_{ij}(n)+\frac{1}{n^{10}}c_{\max}.
\end{equation*} 
This builds a bijection between the maximum cut of $G$, and the optimal item pricing revenue of the $\orderedip$ instance constructed from $G$. Since all inputs for the $\orderedip$ instance are polynomially bounded, we know that $\orderedip$ is strongly NP-hard from the APX-hardness of \textsc{Max-Cut}.

Now let's go back to show how to construct the $\orderedip$ satisfying all properties.

\paragraph{Property 1.} To make sure that the optimal item pricing has all integral prices, we only need to construct the distribution such that each buyer type has integral value for every item. For such an instance, if the optimal item pricing does not have integral price for every item, we can round up the price for each item to the closet integer without reducing the revenue. The reason is that such a round up procedure does not change the incentive of any buyer type. If a buyer $v$ prefers to purchase item $i$ to $j$ under pricing $p$, which means $v_i-p_i\geq v_j-p_j$, then $v_i-\lceil p_i\rceil \geq v_j-\lceil p_j \rceil$ since $v_i$ and $v_j$ are both integer. Thus we can only focus on the class of item pricing with integral item prices.

\paragraph{Property 2.} Construct a buyer type with value $v_{n+1}=6n$ for item $n+1$, and $v_{i}=0$ for all $i\leq n$. In other words, the buyer only wants to purchase item $n+1$ with value $6n$, and is not interested in any other item. The buyer type appears with probability $q_1$, where $q_1$ is very close to $1$. We will make sure that the rest of the buyer types will contribute less than $q_1$ revenue, so the optimal pricing will not set a price less than $6n$ for item $n+1$. This will be done by letting the rest of the buyer types have maximum item value $\leq 6n$ for each item and appear with probability $<\frac{1}{6n}q_1=\frac{3}{20}n$ in total. The revenue contribution of the buyer is $R_1(n)=6nq_1$ by setting $p_{n+1}=6n$.

\paragraph{Property 3.} For each $i\in[n]$, construct a buyer type $v$ with value $v_1=v_2=\cdots=v_{i-1}=0$, $v_i=v_{i+1}=\cdots=v_{n+1}=3i-1$, which appears with probability $q'_2=\frac{1}{36n^2}q_1=\frac{1}{40n^2}$; a buyer type $v'$ with value $v'_1=\cdots=v'_{i-1}=0$, $v'_i=\cdots=v'_{n+1}=3i-2$, which appears with probability $\frac{1}{3i-2}q'_2$; and a buyer $v''$ with value $v''_1=\cdots=v''_{i-1}=0$, $v''_i=\cdots=v''_{n}=3i-2$, $v''_{n+1}=6n$, which appears with probability $q'_2$. Under any item pricing, all buyers would purchase item $i$, item $n+1$, or nothing. Note that the price for item $n+1$ has been fixed to $6n$ by Property 2. We have the following cases: 
\begin{itemize}
\item If $p_{i}\leq 3i-3$, then all of the three buyer types prefer to purchase item $i$, which lead to total revenue $(3i-3)(q'_2+\frac{1}{3i-2}q'_2+q'_2)=\frac{(3i-3)(6i-3)}{3i-2}q'_2<6nq'_2$ for the three buyer types. 
\item If $p_i\geq 3i$, then buyer $v$ and buyer $v'$ cannot afford to purchase any item, while buyer $v''$ prefers to purchase item $n+1$, which lead to revenue $6nq'_2$. 
\item If $p_i=3i-1$, then buyer $v$ purchases item $i$; buyer $v'$ purchases nothing; buyer $v''$ purchases item $n+1$, which leads to total revenue $(3i-1)q'_2+6nq'_2$.
\item If $p_i=3i-2$, then the first buyer purchase item $i$, the second buyer purchases item $i$, the third buyer purchases item $n+1$, which leads to revenue $(3i-2)(q'_2+\frac{1}{3i-2}q'_2)+6nq'_2=(3i-1)q'_2+6nq'_2$.
\end{itemize}
Thus setting $p_i=3i-1$ or $3i-2$ gives the same optimal revenue for the three buyers, while setting any other price leads to a revenue loss of at least $(3i-1)q'_2$. We set $q'_2$ to be large enough such that the rest of the buyers (to be defined in Property 4) cannot contribute $q'_2$ revenue, which can be done by letting the rest of the buyer types have maximum item value $\leq 6n$ and appear with probability $<\frac{1}{6n}q'_2=\frac{1}{240}n^3$ in total. So the optimal pricing only sets $p_i=3i-1$ or $3i-2$ for item $i$.

The total revenue contribution of the buyers is $R_2(n)=\sum_{i=1}^{n}((3i-1)q'_2+6nq'_2)$. The total realization probability of the buyers added in this property is at most $3q'_2$ for each $i$, thus at most $3nq'_2=\frac{3}{40n}$. The set of the buyers and the revenue contribution only depend on $n$, and does not depend on the graph structure.

\paragraph{Property 4.} For each edge $(i,j)\in E$ with $i<j$, let $x=3i-2$, and $y=3j-2$. Consider set $T_{ij}$ of buyer types formed by the following 4 types of buyers $v^{(1)},v^{(2)},v^{(3)},v^{(4)}$:
\begin{itemize}
	\item $v^{(1)}_{1}=\cdots=v^{(1)}_{i-1}=0$, $v^{(1)}_{i}=\cdots=v^{(1)}_{j-1}=x$, $v^{(1)}_{j}=\cdots=v^{(1)}_{n+1}=y$;
	\item $v^{(2)}_{1}=\cdots=v^{(2)}_{i-1}=0$, $v^{(2)}_{i}=\cdots=v^{(2)}_{j-1}=x+1$, $v^{(2)}_{j}=\cdots=v^{(2)}_{n+1}=y+1$;
	\item $v^{(3)}_{1}=\cdots=v^{(3)}_{i-1}=0$, $v^{(3)}_{i}=\cdots=v^{(3)}_{j-1}=x$, $v^{(3)}_{j}=\cdots=v^{(3)}_{n+1}=y+1$;
	\item $v^{(4)}_{1}=\cdots=v^{(4)}_{i-1}=0$, $v^{(4)}_{i}=\cdots=v^{(4)}_{j-1}=x+1$, $v^{(4)}_{j}=\cdots=v^{(4)}_{n+1}=y$.
\end{itemize}
In other words, the four type of buyers would purchase item $i$, item $j$ or nothing, and has slightly different values for item $i$ and $j$. Our goal here is to determine the appearance probability of each buyer type in the value distribution, such that if $p_i\equiv p_j\textrm{ mod }3$, then these buyer types contribute $R_{ij}(n)$ to the total revenue; if $p_i\not\equiv p_j\textrm{ mod }3$, then these buyer types contribute $R_{ij}(n)+n^{-10}$ to the total revenue.

Let $A$ be the following $4\times 4$ outcome matrix, such that each element of the matrix corresponds to the payment of a (row) buyer under a specific (item) item pricing:

\[
  \rowind{$(p_i,p_j)$ in the pricing\ \ \ }
  \begin{array}{@{}c@{}}
    \rowind{$(x,y)$} \\ \rowind{$(x+1,y+1)$} \\ \rowind{$(x,y+1)$} \\ \rowind{$(x+1,y)$} 
  \end{array}
  \mathop{\left[
  \begin{array}{ *{5}{c} }
     \colind{y}{$(x,y)$}  &  \colind{y}{$(x+1,y+1)$}  &  \colind{y}{$(x,y+1)$}  & \colind{x}{$(x+1,y)$}  \\
     0 &  y+1  &  y+1  & x+1  \\
     x  & x &  y+1  & x \\
     y  & y & y & y  
  \end{array}
  \right]}^{
  \begin{array}{@{}c@{}}
    \rowind{$(v_i,v_j)$ of the buyer} \\ \mathstrut
  \end{array}
  }
\]
For any vector $\mathbf{z}\in \mathbb{R}_{\geq 0}^4$, $A\mathbf{z}$ correspond to the vector of the revenues of the 4 pricings $(p_i,p_j)=(x,y),(x+1,y+1),(x,y+1),(x+1,y)$, given that $z_\ell$ buyers of type $\ell$ appear. To satisfy Property 4, we need to find a vector $\mathbf{z}$ such that $A\mathbf{z}=(R_{ij},R_{ij},R_{ij}+n^{-10},R_{ij}+n^{-10})^T$ for some $R_{ij}\geq 0$.

By solving $\mathbf{a}$ for $A\mathbf{a}=(1,1,1,1)^T$ and $\mathbf{b}$ for $A\mathbf{b}=(0,0,1,1)^T$, we get 
\begin{equation*}
\mathbf{a}=(\frac{1}{y^2+y},\frac{x}{y(y+1)(y+1-x)},\frac{y-x}{y(y+1-x)},0)^T,
\end{equation*} 
\begin{equation*}
\mathbf{b}=(\frac{1}{y^2+y},-\frac{x^2+y(y+1)^2-xy(y+2)}{y(y+1)(y+1-x)(y-x)},\frac{y-x}{y(y+1-x)},\frac{1}{y-x})^T.
\end{equation*}
Taking 
\begin{equation*}
\mathbf{z}=2n^{-10}y^3\mathbf{a}+n^{-10}\mathbf{b}=n^{-10}\left(\frac{2y^3+1}{y^2+y},\frac{2x(y-x)y^3+xy^2+2x-y^3-2y^2-y}{y(y+1)(y+1-x)(y-x)},\frac{(2y^3+1)(y-x)}{y(y+1-x)},\frac{1}{y-x}\right),
\end{equation*} 
we have $\mathbf{z}\geq \mathbf{0}$, and 
\begin{equation*}
A\mathbf{z}=(2n^{-10}y^3,2n^{-10}y^3,2n^{-10}y^3+n^{-10},2n^{-10}y^3+n^{-10}).
\end{equation*}
Thus if with probability $z_\ell$ the buyer has type $v^{(\ell)}$, the four buyer types contribute $R_{ij}(n)=2n^{-10}y^3$ revenue if $p_i\equiv p_j\textrm{ mod }3$, and $R_{ij}(n)+n^{-10}=2n^{-10}y^3+n^{-10}$ otherwise.

Now we compute the total realization probability of the four buyer types in the distribution: $\|\mathbf{z}\|_1<10y^3n^{-10}<270n^{-7}$ by the definition of $\mathbf{z}$ and $y<3n$. Since there are less than $\frac{1}{2}n^2$ edges, the total realization probability of all buyer types added in this property is less than $\frac{1}{2}n^2\cdot270n^{-7}=135n^{-5}<\frac{1}{240n^3}$ for $n>180$ as required in Property 3. The total realization property of all buyer types added in Property 3 and 4 is less than $\frac{3}{40n}+\frac{1}{240n^3}<\frac{3}{20n}$ as required in Property 2. This completes the proof of the correctness of the construction.

\end{proof}

\subsection{Omitted proofs in Section~\ref{sec:totally-ordered-ptas}}
\label{sec:proof-totally-ordered-ptas}

\begin{numberedtheorem}{\ref{thm:totally-ordered-algorithm}}
For a unit-demand buyer with totally ordered item values, there exists an algorithm running in $poly(m,n^{poly(1/\eps)},\log\range)$ time that computes an item pricing that is $(1+\eps)$-approximation in revenue to the optimal item pricing.
\end{numberedtheorem}

\begin{proof}
We prove the theorem in several steps.
\begin{enumerate}
\item There exists a near optimal item pricing where all prices are non-decreasing powers of $(1+\eps^2)$. Let $\Pi=\{(1+\eps^2)^r | r\in\Z\}\cup\{0\}$. Then for all item pricings $p$, there exists $q^{(1)}\in\Pi^n$, such that for all value functions $v$,
  \begin{align*}
    \rev_{q^{(1)}}(v) \ge (1-O(\eps))\rev_p(v). 
  \end{align*}
\item At a small loss in revenue, we can restrict prices to lie in a small set. In particular, for all value distributions $\dist$ with value range $\range$, there exists an efficiently computable set $\Pi^*\subset\Pi$ with $|\Pi^*|=poly(1/\eps,\log \range)$ such that for all item pricings $q^{(1)}\in \Pi^n$, there exists an item pricing $q^{(2)}\in {\Pi^*}^n$ satisfying
  \begin{align*}
    \rev_{q^{(2)}}(\dist) \ge (1-O(\eps))\rev_{q^{(1)}}(\dist). 
  \end{align*}
\item Next, we define for each unit demand buyer an additive value function that closely mimics it. For a unit-demand value function $v$, define $\vadd_i=v_i-v_{i-1}$ and let $\vadd$ be the value function that assigns to a set $S\subseteq [n]$ the value $\sum_{i\in S} \vadd_i$.

  We also define a new kind of pricing that we will call an {\em interval prefix pricing}. Given a partition of the $n$ items into $t$ intervals, $I_{i_0,i_1}$, $I_{i_1,i_2}$, $\cdots$, $I_{i_{t-1},i_{t}}$ with $i_0=0$ and $i_t=n$, an interval prefix pricing $q$ is a menu with $n$ options; The $j$th option allocates the set $I_{i_\ell,j}=\{i_\ell+1, i_\ell+2, \cdots, j\}$ at price $q_j$ where $i_\ell< j\le i_{\ell+1}$.

  We furthermore say that an interval prefix pricing $q$ satisfies {\em price gap} $(\gamma,\delta)$ if (1) menu options corresponding to different intervals are priced multiplicatively apart: for all $i$, $j$, and $\ell$ with $i\le i_\ell<j$, $q_j\ge (1+\eps^2)^\gamma q_i$, (2) and, menu options corresponding to any single interval are priced multiplicatively close to each other: for all $i$, $j$, and $\ell$ with $i_\ell<i<j\le i_{\ell+1}$, $q_j\le (1+\eps^2)^\delta q_i$.
  
  We show that for value distribution $\dist$ and its corresponding additive value distribution $\dist^{\oplus}$ and item pricing $q^{(2)}\in {\Pi^*}^n$, there exists an efficiently computable set $\Pi'$ with $|\Pi'|=|\Pi^*|$ and an interval prefix pricing $q^{(3)}$ with $q^{(3)}_i\in \Pi'$ for all $i\in [n]$ and price gap $(\frac{1}{\eps^2}\ln\frac{1}{\eps^2},\frac{1}{\eps^3}\ln\frac{1}{\eps^2})$, such that
  \begin{align*}
    \rev_{q^{(3)}}(\dist^{\oplus}) \ge (1-O(\eps))\rev_{q^{(2)}}(\dist).
  \end{align*}

  The converse also holds: for every unit demand value function $v$ and interval prefix pricing $q$ with price gap $(\frac{1}{\eps^2}\ln\frac{1}{\eps^2},\frac{1}{\eps^3}\ln\frac{1}{\eps^2})$, we can efficiently compute an item pricing $q^{(4)}$ such that
  \begin{align*}
    \rev_{q^{(4)}}(v) \ge (1-O(\eps)) \rev_{q}(\vadd).
  \end{align*}

\item Finally, we show that for any distribution over additive values $\vadd$ and any set $\Pi'$ of values, an optimal interval prefix pricing $q$, with $q_i\in \Pi'$ for all $i\in [n]$ and price gap $(\frac{1}{\eps^2}\ln\frac{1}{\eps^2},\frac{1}{\eps^3}\ln\frac{1}{\eps^2})$, can be found in time polynomial in $|\Pi'|$, $n^{poly(1/\eps)}$, and $m$.
  
\end{enumerate}

The algorithm can be described as follows. By the last step, we can efficiently compute the optimal interval prefix pricing $q$ with price gap $(\frac{1}{\eps^2}\ln\frac{1}{\eps^2},\frac{1}{\eps^3}\ln\frac{1}{\eps^2})$ for the distribution $\dist^{\oplus}$ over additive buyer $\vadd$ that corresponds to the unit-demand distribution $\dist$, such that all item prices are in $\Pi'$. By Step 3, we can efficiently compute an item pricing $q^{(4)}$ with $\rev_{q^{(4)}}(v) \ge (1-O(\eps)) \rev_{q}(\vadd)$. Also by the first three steps, 
\begin{eqnarray*}
\rev_{q^{(4)}}(\dist)&\geq&(1-O(\eps))\rev_{q}(\dist^{\oplus})\geq(1-O(\eps))\rev_{q^{(3)}}(\dist^{\oplus})\geq (1-O(\eps))\rev_{q^{(2)}}(\dist)\\
&\geq& (1-O(\eps))\rev_{q^{(1)}}(\dist)\geq (1-O(\eps))\rev_{p}(\dist)
\end{eqnarray*}
for the optimal item pricing $p$. $q^{(4)}$ can be found in $poly(m,n^{poly(1/\eps)},|\Pi'|)=poly(m,\log\range,n^{poly(1/\eps)})$ time. 

Now we elaborate on each step in more detail. For simplicity, we assume $\frac{1}{\eps}$ is an integer.

\paragraph{Step 1.} We first introduce a useful lemma for approximating the revenue of a mechanism $p$ via another pricing function $q$ that approximates $p$ closely \textit{pointwise}. Slightly different forms of the lemma appears in multiple papers in the simple-versus-optimal mechanism design and the menu-size complexity literature, and is attributed to Nisan\footnote{For a more detailed discussion of the history of the idea of the lemma, we would point the readers to footnote 24 of \cite{gonczarowski2021sample}.}. For completeness, we provide a proof of the lemma after the proof of Theorem~\ref{thm:totally-ordered-algorithm}.

\begin{lemma}\label{lem:Nisan}
For any $\eps>0$, let $p$ and $q$ be two pricing functions satisfying $q(\lambda)\leq p(\lambda)\leq (1+\eps) q(\lambda)$ for all random allocations $\lambda\in\Delta(2^{[n]})$. Then for scaling factor $\alpha=(1+\eps)^{-1/\sqrt{\eps}}$ and any valuation function $v$, 
\begin{equation*}
\rev_{\alpha q}(v)\geq(1-3\sqrt{\eps})\rev_{p}(v).
\end{equation*}
\end{lemma}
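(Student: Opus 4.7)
The plan is to fix a valuation $v$, let $\lambda^*$ be the buyer's utility-maximizing choice under pricing $p$ and $\lambda'$ be her choice under $\alpha q$, and combine the two incentive-compatibility (IC) inequalities with the sandwich hypothesis $q \le p \le (1+\eps)q$. First I would write down
\begin{align*}
v(\lambda^*) - p(\lambda^*) &\ge v(\lambda') - p(\lambda'), \\
v(\lambda') - \alpha q(\lambda') &\ge v(\lambda^*) - \alpha q(\lambda^*),
\end{align*}
and add them to cancel the value terms, obtaining $p(\lambda') - \alpha q(\lambda') \ge p(\lambda^*) - \alpha q(\lambda^*)$.

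Next, I would use $q(\lambda^*) \le p(\lambda^*)$ to bound the right-hand side from below by $(1-\alpha)\, p(\lambda^*)$, and $p(\lambda') \le (1+\eps) q(\lambda')$ to bound the left-hand side from above by $(1+\eps-\alpha)\, q(\lambda')$. This yields
\[
(1+\eps-\alpha)\, q(\lambda') \;\ge\; (1-\alpha)\, p(\lambda^*),
\]
and multiplying both sides by $\alpha/(1+\eps-\alpha)$ gives the pointwise revenue inequality
\[
\rev_{\alpha q}(v) \;=\; \alpha q(\lambda') \;\ge\; \frac{\alpha(1-\alpha)}{1+\eps-\alpha}\, \rev_p(v).
\]
(The degenerate case where the buyer picks the empty lottery under $\alpha q$ is harmless: substituting it into both IC inequalities forces $\rev_p(v) = 0$, making the conclusion trivial.)

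The remaining task is to verify that $\frac{\alpha(1-\alpha)}{1+\eps-\alpha} \ge 1 - 3\sqrt{\eps}$ for $\alpha = (1+\eps)^{-1/\sqrt{\eps}}$. I would invoke two standard bounds: $(1+\eps)^{1/\eps} \le e$ gives $\alpha \ge e^{-\sqrt{\eps}} \ge 1-\sqrt{\eps}$, while Bernoulli's inequality gives $(1+\eps)^{1/\sqrt{\eps}} \ge 1+\sqrt{\eps}$, so $1-\alpha \ge \sqrt{\eps}/(1+\sqrt{\eps})$. Plugging these in and simplifying yields
\[
\frac{\alpha(1-\alpha)}{1+\eps-\alpha} \;\ge\; \frac{1-\sqrt{\eps}}{(1+\sqrt{\eps})^2},
\]
after which the elementary identity $(1-\sqrt{\eps}) - (1-3\sqrt{\eps})(1+\sqrt{\eps})^2 = 5\eps + 3\eps^{3/2} \ge 0$ finishes the proof.

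The main obstacle is really just the choice of the scaling $\alpha$: the deviation $1-\alpha = \Theta(\sqrt{\eps})$ is tuned to balance the loss from the seller collecting only $\alpha q$ instead of $q$ against the loss in the rearrangement step (where the denominator $1+\eps-\alpha$ becomes $\Theta(\sqrt{\eps})$). Once this scaling is fixed, the IC/sandwich manipulation and the numerical check are routine.
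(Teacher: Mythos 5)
Your proof is correct and follows essentially the same route as the paper: write the two incentive-compatibility inequalities under $p$ and under $\alpha q$, add them to cancel the value terms, then apply the sandwich $q \le p \le (1+\eps)q$ on each side and verify numerically that $\alpha = (1+\eps)^{-1/\sqrt{\eps}}$ beats $1-3\sqrt{\eps}$. The one place you are a bit cleaner is the final bookkeeping: you keep the chain in terms of $\alpha q(\lambda') = \rev_{\alpha q}(v)$ throughout, arriving directly at $\rev_{\alpha q}(v) \ge \frac{\alpha(1-\alpha)}{1+\eps-\alpha}\,\rev_p(v)$, whereas the paper's displayed inequality bounds $p(\lambda')$ (not $\alpha q(\lambda')$) against $p(\lambda)$ and then asserts the revenue conclusion, leaving the reader to supply the conversion $\alpha q(\lambda') \ge \frac{\alpha}{1+\eps}p(\lambda')$ and re-check the constant. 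Your explicit identity $(1-\sqrt{\eps}) - (1-3\sqrt{\eps})(1+\sqrt{\eps})^2 = 5\eps + 3\eps^{3/2}$ closes the numerical step without that ambiguity, and your treatment of the empty-lottery case is fine (it is already subsumed by the general chain, since $q(\lambda') = 0$ forces $p(\lambda^*) \le 0$).
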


Now come back to Step 1 of the proof. Consider the optimal item pricing $p$. Let $q$ be the item pricing that rounds the price of each item down to the closest integral power of $(1+\eps^2)$. Then for any lottery $\lambda$, $q(\lambda)\leq p(\lambda)<(1+\eps^2)q(\lambda)$. By Lemma~\ref{lem:Nisan}, $q^{(1)}=(1+\eps^2)^{-1/\eps}q$ is an item pricing with power-of-$(1+\eps^2)$ prices for each item, and achieves $(1-O(\eps))$ fraction of the revenue of $p$. This means that focusing on finding the optimal item pricing with discretized item prices only loses a $(1-O(\eps))$-factor in revenue. 

Any item pricing is equivalent to an item pricing with non-decreasing item prices in the totally ordered setting, since for any pricing $p$, if there exist two items $i<j$ with $p_i>p_j$, then no buyer ever prefers $i$ to $j$ since item $i$ is worse in quality with higher prices. Thus setting $p_i=p_j$ keeps the incentive of all buyer types unchanged, while after finite such operations, we can get an item pricing with non-decreasing item prices and the same revenue.

\paragraph{Step 2.} It suffices to show that there exists an item pricing with all item prices being either 0 or bounded in range $[\Omega(\eps^2),\range]$, that achieves an $(1-O(\eps))$ fraction of the revenue of $q^{(1)}$. 

The proof has the same nudging idea as Lemma~\ref{lem:Nisan}. Let $q^{(2)}$ be defined as follows. For each item $i$ with $q^{(1)}_i\leq \eps^2$, $q^{(2)}_i=0$; otherwise, $q^{(2)}_i=(1-\eps)q^{(1)}_i$. Then for each buyer type $v$, assume that she purchases item $i$ under $q^{(1)}$ and $j$ under $q^{(2)}$. Since the buyer prefers $i$ over $j$ under $q^{(1)}$, we have 
\begin{equation}\label{eqn:step2ineq1}
v_i-q^{(1)}_i\geq v_{j}-q^{(1)}_j;
\end{equation}
Since the buyer prefers $j$ over $i$ under $q^{(2)}$, and $(1-\eps)(q^{(1)}-\eps^2)\leq q^{(2)}\leq (1-\eps)q^{(1)}$, we have
\begin{equation}\label{eqn:step2ineq2}
v_j-(1-\eps)(q^{(1)}_j-\eps^2)\geq v_j-q^{(2)}_j\geq v_{i}-q^{(2)}_i\geq v_i-(1-\eps)q^{(1)}_i.
\end{equation}
Add inequalities \eqref{eqn:step2ineq1} and \eqref{eqn:step2ineq2}, we have $q^{(1)}_j\geq q^{(1)}_i-\eps+\eps^2$. Thus 
\begin{equation*}
q^{(2)}_j\geq (1-\eps)(q^{(1)}_j-\eps^2)\geq(1-\eps)q^{(1)}_i-\eps+\eps^2.
\end{equation*}
Taking the expectation over $v\sim \dist$. Observe that $\rev_{q^{(1)}}(\dist)\geq 1$ by selling only item $n$ at price 1, since in the input model we assume that each valuation function has at least value 1 for some item. We have 
\begin{equation*}
\rev_{q^{(2)}}(\dist)\geq (1-\eps)\rev_{q^{(1)}}(\dist)-\eps+\eps^2=(1-O(\eps))\rev_{q^{(1)}}(\dist).
\end{equation*}
The item prices of $q^{(2)}$ are in some set $\Pi^*$ with $|\Pi^*|=O(\frac{1}{\eps}\log\frac{\range}{\eps})$ since all item prices in $q^{(2)}$ are in range $[(1-\eps)\eps^2,\range]$ and are powers of $(1+\eps^2)$ in $q^{(1)}$ multiplied by $(1-\eps)$.

\paragraph{Step 3.} For item pricing $q^{(2)}$, consider the following $\frac{1}{\eps}$ sets of prices: for each $\ell\in[\frac{1}{\eps}]$, let
\begin{equation*}
V_{\ell}=\left\{(1+\eps^2)^{s}\Big|s\in \Z,\ s\equiv a \ mod \frac{1}{\eps^3}\ln\frac{1}{\eps^2}\textrm{ for }\frac{\ell}{\eps^2}\ln\frac{1}{\eps^2}\leq a<\frac{\ell+1}{\eps^2}\ln\frac{1}{\eps^2}\right\}.
\end{equation*}
Since $V_{\ell}$ are disjoint sets with the union being the set of all power-of-$(1+\eps^2)$ prices, there exists a set of prices $V_{\ell}$ such that in item pricing $q^{(2)}$, the revenue contribution from items with price in $V_{\ell}$ is at most $\eps$ fraction of the total revenue. Consider the following item pricing $q'$: for any item with price $q^{(2)}_i\not\in V_{\ell}$, $q'_i=q^{(2)}_i$; for any item with $q^{(2)}_i\in V_{\ell}$, $q'_i$ is set to the smallest $q^{(2)}_j\not\in V_{\ell}$ with $j>i$. In other words, the prices of all items in $V_{\ell}$ are raised while the incentives of all buyer types that do not purchase an item with price in $V_{\ell}$ in $q^{(2)}$ stay unchanged. Thus the revenue of $q'$ is at least $(1-O(\eps))$ fraction of the revenue of $q^{(2)}$, $\rev_{q'}(\dist)\geq(1-O(\eps))\rev_{q^{(2)}}(\dist)$.

Now we construct an interval prefix pricing $q$ as follows. The price of each item $q_i$ is set to $q'_i$, with all items naturally grouped to intervals as follows: for any integer $r$, all items with $\log_{1+\eps^2}q_i$ in range $[\frac{r}{\eps^3}\ln\frac{1}{\eps^2}+\frac{\ell+1}{\eps^2}\ln\frac{1}{\eps^2},\frac{r+1}{\eps^3}\ln\frac{1}{\eps^2}+\frac{\ell}{\eps^2}\ln\frac{1}{\eps^2})$ are grouped to an interval. Then for any two items $i,j$ in the same interval, the logarithm (with base $1+\eps^2$) of the two prices differs by at most $\frac{1}{\eps^3}\ln\frac{1}{\eps^2}$; for any two items $i<j$ in different intervals, the prices differ by a factor at least $(1+\eps^2)^{\frac{1}{\eps^2}\ln\frac{1}{\eps^2}}>\frac{1}{\eps^2}$. Thus $q$ has price gap $(\frac{1}{\eps^2}\ln\frac{1}{\eps^2},\frac{1}{\eps^3}\ln\frac{1}{\eps^2})$.

Let $I_{i_0,i_1}$, $I_{i_1,i_2}$, $\cdots$, $I_{i_{t-1},i_{t}}$ be the interval partitioning of the entire set of items.
Now we study the price of any set $T=\{j_1,j_2,\cdots,j_b\}$ of items under such an interval prefix pricing $q$. Suppose that the item with the largest index $j_b$ is in interval $I_{i_{d-1},i_{d}}$. Then the price $q(T)$ of the set is lower-bounded by $q'_{j_b}$, and upper-bounded by $q'_{j_b}$ plus the prices of all items in intervals prior to $I_{i_{d-1},i_{d}}$. The payment for interval $I_{i_{d-2},i_{d-1}}$ is at most 
\begin{equation*}
q'_{i_{d-1}}<\eps^2 q'_{i_{d-1}+1}\leq \eps^2 q'_{j_b}.
\end{equation*}
Similarly the payment for intervals with smaller indexes are geometrically smaller, and sum up to at most $\eps^2 q'_{j_b}$. This means that $q'_{j_b}\leq q(T)\leq (1+2\eps^2)q'_{j_b}<(1+4\eps^2)q'_{j_b}$. 

On the other hand, we introduce a new mechanism $q''$ called \textit{prefix pricing}. In such a mechanism, the buyer can pay $q'_i$ to obtain a prefix set of items $\{1,2,\cdots,i\}$. For any unit-demand buyer $v$ with totally-ordered item values, she buys set $\{1,2,\cdots,i\}$ under $q''$ if and only if she buys $i$ under $q'$. Also under $q''$, buyer $v$ and $\vadd$ purchase the same set. Thus $\rev_{q''}(\vadd)=\rev_{q''}(v)=\rev_{q'}(v)$. In prefix pricing $q''$, the price of set $T$ is $q''(T)=q'_{j_b}$, which is the price of the smallest prefix that contains $j_b$. Thus 
\begin{equation}\label{eqn:totally-ordered-approx-gap}
(1+2\eps^2)^{-1}q(T)\leq q''(T)\leq q(T)\leq (1+2\eps^2)q''(T).
\end{equation} 
By Lemma~\ref{lem:Nisan}, there exists a scaling $q^{(3)}=(1+4\eps^2)^{-1/(2\eps)}q$ of interval prefix pricing $q$ which gives $(1-O(\eps))$-fraction of the revenue of $q''$ for any buyer $\vadd$. Thus
\begin{equation*}
\rev_{q^{(3)}}(\dist^{\oplus})\geq (1-O(\eps))\rev_{q''}(\dist^{\oplus})=(1-O(\eps))\rev_{q'}(\dist)\geq(1-O(\eps))\rev_{q^{(2)}}(\dist).
\end{equation*}
Since all item prices in $q$ are in $\Pi^*$, we have that all item prices in $q^{(3)}$ are in $\Pi'=\{(1+4\eps^2)^{-1/(2\eps)}y|y\in\Pi^*\}$.

The converse also holds: given any interval prefix pricing $q$ with price gap $(\frac{1}{\eps^2}\ln\frac{1}{\eps^2},\frac{1}{\eps^3}\ln\frac{1}{\eps^2})$, we can define prefix pricing $q''$ such that $q''(\{1,2,\cdots,i\})=q_i$. Then \eqref{eqn:totally-ordered-approx-gap} still holds, which means that by Lemma~\ref{lem:Nisan}, there exists a scaling $q^o=(1+4\eps^2)^{-1/(2\eps)}q''$ of prefix pricing $q''$ which gives $(1-O(\eps))$-fraction of the revenue of $q$ for any additive buyer $\vadd$. Let $q^{(4)}$ be the item pricing with $q^{(4)}_i=q^o(\{1,2,\cdots,i\})$. Then
\begin{equation*}
\rev_{q^{(4)}}(v)=\rev_{q^o}(v)=\rev_{q^o}(\vadd)\geq(1-O(\eps))\rev_{q}(v).
\end{equation*}

\paragraph{Step 4.} We aim to solve the following problem: find the optimal interval prefix pricing $q$ with price gap $(\frac{1}{\eps^2}\ln\frac{1}{\eps^2},\frac{1}{\eps^3}\ln\frac{1}{\eps^2})$ for an additive buyer, such that the item prices $q_i$ that define $q$ are in set $\Pi'$.

This can be solved via the following dynamic program. Since the buyer is additive, the revenue contribution from each interval can be calculated separately without worrying about the incentive of the buyer. Let $F[i,z]$ be the optimal revenue of the interval prefix pricing (with price gap $(\gamma,\delta)=(\frac{1}{\eps^2}\ln\frac{1}{\eps^2},\frac{1}{\eps^3}\ln\frac{1}{\eps^2})$ and without prefix buying constraints) from only items in prefix set $\{1,2,\cdots,i\}$, with the last item price being $q_i=z$. Then we can write the following recursive formula:
\begin{equation*}
F[i,z]=\max_{j<i,y\geq z(1+\eps^2)^{-\delta},w\leq y(1+\eps^2)^{-\gamma},y,z\in \Pi'}\Big\{F[j,w]+G[j,i,y,z)\Big\},
\end{equation*}
where $G[j,i,y,z)$ denotes the optimal revenue of a pricing $q$ which sells a prefix of interval $I_{j,i}=\{j+1,j+2,\cdots,i\}$ with price $q_{j+1}=y$ and $q_i=q(\{j+1,j+2,\cdots,i\})=z$ to an additive buyer with distribution $\dist^{\oplus}$. In the recursive formula, we want all items in $I_{j,i}$ to be priced in range $[y,z]$ with $y\geq z(1+\eps^2)^{-\delta}$, with all items in $\{1,2,\cdots,j\}$ to be priced at most $w\leq y(1+\eps^2)^{-\gamma}$. The objective we want to solve is $\max_{z\in \Pi'}F[n,z]$. 

Now we analyze the running time of the dynamic program. Each $i,j,y,z$ in the recursive formula has $poly(m,n,\frac{1}{\eps})$ possibilities. The running time of calculating each $G[j,i,y,z)$ is $poly(m,n^\delta)=poly(m,n^{\frac{1}{\eps^3}\ln\frac{1}{\eps^2}})$, since there are only at most $\delta$ different prices for the items in $I_{j,i}$, thus at most $poly(n^\delta)$ different non-decreasing pricings. Therefore the dynamic program can be solved in $poly(m,n^{poly(1/\eps)},|\Pi'|)$ time.

\end{proof}

\begin{numberedlemma}{\ref{lem:Nisan}}
For any $\eps>0$, let $p$ and $q$ be two pricing functions satisfying $q(\lambda)\leq p(\lambda)\leq (1+\eps) q(\lambda)$ for all random allocations $\lambda\in\Delta(2^{[n]})$. Then for scaling factor $\alpha=(1+\eps)^{-1/\sqrt{\eps}}$ and any valuation function $v$, 
\begin{equation*}
\rev_{\alpha q}(v)\geq(1-3\sqrt{\eps})\rev_{p}(v).
\end{equation*}
\end{numberedlemma}

\begin{proof}[Proof of Lemma~\ref{lem:Nisan}]
Suppose that under pricing $p$ the buyer $v$ purchases lottery $\lambda$, while under pricing $p'=\alpha q$ the buyer purchases $\lambda'$. Then since the buyer has higher utility purchasing $\lambda$ than $\lambda'$ under $p$,
\begin{equation*}
v(\lambda)-p(\lambda)\geq v(\lambda')-p(\lambda').
\end{equation*}
Since the buyer has higher utility purchasing $\lambda'$ than $\lambda$ under $p'$,
\begin{equation*}
v(\lambda')-p'(\lambda')\geq v(\lambda)-p'(\lambda).
\end{equation*}
Adding the above two inequalities, we have
\begin{equation*}
p(\lambda')-p'(\lambda')\geq p(\lambda)-p'(\lambda).
\end{equation*}
By $p'(\lambda')=\alpha q(\lambda')\geq \alpha(1+\eps)^{-1}p(\lambda')$, and $p'(\lambda)=\alpha q(\lambda)\leq \alpha p(\lambda)$, from the above inequality we have
\begin{equation}\label{eqn:Nisanapprox}
(1-\alpha(1+\eps)^{-1})p(\lambda')\geq p(\lambda')-p'(\lambda')\geq p(\lambda)-p'(\lambda)\geq (1-\alpha)p(\lambda).
\end{equation}
Notice that 
\begin{eqnarray*}
\alpha=(1+\eps)^{-1/\sqrt{\eps}}<(1+\eps)e^{-\sqrt{\eps}}<(1+\eps)(1-\sqrt{\eps}+\eps)<1-\sqrt{\eps}+2\eps,
\end{eqnarray*}
and 
\begin{eqnarray*}
\alpha=(1+\eps)^{-1/\sqrt{\eps}}>e^{-\sqrt{\eps}}>1-\sqrt{\eps}.
\end{eqnarray*}
Apply the bound of $\alpha$ to \eqref{eqn:Nisanapprox}, we have
\begin{equation*}
p(\lambda')\geq\frac{1-\alpha}{1-\alpha(1+\eps)^{-1}}p(\lambda)>\frac{1-(1-\sqrt{\eps}+2\eps)}{1-(1-\sqrt{\eps})(1+\eps)^{-1}}p(\lambda)>1-3\sqrt{\eps}p(\lambda).
\end{equation*} 
Thus $\rev_{\alpha q}(v)\geq(1-3\sqrt{\eps})\rev_{p}(v)$.

\end{proof}

\subsection{Omitted proofs in Section~\ref{sec:totally-ordered-additive}}
\label{sec:proof-totally-ordered-additive}

\begin{numberedtheorem}{\ref{thm:totally-ordered-approx-additive}}
  For any additive buyer with totally ordered value for all items, item pricing cannot achieve an approximation ratio better than $O(\log\log n)$ to the optimal (deterministic) buy-many mechanism.
\end{numberedtheorem}

\begin{proof}
  
  For an additive buyer, an ordered item setting means that every buyer with value $v$, $v_1\leq\cdots\leq v_n$. Let $\subadditiverev(\dist)$ be the revenue from the optimal deterministic buy-many mechanism (i.e. monotone subadditive pricing) for a distribution $\dist$ over additive buyers. Let $\srev(\dist)$ be the revenue from the optimal item pricing for distribution $\dist$.
  
  We show this theorem by contradiction. More precisely, we show that if for any additive buyer with ordered item values, item pricing can approximate the optimal deterministic buy-many revenue with factor $o(\log\log n)$, then for any additive buyer item pricing can approximate the optimal deterministic buy-many revenue with factor $o(\log n)$, which contradicts Theorem 4.4 of \cite{chawla2019buy}. A restatement of the theorem is as follows:
  
  \begin{theorem}\label{thm:ec19-thm44}[Theorem 4.4 of \cite{chawla2019buy}]
  There exists a distribution $\dist_0$ over additive buyer types with $n$ items such that no item pricing algorithm can get a revenue more than $1/o(\log n)$ fraction of the revenue from the optimal deterministic buy-many mechanism. Moreover, the item values of each are either 0 or in range $[2n^{-1/6}, 1]$, while $\subadditiverev(\dist_0)\geq n^{1/6}$.
  \end{theorem}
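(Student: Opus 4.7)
The plan is to construct an explicit distribution $\dist_0$ by encoding a suitably hard set system, then lower-bound the buy-many revenue using bundle-style subadditive pricing and upper-bound any item-pricing's revenue via a multi-scale bucketing argument that loses a $\log n$ factor. The overall structure mirrors the classical $\Omega(\log n)$-hardness constructions for unit-demand pricing (e.g.\ \cite{briest2008uniform, chalermsook2013independent}) but is adapted so the value functions are additive.

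First, I would set $s \approx n^{1/2}$ and $m \approx n^{1/3}$ and take $\mathcal{F}=\{S_1,\dots,S_m\}$ to be a family of $s$-subsets of $[n]$ with nearly uniform pairwise intersections $|S_i \cap S_j| \le s/n^{1/6}$ for $i \ne j$. Such a family can be obtained by a random construction (standard Chernoff + union bound) or by an explicit coded design. For each $S \in \mathcal{F}$ I introduce a buyer type $v_S$ that is additive with $v_{S,i}=2n^{-1/6}$ for $i\in S$ and $v_{S,i}=0$ otherwise, each realized with probability $1/m$. All nonzero item values lie in $[2n^{-1/6},1]$, so the range constraint in the statement holds by construction.

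For the buy-many lower bound, I would exhibit the deterministic subadditive pricing $\pi$ that sells each bundle $S\in\mathcal{F}$ at price $p_S=(1-\eps)\cdot s\cdot 2n^{-1/6}$ and extends subadditively to all other sets (take $\pi(T)=\min_{S\in\mathcal{F}:T\subseteq S} p_S$, with $\pi(T)=\infty$ if no such $S$ exists, or more safely define $\pi(T)$ as the cheapest cover by bundles in $\mathcal{F}$). By the intersection bound, for $S\neq T\in\mathcal{F}$, $v_S(T)\le 2n^{-1/6}\cdot s/n^{1/6}$, which is much smaller than $p_T$, so buyer $v_S$ strictly prefers her own bundle $S$. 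Summing the per-type revenue $p_S/m$ over $S\in\mathcal{F}$ and tuning parameters yields $\subadditiverev(\dist_0)\ge n^{1/6}$.

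The main obstacle is the item-pricing upper bound. Here, for any item pricing $q\in\R_{\ge 0}^n$, buyer $v_S$ purchases exactly those $i\in S$ with $q_i\le 2n^{-1/6}$, so I would bucket items into $O(\log n)$ price scales $B_j=\{i:q_i\in[2n^{-1/6}\cdot 2^{-j-1},\,2n^{-1/6}\cdot 2^{-j})\}$ for $j=0,1,\dots,O(\log n)$. The revenue $\srev(q)$ decomposes as a sum over these buckets, so some single bucket $B_{j^*}$ contributes at least a $1/\Theta(\log n)$ fraction of the total. I then reduce to a ``flat-price'' instance on $B_{j^*}$: replace $q_i$ on $B_{j^*}$ by a single price $q^*\approx 2n^{-1/6}\cdot 2^{-j^*}$ and bound its revenue. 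This reduces the question to: given a single price $q^*$, how much revenue can the family $\mathcal{F}$ yield? Using the nearly-uniform-intersection property, a covering/double-counting argument (or an LP relaxation over $\mathcal{F}$) shows this flat revenue is $O(n^{1/6})$, so $\srev(q)=O(n^{1/6}/\log n)$, proving the $o(\log n)$-inapproximability. The delicate step — and the main obstacle — is making the bucket-reduction lossless enough and choosing the parameters $s,m$ of $\mathcal{F}$ so that the buy-many benchmark is exactly $n^{1/6}$ while flat item pricing cannot cross the $n^{1/6}/\log n$ barrier.
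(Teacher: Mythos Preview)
The paper does not prove this statement at all: it is quoted verbatim as Theorem~4.4 of \cite{chawla2019buy} and used as a black box inside the proof of Theorem~\ref{thm:totally-ordered-approx-additive}. So there is no ``paper's own proof'' to compare against; any argument you give is necessarily your own.

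More importantly, your construction cannot produce a gap. You set every nonzero item value equal to the single number $c=2n^{-1/6}$. But then the item pricing $q_i=c$ (or $q_i=c-\epsilon$) already extracts essentially the full welfare: an additive buyer $v_S$ buys every item in $S$ and pays $|S|\cdot c = s\cdot 2n^{-1/6}$, which is exactly your bundle price $p_S$. Hence $\srev(\dist_0)\ge (1-\epsilon)\,\subadditiverev(\dist_0)$ and the ratio is $1$, not $\Theta(\log n)$. The set-system combinatorics (small pairwise intersections) are irrelevant once all nonzero values are equal, because for an additive buyer item pricing decomposes item by item and there is no tension between types. The crucial missing ingredient is a \emph{multi-scale} family of values: different buyer types must want different price levels so that any single item pricing can only serve a $1/\Theta(\log n)$ fraction of them well. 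The range $[2n^{-1/6},1]$ in the statement is precisely the hint that values must span $\Theta(\log n)$ geometric scales, not sit at a single point.

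Separately, your bucketing step has the inequality backwards. From ``some bucket $B_{j^*}$ contributes at least a $1/\Theta(\log n)$ fraction of $\srev(q)$'' together with ``the $B_{j^*}$ contribution is $O(n^{1/6})$'' you can only conclude $\srev(q)=O(n^{1/6}\log n)$, not $O(n^{1/6}/\log n)$. Bucketing the \emph{seller's} prices into $O(\log n)$ scales upper-bounds the total by $O(\log n)$ times the best flat-price revenue; to get a $\log n$ \emph{gap} you need the multi-scale structure on the \emph{buyer's} side, so that flat pricing at any one scale already loses a $\log n$ factor against the bundle benchmark.
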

  
  Now consider this distribution from the lemma. From $\dist_0$, we construct distribution $\dist_1$ as follows: for every buyer type $v$ in the support of $\dist_0$, there is a corresponding buyer type $v'= (v_1+\frac{1}{n},\ldots, v_n+\frac{1}{n})$ in the support of $\dist_1$ with the same realization probability. In other words, the buyer's value for each item increases by $\frac{1}{n}$ in $\dist_1$. Then, we construct $\dist_2$ being a distribution over additive valuation functions over $N=M^{n-1}+M^{n-2}+\cdots+1=\frac{M^n-1}{M-1}=O(n^{2n-2})$ items using $\dist_1$ as follows: Let $M=n^2$, and $i_j=M^{n-1}+M^{n-2}+\cdots+M^{n-j}$ for $1\leq j\leq n$.
  For every buyer type $v'$ in the support of $\dist_1$, we ``split'' each item $i$ to $M^{n-i}$ identical items with values scaled down by a factor of $M^{i-n}$ as follows: item 1 is split to $i_1=M^{n-1}$ items $1,2,\cdots,M^{n-1}$ with value $v''_1=v''_2=\cdots=v''_{i_1}=\frac{v'_1}{M^{n-1}}$; item 2 is split to $M^{n-2}$ items $i_1+1,i_1+2,\cdots,i_2$ with value $v''_{i_1+1}=v''_{i_1+2}=\cdots=v''_{i_2}=\frac{v'_2}{M^{n-2}}$; and so on. So for any buyer type $v'$ from the support of distribution $\dist_1$, we get a buyer $v''$ in the support of $\dist_2$ with the following valuation function over $N$ items with the same realization probability:
  
  $$v''= \left(\underbrace{\frac{v'_1}{M^{n-1}}, \cdots, \frac{v'_1}{M^{n-1}}}_{M^{n-1}\textrm{ many}}, \underbrace{\frac{v'_2}{M^{n-2}}, \cdots, \frac{v'_2}{M^{n-2}}}_{M^{n-2}\textrm{ many}}, \cdots, \underbrace{\frac{v'_{n-1}}{M},\cdots, \frac{v'_{n-1}}{M}}_{M \text{ many}}, v'_n\right).$$
  
  Note that $\dist_2$ is a distribution over additive buyer types with ordered item values. This is because for any buyer type $v''$ in the support of $\dist_2$, the corresponding $v$ in the support of $\dist_1$ satisfies $\frac{1}{n}\leq v'_i\leq 1+\frac{1}{n}$. Then for any two items $i$ and $i+1$ in $[N]$, either $v''_i=v''_{i+1}$; or $v''_{i}=\frac{v'_j}{M^{n-j}}$ and $v''_{i+1}=\frac{v'_{j+1}}{M^{n-j-1}}$ for some $j$ in $[n]$, then $v''_i<v''_{i+1}$ as $\frac{v'_{j}}{v'_{j+1}}<n^2=M$.
  
  Now assume that there exists an item pricing for $\dist_2$ that approximates the optimal deterministic buy-many revenue with a factor of $o(\log \log N)$. Let $p$ be the optimal item pricing for $\dist_2$. 
  This means $$\frac{\subadditiverev(\dist_2)}{\rev_p(\dist_2)}=o(\log \log N)=o(\log n).$$
  
  Now we show the following two properties:
  \begin{enumerate}
    \item $\srev(\dist_2)\leq \srev(\dist_0)+1$: Note that without loss of generality we can assume that items $i_{j-1}+1,i_{j-1}+2,\cdots,i_{j}$ that are ``split'' from item $j$ have the same price in $p$, as they are identical items and can have the same optimal item prices for an additive buyer. Also we can assume that the item price $p_i$ for each item in $[N]$ is at least $\frac{1}{n}$, since each buyer type has value at least $\frac{1}{n}$ for each item. Consider the following item pricing $q$ for $\dist_0$: for each item $j\in[n]$, $q_j=M^{j-1}p_{i_j}-\frac{1}{n}=\sum_{i_{j-1}+1\leq i\leq i_j}p_i-\frac{1}{n}$. In other words, for the set of items split from item $j$ in $\dist_2$, their prices in $p$ are aggregated to $q_j$, while the additional $\frac{1}{n}$ added to the item values is removed. Then for any buyer type $v''$ in the support of $\dist_2$ (and its corresponding type $v$ in the support of $\dist_0$), she can afford to purchase item $i\in[i_{j-1}+1,i_j]$ (thus all items in $[i_{j-1}+1,i_j]$) if and only if $v$ can afford to purchase item $j\in[n]$. Furthermore, if $v''$ can afford to purchase items in $[i_{j-1}+1,i_j]$, her payment $M^{j-1}p_{i_j}$ for these items is exactly $\frac{1}{n}$ larger than the payment of the corresponding $v$ for item $j\in[n]$. Thus $\rev_p(\dist_2)\leq \rev_{q}(\dist_0)+n\cdot\frac{1}{n}\leq \srev(\dist_0)+1$.
    \item $\subadditiverev(\dist_2)\geq \subadditiverev(\dist_0)$: For any subadditive pricing $q$ over $\dist_0$, we construct the following subadditive pricing $p$ over $\dist_2$. For any set $S\subseteq [n]$, let $S'=\bigcup_{j\in S}\{i_{j-1}+1,i_{j-1}+2,\cdots,i_j\}\subseteq [N]$ be the set of items split from items in $S$, then $p(S'):=q(S)+\frac{|S|}{n}$. Then $p$ is a subadditive pricing, and the incentives of $v$ and $v''$ are ``identical'' under the two subadditive pricings, as $v(S)-q(S)=v''(S')-p(S')$. Therefore, if $v$ chooses to purchase $S$ under $q$, $v''$ will purchase $S'$ under $p$. As $p(S')\geq q(S)$, we have $\rev_{p}(\dist_2)\geq \rev_{q}(\dist_0)$. By letting $q$ be the optimal subadditive pricing for $\dist_0$, we have $\subadditiverev(\dist_2)\geq \subadditiverev(\dist_0)$.
  \end{enumerate}
  
  Using the above properties we have:

  $$\frac{\subadditiverev(\dist_0)}{\srev(\dist_0)}\leq\frac{\subadditiverev(\dist_2)}{\srev(\dist_2)-1}=o(\log n)$$
  since $\srev(\dist_2)\geq\frac{1}{o(\log n)}\subadditiverev(\dist_2)$, and $\subadditiverev(\dist_2)\geq\subadditiverev(\dist_0)\geq n^{1/6}$. This contradicts Theorem~\ref{thm:ec19-thm44}, which concludes the proof.
  
\end{proof}

\section{Omitted Proofs in Section~\ref{sec:partially-ordered}}
\label{sec:partially-ordered-appendix}

\begin{numberedtheorem}{\ref{thm:general-value-approx-algorithm}}
For a general-valued buyer with partially ordered values, if the partially ordered set containing all items has width $k$, then there exists an algorithm running in $poly(m,n^{poly(k,1/\eps)},\log\range)$ time that computes an item pricing that is $(1+\eps)$-approximation in revenue to the optimal item pricing.
\end{numberedtheorem}

\begin{proof}

The algorithm is similar to the totally-ordered setting, but we need to define the generalized prefixes and intervals.
\begin{itemize}
	\item \textit{Prefix}: For any set $T\in\feasiblesets$, the prefix parameterized by $T$ is a set of items $P_T\supseteq T$ such that item $j\in[n]$ is in $P_T$ if and only if there exists $i\in T$ such that $j\preceq i$. Such a definition ensures that for $T_1,T_2\in\feasiblesets$ with $T_1\preceq T_2$, $P_{T_1}\subseteq P_{T_2}$.

	For example, in the $k$-category setting, a set $T\in\feasiblesets$ has the $i_j$-th item from each category $j$. The prefix $P_T$ defined by $T$ includes the first $i_j$ items in each category $j$. It can be viewed as a more general $k$-dimensional prefix compared to the single-category setting.

	\item \textit{Interval}: For two sets $T,T'\in\feasiblesets$ with $T\subseteq T'$, interval $I_{T,T'}=P_{T'}\setminus P_{T}$ is a set of items with contiguous item types between $T$ and $T'$ in the ordering graph. 

	For example, in the $k$-category setting, an interval contains the $(i'_j+1)$-th to the $i_j$-th items with $i'_j\leq i_j$ in each category $j$. It can be viewed as a more general $k$-dimensional interval compared to the single-category setting.
\end{itemize}
\noindent Now we are ready to describe the proof sketch in steps.
\begin{enumerate}
\item There exists a near optimal item pricing where all prices are powers of $(1+\eps^2)$. Let $\Pi=\{(1+\eps^2)^r | r\in\Z\}\cup\{0\}$. Then for all item pricings $p$, there exists $q^{(1)}\in\Pi^n$, such that for all value functions $v$,
  \begin{align*}
    \rev_{q^{(1)}}(v) \ge (1-O(\eps))\rev_p(v). 
  \end{align*}
Furthermore, without loss of generality, we can assume that for the item pricing $q^{(1)}$ we consider, set $\{i|q^{(1)}_i\leq y\}$ is a prefix for any $y\in\R$.
\item At a small loss in revenue, we can restrict prices to lie in a small set. In particular, for all value distributions $\dist$ with value range $\range$, there exists an efficiently computable set $\Pi^*\subset\Pi$ with $|\Pi^*|=poly(1/\eps,k,\log n, \log \range)$ such that for all item pricings $q^{(1)}\in \Pi^n$, there exists an item pricing $q^{(2)}\in {\Pi^*}^n$ satisfying
  \begin{align*}
    \rev_{q^{(2)}}(\dist) \ge (1-O(\eps))\rev_{q^{(1)}}(\dist). 
  \end{align*}
\item Given a partition of the $n$ items into $t$ intervals, $I=(I_{T_0,T_1},I_{T_1,T_2},\cdots,I_{T_{t-1},T_{t}})$ with $T_0=\emptyset$, and $T_t$ be a set of items that dominates all other items (with $P_{T_t}=[n]$). We furthermore say that for any interval partition $I$, an item pricing $q$ satisfies {\em price gap} $(\gamma,\delta)$ if (1) items corresponding to different intervals are priced multiplicatively apart: for all $i$, $j$, and $\ell$ with $i\in P_{T_{\ell}}$ and $j\not\in P_{T_{\ell}}$, $q_j\ge (1+\eps^2)^\gamma q_i$, (2) and, menu options corresponding to any single interval are priced multiplicatively close to each other: for all $i$, $j$, and $\ell$ with $i,j\in I_{\ell}$, $q_j\le (1+\eps^2)^\delta q_i$.

We show that for value distribution $\dist$ and item pricing $q^{(2)}\in {\Pi^*}^n$, there exists an item pricing $q^{(3)}$ with $q^{(3)}_i\in \Pi^*$ for all $i\in [n]$ and price gap $(\gamma,\delta)=(\frac{1}{\eps^2}\ln\frac{k}{\eps^2},\frac{k}{\eps^3}\ln\frac{k}{\eps^2})$, such that
  \begin{align*}
    \rev_{q^{(3)}}(\dist) \ge (1-O(\eps))\rev_{q^{(2)}}(\dist).
  \end{align*}

\item We define a new kind of pricing that we will call an {\em interval prefix pricing}. Given a partition of the $n$ items into $t$ intervals $I=(I_{T_0,T_1},I_{T_1,T_2},\cdots,I_{T_{t-1},T_{t}})$, an interval prefix pricing $q$ is a mechanism defined by a vector of item prices $(q_1,q_2,\cdots,q_n)$: For any set $S_{\ell}\subseteq I_{\ell}$ of items, there is a menu allocating a set of items $P^*_{S_{\ell}}=\cup_{1\leq j\leq \ell-1}T_{j}\cup S_{\ell}$, with price $q(P^*_{S_{\ell}})=\sum_{i\in P^*_{S_{\ell}}}q_i$. In other words, to purchase any set of items in $S_{\ell}$, the buyer also needs to purchase all sets of items $T_{1},\cdots,T_{\ell-1}$ that define the previous intervals.

We show that for every value function $v$ and item pricing $q^{(3)}\in {\Pi^*}^n$ with price gap $(\gamma,\delta)=(\frac{1}{\eps^2}\ln\frac{k}{\eps^2},\frac{k}{\eps^3}\ln\frac{k}{\eps^2})$, there exists an efficiently computable set $\Pi'$ with $|\Pi'|=|\Pi^*|$ and an interval prefix pricing $q^{(4)}$ with $q^{(4)}_i\in \Pi'$ for all $i\in [n]$ and price gap $(\frac{1}{\eps^2}\ln\frac{k}{\eps^2},\frac{k}{\eps^3}\ln\frac{k}{\eps^2})$, such that
  \begin{align*}
    \rev_{q^{(4)}}(v) \ge (1-O(\eps))\rev_{q^{(3)}}(v).
  \end{align*}

The converse is also true: for every value function $v$ and interval prefix pricing $q$ with price gap $(\frac{1}{\eps^2}\ln\frac{k}{\eps^2},\frac{k}{\eps^3}\ln\frac{k}{\eps^2})$, we can efficiently compute an item pricing $q^{(5)}$ such that
  \begin{align*}
    \rev_{q^{(5)}}(v) \ge (1-O(\eps)) \rev_{q}(v).
  \end{align*}

\item We define for each arbitrary-valued buyer $v$ and interval partitioning $I$ an \textit{additive-over-intervals} value function that closely mimics it. For an arbitrary value function $v$, for any set $S=S_1\cup S_2\cup \cdots\cup S_t$ of items with $S_\ell\subseteq I_\ell$ for every $\ell\in[t]$, define
\begin{equation*}
	\vaddi(S)=\sum_{\ell=1}^{t}\big(v(S_\ell\cup T_{\ell-1})-v(T_{\ell-1})\big).
\end{equation*}
In other words, $\vaddi(S_{\ell})$ is the value gain of getting set $S_{\ell}$, when the buyer $v$ has set of items $T_{\ell-1}$ at hand. We write $\dist_I^{\oplus}$ as the distribution of $\vaddi$ corresponding to $v\sim \dist$. $v$ and $\vaddi$ has the same behavior under an interval prefix pricing defined by interval partition $I$. When the interval partition is clear from the context, we will omit $I$ and write $\vadd=\vaddi$.

We show that for every additive-over-intervals value function $\vadd$ and interval prefix pricing $q^{(4)}$ with $q^{(4)}_i\in \Pi'$, and with price gap $(\gamma,\delta)=(\frac{1}{\eps^2}\ln\frac{k}{\eps^2},\frac{k}{\eps^3}\ln\frac{k}{\eps^2})$, there exists an efficiently computable set $\Pi^o$ with $|\Pi^o|=|\Pi^*|$ and an item $q^{(6)}$ with $q^{(6)}_i\in \Pi^o$ for all $i\in [n]$ and price gap $(\frac{1}{\eps^2}\ln\frac{k}{\eps^2},\frac{k}{\eps^3}\ln\frac{k}{\eps^2})$, such that
  \begin{align*}
    \rev_{q^{(6)}}(\vadd) \ge (1-O(\eps))\rev_{q^{(4)}}(v).
  \end{align*}

The converse also holds: for every value function $v$ and item pricing $q$ with price gap $(\frac{1}{\eps^2}\ln\frac{k}{\eps^2},\frac{k}{\eps^3}\ln\frac{k}{\eps^2})$, we can efficiently compute an interval prefix pricing $q^{(7)}$ with price gap $(\frac{1}{\eps^2}\ln\frac{k}{\eps^2},\frac{k}{\eps^3}\ln\frac{k}{\eps^2})$ such that
  \begin{align*}
    \rev_{q^{(7)}}(v) \ge (1-O(\eps)) \rev_{q}(\vadd).
  \end{align*}

\item Finally, we show that for any distribution $\dist$ over arbitrary values and any set $\Pi^o$ of values, an optimal item pricing $q$ for value distribution $\dist_I^{\oplus}$ and the corresponding interval partition $I$, with $q_i\in \Pi^o$ for all $i\in [n]$ and price gap $(\frac{1}{\eps^2}\ln\frac{k}{\eps^2},\frac{k}{\eps^3}\ln\frac{k}{\eps^2})$, can be found in time polynomial in $|\Pi^o|$, $n^{k\cdot poly(1/\eps)}$, and $m$.
  
\end{enumerate}

The algorithm can be described as follows. By the last step, we can efficiently compute the optimal item pricing $q$ with price gap $(\frac{1}{\eps^2}\ln\frac{k}{\eps^2},\frac{k}{\eps^3}\ln\frac{k}{\eps^2})$ and the optimal interval partitioning for the distribution $\dist^{\oplus}_I$ over additive buyer $\vaddi$ that corresponds to the general value distribution $\dist$, such that all item prices are in $\Pi^o$. By Step 4 and 5, we can efficiently compute an item pricing $q^{(5)}$ with $\rev_{q^{(5)}}(v) \ge (1-O(\eps)) \rev_{q}(\vadd)$. Also by the first five steps, 
\begin{eqnarray*}
\rev_{q^{(5)}}(\dist)&\geq&(1-O(\eps))\rev_{q}(\dist^{\oplus})\geq(1-O(\eps))\rev_{q^{(6)}}(\dist^{\oplus})\geq (1-O(\eps))\rev_{q^{(4)}}(\dist)\\
&\geq& (1-O(\eps))\rev_{q^{(3)}}(\dist)\geq (1-O(\eps))\rev_{q^{(2)}}(\dist)\\
&\geq&(1-O(\eps))\rev_{q^{(1)}}(\dist)\geq (1-O(\eps))\rev_{p}(\dist)
\end{eqnarray*}
for the optimal item pricing $p$. $q^{(5)}$ can be found in $poly(m,n^{k poly(1/\eps)},|\Pi'|)=poly(m,\log\range,n^{k poly(1/\eps)})$ time.  

Now we elaborate on each step in more detail.

\paragraph{Step 1.} This is the same as Step 1 in the totally-ordered setting, as any item pricing is $(1+\eps^2)$-approximated pointwise by a power-of-$(1+\eps^2)$ item pricing.

Without loss of generality, we can assume that for the item pricing $q^{(1)}$ we consider, set $\{i|q^{(1)}_i\leq y\}$ is a prefix for any $y\in\R$. This is because we can assume for any two items $i\prec j$, $q^{(1)}_i\leq q^{(1)}_j$.

\paragraph{Step 2.} This is similar to Step 2 in the totally-ordered setting. It suffices to show that there exists an item pricing with all item prices being either 0 or bounded in range $[\Omega(\frac{1}{kn^k}\eps^2),\range]$, that achieves an $(1-O(\eps))$ fraction of the revenue of $q^{(1)}$. 

Firstly observe that $\rev_{q^{(1)}}(\dist)\geq\frac{1}{n^k}$. This is because by definition of the input distribution, for any $v\sim \dist$, there exists a set $T\in\feasiblesets$ with $v(T)\geq 1$. Then by $|\feasiblesets|<n^k$, there exists a set $T\in\feasiblesets$ with $\Pr_{v\sim\dist}[v(T)\geq 1]\geq\frac{1}{n^k}$. Price each item in $T$ at $\frac{1}{|T|}$, and all other items at price $+\infty$, the mechanism sells $T$ at price 1 with probability $\geq\frac{1}{n^k}$, thus has revenue at least $\frac{1}{n^k}$.

Let $q^{(2)}$ be defined as follows. For each item $i$ with $q^{(1)}_i\leq \frac{1}{kn^k}\eps^2$, $q^{(2)}_i=0$; otherwise, $q^{(2)}_i=(1-\eps)q^{(1)}_i$. Then for each buyer type $v$, assume that she purchases set $T$ under $q^{(1)}$ and $T'$ under $q^{(2)}$. Since the buyer prefers $T$ over $T'$ under $q^{(1)}$, we have 
\begin{equation}\label{eqn:generalstep2ineq1}
v(T)-q^{(1)}(T)\geq v(T')-q^{(1)}(T);
\end{equation}
Since the buyer prefers $T'$ over $T$ under $q^{(2)}$, and $(1-\eps)(q^{(1)}(S)-k\cdot\frac{1}{kn^k}\eps^2)\leq q^{(2)}(S)\leq (1-\eps)q^{(1)}(S)$ for any set $S\in\feasiblesets$, we have
\begin{equation}\label{eqn:generalstep2ineq2}
v(T')-(1-\eps)(q^{(1)}(T')-\frac{1}{n^k}\eps^2)\geq v(T')-q^{(2)}(T')\geq v(T)-q^{(2)}(T)\geq v(T)-(1-\eps)q^{(1)}(T).
\end{equation}
Add inequalities \eqref{eqn:generalstep2ineq1} and \eqref{eqn:generalstep2ineq2}, we have $q^{(1)}(T')\geq q^{(1)}(T)-\frac{1}{n^k}\eps+\frac{1}{n^k}\eps^2$. Thus 
\begin{equation*}
q^{(2)}(T')\geq (1-\eps)(q^{(1)}(T)-\frac{1}{n^k}\eps^2)>(1-\eps)q^{(1)}(T)-\frac{1}{n^k}\eps.
\end{equation*}
Taking the expectation over $v\sim \dist$, and by $\rev_{q^{(1)}}(\dist)\geq \frac{1}{n^k}$ we have 
\begin{equation*}
\rev_{q^{(2)}}(\dist)\geq (1-\eps)\rev_{q^{(1)}}(\dist)-\frac{1}{n^k}\eps=(1-O(\eps))\rev_{q^{(1)}}(\dist).
\end{equation*}
The item prices of $q^{(2)}$ are in some set $\Pi^*$ with $|\Pi^*|=O(\log_{1+\eps^2} (kn^k\range/\eps^2))=poly(k,\log n,\log\range,1/\eps)$ since all item prices in $q^{(2)}$ are in range $[(1-\eps)\frac{1}{kn^k}\eps^2,\range]$ and are powers of $(1+\eps^2)$ in $q^{(1)}$ multiplied by $(1-\eps)$.

\paragraph{Step 3.} This is almost the same as the first part of the proof of Step 3 in the totally ordered case. Consider the following $\frac{k}{\eps}$ sets of prices: for each $\ell\in[\frac{k}{\eps}]$, let
\begin{equation*}
V_{\ell}=\left\{(1+\eps^2)^{s}\Big|s\in \Z,\ s\equiv\ a \ mod\ \frac{k}{\eps^3}\ln\frac{k}{\eps^2}\textrm{ for }\frac{\ell}{\eps^2}\ln\frac{k}{\eps^2}\leq a<\frac{\ell+1}{\eps^2}\ln\frac{k}{\eps^2}\right\}.
\end{equation*}
Then there exists $\ell$ such that in item pricing $q^{(2)}$, the revenue contribution from buyer types that purchase a set $T\in\feasiblesets$ (with size k) that includes an item $i\in T$ with $q^{(2)}_i\in V_{\ell}$ is at most $\eps$ fraction of the total revenue of $p$. 
Let $q^{(3)}$ be the following item pricing: for each item $i$, $q^{(3)}_i$ is the smallest power of $(1+\eps^2)$ that is at least $q^{(2)}_i$ and not in $V_{\ell}$.
We have an item pricing $q^{(3)}$ with all prices not in $V_{\ell}$, while achieves $(1-\eps)$ fraction of the revenue of $q^{(2)}$.

Now we construct the interval partition $I$ as follows. All items are naturally grouped to intervals as follows: for any integer $r$, all items with $\log_{1+\eps^2}q^{(3)}_i$ in range $[\frac{kr}{\eps^3}\ln\frac{k}{\eps^2}+\frac{\ell+1}{\eps^2}\ln\frac{k}{\eps^2},\frac{k(r+1)}{\eps^3}\ln\frac{k}{\eps^2}+\frac{\ell}{\eps^2}\ln\frac{k}{\eps^2})$ are grouped to an interval. Then for any two items $i,j$ in the same interval, the logarithm (with base $1+\eps^2$) of the two prices differs by at most $\frac{k}{\eps^3}\ln\frac{k}{\eps^2}$; for any two items $i<j$ in different intervals, the prices differ by a factor at least $(1+\eps^2)^{\frac{1}{\eps^2}\ln\frac{k}{\eps^2}}<\frac{k}{\eps^2}$. Thus $q^{(3)}$ has price gap $(\frac{1}{\eps^2}\ln\frac{k}{\eps^2},\frac{k}{\eps^3}\ln\frac{k}{\eps^2})$.
 
\paragraph{Step 4.} For any item pricing algorithm $q^{(3)}$ with price gap $(\gamma,\delta)=(\frac{1}{\eps^2}\ln\frac{k}{\eps^2},\frac{k}{\eps^3}\ln\frac{k}{\eps^2})$, let $q$ denote the corresponding interval prefix pricing with $q_i=q^{(3)}_i$ for each item $i$. Consider any set $S\in \feasiblesets$ with $S=S_1\cup S_2\cup\cdots\cup S_{t_1}$, with $S_\ell\in I_{\ell}$ for each $\ell$, and $S_{t_1}$ being non-empty. Then for prefix pricing $q$,
\begin{equation*}
q(S)=\sum_{i\in S_{t_1}}q^{(3)}_i+\sum_{j\in \cup_{\ell\leq t_1-1} T_{\ell}}q^{(3)}_j\in[1,1+O(\eps)]\sum_{i\in S_{t_1}}q^{(3)}_i
\end{equation*}
since for any $j\in T_{\ell-a}$ and $i\in S_{t_1}$, $q^{(3)}_j<\frac{1}{k}\eps^a q^{(3)}_i$ by the definition of $(\gamma,\delta)$ interval price gap and $|T_{\ell-a}|\leq k$. On the other hand, for item pricing $q^{(3)}$ without prefix buying constraint,
\begin{equation*}
q^{(3)}(S)=\sum_{i\in S_{t_1}}p_i+\sum_{j\in \cup_{\ell\leq t_1-1} S_{\ell}}p_j\in[1,1+O(\eps)]\sum_{i\in S_{t_1}}p_i.
\end{equation*}
Therefore 
\begin{equation}\label{eqn:partially-ordered-gap}
q(S)\in [1-O(\eps),1+O(\eps)] q^{(3)}(S).
\end{equation}  
Thus $q$ pointwise $(1+O(\eps))$-approximates $q^{(3)}$. By Lemma~\ref{lem:Nisan} there exists a scaling $q^{(4)}=(1+O(\eps)^2)^{-1/O(\eps)}q$ of $q$ that achieves $(1-O(\eps))$ fraction of the revenue of $q^{(3)}$. Each item price $q^{(4)}_i$ is uniformly scaled from $q_i=q^{(3)}_i\in \Pi^*$, thus there exists an efficiently computable set $\Pi'$ with $\Pi'=|\Pi^*|$ such that $q^{(4)}_i\in \Pi'$, and $\rev_{q^{(4)}}(v)\geq (1-O(\eps))\rev_{q^{(3)}}(v)$

The converse also holds: for any interval prefix pricing $q$ with price gap $(\frac{1}{\eps^2}\ln\frac{k}{\eps^2},\frac{k}{\eps^3}\ln\frac{k}{\eps^2})$, define $q^{(3)}$ be the item pricing with $q^{(3)}_i=q_i$. Then \eqref{eqn:partially-ordered-gap} still holds, which means that by Lemma~\ref{lem:Nisan} there exists a scaling $q^{(5)}=(1+O(\eps)^2)^{-1/O(\eps)}q^{(3)}$ of $q^{(3)}$ that achieves $(1-O(\eps))$ fraction of the revenue of $q$, and can be computed efficiently.

\paragraph{Step 5.} For interval prefix pricing $q^{(4)}$, consider item pricing $q$ with $q_i=q^{(4)}$. For any additive-over-intervals buyer $\vadd$, consider any set $S$ with $S=S_1\cup S_2\cup\cdots\cup S_{t_0}$, where $S_\ell\in I_{\ell}$, and $S_{t_0}\neq\emptyset$. Then using the same reasoning as Step 4 that the predominant part of the $q^{(4)}(S)$ comes from $S_{t_0}$,
\begin{equation*}
q^{(4)}(S)=\sum_{i\in S_{t_0}}q^{(4)}_i+\sum_{j\in \cup_{\ell\leq t_0-1} T_{\ell}}q^{(4)}_j\in[1,1+O(\eps)]\sum_{i\in S_{t_0}}q^{(4)}_i,
\end{equation*}
while
\begin{equation*}
q(S)=\sum_{i\in S_{t_0}}q^{(4)}_i+\sum_{j\in \cup_{\ell\leq t_0-1} S_{\ell}}q^{(4)}_j\in[1,1+O(\eps)]\sum_{i\in S_{t_0}}q^{(4)}_i.
\end{equation*}
Thus 
\begin{equation}\label{eqn:partially-ordered-gap2}
q(S)\in[1-O(\eps),1+O(\eps)]q^{(4)}(S)
\end{equation} 
for any set $S$ that can be purchased by $\vadd$. By Lemma~\ref{lem:Nisan} there exists a scaling $q^{(6)}=(1+O(\eps)^2)^{-1/O(\eps)}q$ of $q$ that achieves $(1-O(\eps))$ fraction of the revenue of $q^{(4)}$. 
Then
\begin{equation*}
\rev_{q^{(6)}}(\vadd)\geq (1-O(\eps))\rev_{q^{(4)}}(\vadd)=(1-O(\eps))\rev_{q^{(4)}}(v),
\end{equation*}
Here the last equality follows from $v$ and $\vadd$ having the same behavior under any interval prefix pricing. 
Since all item prices in $q^{(4)}$ are in an efficiently computable set $\Pi'$, all item prices in $q^{(6)}$ are also in some efficiently computable set $\Pi^o$ with $|\Pi^o|=|\Pi'|=|\Pi^*|$. 

The converse also holds: for any interval prefix pricing $q$ with price gap $(\frac{1}{\eps^2}\ln\frac{k}{\eps^2},\frac{k}{\eps^3}\ln\frac{k}{\eps^2})$, define $q^{(4)}$ to be the item pricing with $q^{(4)}_i=q_i$. Then \eqref{eqn:partially-ordered-gap2} still holds, which means that by Lemma~\ref{lem:Nisan} there exists a scaling $q^{(7)}=(1+O(\eps)^2)^{-1/O(\eps)}q^{(4)}$ of $q^{(4)}$ that achieves $(1-O(\eps))$ fraction of the revenue of $q$, and can be computed efficiently.

\paragraph{Step 6.} We aim to solve the following problem: find the optimal item pricing $q$ with interval price gap $(\gamma,\delta)=(\frac{1}{\eps^2}\ln\frac{k}{\eps^2},\frac{k}{\eps^3}\ln\frac{k}{\eps^2})$ and the corresponding interval partition $I$ for the additive-over-intervals buyer distribution $\dist^{\oplus}_I$, such that the item prices $q_i$ that define $q$ are in set $\Pi^o$.

Generalizing the totally-ordered case, we can solve this via the following dynamic program. Since the buyer is additive over intervals, the revenue contribution from each interval can be calculated separately without worrying about the buyer's incentive. Let $F[i,z]$ be the optimal revenue of the item pricing (with interval price gap $(\gamma,\delta)=(\frac{1}{\eps^2}\ln\frac{k}{\eps^2},\frac{k}{\eps^3}\ln\frac{k}{\eps^2})$) from items in prefix $P_{T_i}$, with the maximum item price in $P_{T_i}$ being $z$. Then we can write the following recursive formula:
\begin{equation*}
F[T,z]=\max_{T'\prec T,y\geq z(1+\eps^2)^{-\delta},w\leq y(1+\eps^2)^{-\gamma},T'\in \feasiblesets,y\in \Pi^o}\Big\{F[T,w]+G[T',T,y,z)\Big\},
\end{equation*}
where $G[T',T,y,z)$ denotes the optimal revenue of a pricing $q$ which sells interval $I_{T',T}=P_{T}\setminus P_{T'}$ with item prices between $y=\min_{i\in I_{T',T}}q_i$ and $z=\max_{i\in I_{T',T}}q_i$ to an additive-over-intervals buyer with values determined by the corresponding general-valued buyer with distribution $\dist$. The objective we want to solve is $\max_{z\in \Pi^o,T_{\max}:P_{T_{\max}}=[n]}F[T_{\max},z]$, where $T_{\max}\in\feasiblesets$ is a set of items that can be demanded by the buyer such that $P_{T_{\max}}=[n]$. 

Now we analyze the running time of the dynamic program. Each $T,T',y,z$ in the recursive formula has $poly(m,n^{k)},|\Pi^o|)$ possibilities. Now we show that the running time of calculating each $G[T',T,y,z)$ is $poly(m,n^{k\delta})=poly(m,n^{\frac{k^3}{\eps^3}\ln\frac{k}{\eps^2}})$. There are only $s\leq \delta$ different prices for the items in $I_{T',T}$. For each price $y=y_1\leq y_2\leq\cdots\leq y_{s}=z$, set $S_{y_j}=\{i:q_i\leq y_j\}$ must be a prefix of the interval, thus at most $n^k$ possibilities. Thus there are at most $poly(n^{k\delta})$ possibilities of sets $S_{y_1}\subseteq S_{y_2}\subseteq\cdots\subseteq S_{y_s}$, which corresponds to $poly(n^{k\delta})$ different non-decreasing pricings. Therefore the dynamic program can be solved in $poly(m,n^{poly(k,1/\eps)},|\Pi^o|)$ time.

\end{proof}

\begin{numberedlemma}{\ref{lem: support-size}}
Let $V$ be a set of non-negative real numbers. Then we can efficiently find a set of power-of-$(1+\eps^2)$ prices $V'$ with $|V'|=O(|V|^2\frac{1}{\eps^2}\ln\frac{1}{\eps^2})$ satisfying the following: For any buyer that is unit-demand over $n$ items such that for any buyer type $v$ and item $i$ the buyer has value $v_{i}\in V$, the optimal power-of-$(1+\eps^2)$ item pricing $p$ satisfies $p_i\in V'$ for any $i\in[n]$. 
\end{numberedlemma}

\begin{proof}[Proof of Lemma~\ref{lem: support-size}]
Define $V'$ as follows: $V'$ contains all value $z$ being power of $1+\eps^2$, such that there exists $y,y'\in V\cup\{0\}$ with $\frac{1}{1+\eps^2}z\leq y-y'\leq \frac{1+\eps^2}{\eps^2}z$. Then $|V'|=O(|V|^2\frac{1}{\eps^2}\ln\frac{1}{\eps^2})$. 

Now we show that there exists an optimal power-of-$(1+\eps^2)$ item pricing $p^*$ such that $p^*_i\in V'$ for each item $i\in[n]$. Consider any optimal power-of-$(1+\eps^2)$ item pricing $p$. Without loss of generality, assume all item prices in $p$ are at most the largest value in $V$, since no buyer type can afford a higher price. If there exists some item $i$ such that $p_i\not\in V'$, consider the set of items $S$ with the lowest such price, and a new item pricing $p'$, which multiplies the prices of the items in $S$ by $(1+\eps^2)$; in other words, $p'_i=(1+\eps^2)p_i$ for $i\in S$, and $p'_i=p_i$ otherwise. 

Consider any buyer type $v$, and assume that the buyer purchases item $i$ under $p$, and item $j$ under $p'$ ($j$ may not exist, and in this case $v_j=0$). Now we show that the buyer's payment does not decrease under $p'$ compared to $p$. If $p_i>p_j$, since the buyer switches to purchase $j$ under $p'$, $i\in S$. Since the buyer has a higher utility purchasing $i$ than $j$ under $p$, we have $v_i-p_i\geq v_j-p_j$, thus $v_i-v_j\geq p_i-p_j\geq\frac{\eps^2}{1+\eps^2}p_i$ as $p_i$ and $p_j$ are both powers of $1+\eps^2$. Since the buyer has a higher utility purchasing $j$ than $i$ under $p'$, we have $v_j-p'_j\geq v_i-p'_i=v_i-(1+\eps^2)p_i$, thus $v_i-v_j\leq (1+\eps^2)p_i$. Therefore $v_i\in V'$ by letting $y=v_i$ and $y'=v_j$ in the definition of $V'$, which contradicts the assumption that $i\in S$. Thus $p_i\leq p_j$, then $p'_j\geq p_j\geq p_i$, which means the payment of the buyer does not decrease under $p'$.

Consider the process that changes the item pricing from $p$ to $p'$. In this process, the revenue of the mechanism does not decrease. By repeating such a process we can finally get an item pricing with prices all in $V'$, while keeping the optimality.

\end{proof}

\end{document}